\algnewcommand{\IIf}[1]{\State\algorithmicif\ #1\ \algorithmicthen}
\algnewcommand{\EndIIf}{\unskip\ \algorithmicend\ \algorithmicif}
\theoremstyle{plain}
\newtheorem{theorem}{Theorem}
\newtheorem{lemma}{Lemma}
\newtheorem{claim}{Claim}
\newtheorem{remark}{Remark}
\newtheorem{corollary}[lemma]{Corollary}
\newtheorem{definition}{Definition}
\theoremstyle{remark}
\newtheorem{case}{-- Case}
\begin{document}

\title{Rational Uniform Consensus with General Omission Failures}

\author{\IEEEauthorblockN{Yansong Zhang}
\IEEEauthorblockA{Beijing Jiaotong University \\
20120169@bjtu.edu.cn}
\and
\IEEEauthorblockN{Bo Shen}
\IEEEauthorblockA{Beijing Jiaotong University \\
bshen@bjtu.edu.cn}
\and
\IEEEauthorblockN{Yingsi Zhao}
\IEEEauthorblockA{Beijing Jiaotong University \\
yszhao@bjtu.edu.cn}
}

\maketitle

\begin{abstract}
Generally, system failures, such as crash failures, Byzantine failures and so on, are considered as common reasons for the inconsistencies of distributed consensus and have been extensively studied. In fact, strategic manipulations by rational agents do not be ignored for reaching consensus in distributed system. In this paper, we extend the game-theoretic analysis of consensus and design an algorithm of rational uniform consensus with general omission failures under the assumption that processes are controlled by rational agents and prefer consensus. Different from crashing one, agent with omission failures may crash, or omit to send or receive messages when it should, which leads to difficulty of detecting faulty agents. By combining the possible failures of agents at the both ends of a link, we convert omission failure model into link state model to make faulty detection possible. Through analyzing message passing mechanism in the distributed system with $n$ agents, among which $t$ agents may commit omission failures, we provide the upper bound on message passing time for reaching consensus on a state among nonfaulty agents, and message chain mechanism for validating messages. And then we prove our rational uniform consensus is a Nash equilibrium when $n>2t+1$, and failure patterns and initial preferences are $blind$ (an assumption of randomness). Thus agents could have no motivation to deviate the consensus. Our research strengthens the reliability of consensus with omission failures from the perspective of game theory.
\end{abstract}

\begin{IEEEkeywords}
Consensus; Game Theory; Omission Failures; Message Passing; Distributed Computing
\end{IEEEkeywords}

\section{Introduction}\label{sec:Introduction}
How to reach consensus despite failures is a fundamental problem in distributed computing. In consensus, each process proposes an initial value and then executes a unique consensus algorithm independently. Eventually all processes need to agree on a same decision chosen from the set of initial values even if there may be some system failures, such as crash failures, omission failures and Byzantine failures \cite{1990book}. In crash model, processes can get into failure state by stopping executing the remaining protocol. In omission model, processes can get into failure state by omitting to send or receive messages. And in Byzantine model, processes can fail by exhibiting arbitrary behavior. Extensive studies have been conducted on fault-tolerant consensus. 

Moreover, two kinds of consensus problems are usually distinguished. One is non-uniform version (usually called ``consensus" directly) where no two nonfaulty processes decide differently. The other is uniform version (called ``uniform consensus") where no two processes (whether correct or not) decide on different values. We believe that consensus protocols cannot simply replace uniform consensus protocols, because the condition of non-uniform consensus is inadequate for many applications \cite{1994report}. From \cite{2004JOA}, uniform consensus is harder than consensus because one additional round is needed to decide. And uniform consensus is meaningless with byzantine failures.

Recently, there is an increasing interest on distributed game theory, in which processes are selfish called rational agents. Combining distributed computing with algorithmic game theory is an interesting research area enriching the theory of fault-tolerant distributed computing. In this framework, agents may deviate from protocols in order to increase their own profits according to utility functions. In \cite{2012arxiv}, this kind of deviation is referred to as $strategic$ $manipulation$ of distributed protocol. This research is necessary in some practical scenarios, in which each process has selfish incentives. Clearly, the goal of distributed computing in the context of game theory is to design algorithms for reaching Nash equilibrium, in which all agents have no incentive to deviate from the algorithms. Perhaps, this framework has been investigated and formalized for the first time in the context of secret sharing and multiparty computation \cite{2004STOC,2006CRYPTO,2010TCC,2011PODC}. More recently, some fundamental tasks in distributed computing such as leader election and consensus, have been studied from the perspective of game theory \cite{2018PODC,2019ACMTrans,2021CRYPTO,2016PODC,2017IPDPS,2020AAMAS,2020OPODIS,2021Blockchain,2021DISC}.

Following this new line of research, we combine fault-tolerant consensus with rational agents and study the rational uniform consensus problem in synchronous round-based system, where every agent has its own preference on consensus decisions. Thus an algorithm of rational uniform consensus needs to be constructed. And for each agent, its utility is not less with following the consensus algorithm than with deviating from the algorithm. That achieves a Nash equilibrium. It is easy to see that standard consensus algorithms cannot reach equilibrium and they can be easily manipulated by even a single rational agent. Several researches on rational consensus have been conducted \cite{2012arxiv,2014PODC,2016PODC,2017IPDPS,2012ICALP,2020AAMAS,2020OPODIS,2021Blockchain}, but none of them consider the uniform property. And most studies on rational consensus only support that there are crash failures or no system failures. We argue that omission failures, which are more subtle and complicated than crashing one, cannot be ignored for reaching uniform consensus. In this paper, we pay attention to a distributed system with $n$ agents, among which $t$ agents may experience omission failures. In this setting, we extend the game-theoretic analysis of consensus. Specifically, our contributions in this paper include:
\begin{itemize}
    \item We utilize a punishment mechanism to convert omission failure model into link state model, which makes faulty detection more direct. In the link state model, faulty links never recover whether or not omission failures recover. Therefore, it can provide an idea to simplify the problem of faulty recovery in distributed computing.
    \item An almost complete mechanism analysis is given for message passing in the distributed system with general omission failures. Then, we provide the upper bound $x+1$ on message passing time for reaching consensus on a link state. The upper bound determines the round complexity of our algorithm. Next, a message chain mechanism is introduced for validating messages.
    \item An algorithm of rational uniform consensus with agent omission failures is presented for any $n>2t+1$. We give a complete formal proof of correctness of our algorithm. The proof shows that our consensus is a Nash equilibrium.
\end{itemize}

\subsection{Related Work}
From the view point of modeling methods about agents, the research framework for distributed game theory in the literature may be divided into three categories. In the first category, all of the agents in distributed system are controlled by rational agents preferring consensus and some of them may randomly fail by system failures. Bei et al. \cite{2012arxiv} study distributed consensus tolerating both unexpected crash failures and strategic manipulations by rational agents. They consider agents may fail by crashing. However, the correctness of their protocols needs a strong requirement that it must achieve agreement even if agents deviate. Afek et al. \cite{2014PODC} propose two basic rational building blocks for distributed system and present several fundamental distributed algorithms by using these building blocks. However, their protocol is not robust against even crash failures. Halpern and Vilaca \cite{2016PODC} present a rational fair consensus with rational agents and crash failures. They use failure pattern to describe the random crash failures of agents. Clementi et al. \cite{2017IPDPS} study the problem of rational consensus with crash failures in the synchronous gossip communication model. The protocols of Halpern et al. and Clementi et al. do not tolerate omission failure but we think the consideration to it is necessary. Harel et al. \cite{2020OPODIS} study the equilibria of consensus resilient to coalitions of $n-1$ and $n-2$ agents. They give a separation between binary and multi valued consensus. However, they assume that there are no faulty agents. 

The second category is named rational adversary. Groce et al. \cite{2012ICALP} study the problem of Byzantine agreement with a rational adversary. Rather than the first model, they assume that there are two kinds of processes: one is honest and follows the protocol without question; the other is a rational adversary and prefers disagreement. Amoussou-Guenou et al. \cite{2020AAMAS} study byzantine fault-tolerant consensus from the game theory point. They model processes as rational players or byzantine players and consensus as a committee coordination game. In \cite{2020AAMAS}, the byzantine players have utility functions and strategies, which can be regarded as rational adversaries similar to \cite{2012ICALP}. In our opinion, this framework limits the scope of the byzantine problem.  

Finally, the BAR framework (Byzantine, Altruistic, and Rational) proposed in \cite{2005SOSP}. In \cite{2021Blockchain}, Ranchal-Pedrosa and Gramoli study the gap between rational agreements that are robust against byzantine failures and rational agreements that are robust against crash failures. Their model consists of four different types of players: correct, rational, crash or byzantine, which is similar to the BAR model. They consider that rational players prefer to cause a disagreement than to satisfy agreement, which we view as a bit limited because only referring rational players as rational adversaries that is one of the byzantine problems. Moreover, no protocols are proposed in \cite{2021Blockchain}.

\subsection{Road Map}
The rest of the paper is organized as follows. Section 2 describes the model that we are working in. Section 3 presents the algorithm of rational uniform consensus for achieving Nash equilibrium and proves it correct. Section 4 concludes the paper.

\section{Model}\label{sec:Model}
We consider a synchronous system with $n$ agents and each of agent has a unique and commonly-known identify in $N=\{ 1,...,n \}$. Execution time is divided into a sequence of rounds. Each round is identified by the consecutive integer starting from 1. There are three successive phases in a round: a $send \ phase$ in which each agent sends messages to other agents in system, a $receive \ phase$ in which each agent receives messages that are sent by other agents in the send phase of the same round, and a $computation \ phase$ where each agent verifies and updates the value of local variables and executes local computation based on the messages sent and received in that round. We assume that every pair of agent $i$ and $j$ in $N$ is connected by a reliable communication link denoted $link_{ij}$. For an agent $i$, all links in the system can be divided into two types: direct link $link_{ij}$ where $j\in N$, and indirect link $link_{kp}$ where neither $k$ nor $p$ is equal to $i$.

\subsection{Failure Model}
Here the $general$ $omission$ $failures$ \cite{2005ICTAC}, which occur in agents and not in communication links \cite{2020phd}, are considered. That is an agent crashes or experiences either send omissions or receive omissions. And send omission means that the agent omits sending messages that it is supposed to send. Receive omission means that the agent omits receiving messages that it should receive. We define that agent omission failures never recover. We argue that our protocol also works even if failures could recover, but proving this seems more complicated. It is easy to see that crash failure can be converted to omission failure because if an agent crashes, it must omit to send and receive messages with all other agents after it has crashed. We assume that there are $t$ agents undergoing general omission failures. 

Based on the failure model, we divide the agents in the system into three types:
\begin{itemize}
    \item \textit{good agent}. Good agents do not have omission failures.
    \item \textit{risk agent}. Risk agents experience omission failures but we temporarily consider them as correct agents in our protocol. 
    \item \textit{faulty agent}. Faulty agents have omission failures with more than $t$ agents.
\end{itemize}
It is easy to see that $t$ is the sum of the number of risk and faulty agents. We treat good agents and risk agents as $nonfaulty \ agents$ uniformly. Since send omission and receive omission are symmetrical, for example, $i$ omits to send messages with $j$ and $j$ omits to receive messages with $i$ have the same phenomenon, we may not be able to directly detect the states of some agents with omission failures. Thus we call them risk agents and consider them as correct agents. For an agent that has omission failures with more than $t$ agents, it must have omission failures with at least one good agent and then clearly we can know it is a faulty agent. 

Due to the symmetry of agent omission failures, we model the agent omission failures as the link state problem by a punishment mechanism. Specifically, in our protocol, if an agent $i$ receives no messages from $j$ in a round, then in the following rounds, $i$ sends no messages to $j$ and does not receive messages from $j$ \cite{2004SPAA}. Thus whether send omission or receive omission will cause the link interruption. So in a round, we divide each link $link_{ij}$ into three types: $correct \ link$, where neither $i$ nor $j$ experiences omission failures in this round, $faulty \ link$, where at least one of $i$ and $j$ has omission failures with the other one in the round, and $unknown\mbox{-}state \ link$, where the state of $link_{ij}$ in this round is unknown to another agent $k$. It is easy to see that we can determine the type of an agent by the number of correct direct links of it, which is the fault detection method in our protocol. Similarly, faulty links never recover under this punishment mechanism whether or not omission failures recover.

\subsection{Consensus}
In the consensus problem, we assume that every agent $i$ has an initial preference $v_i$ in a fixed value set $V$. (We follow the concept of initial preference in \cite{2016PODC}.) We are interested in Uniform Consensus in this paper. A protocol solving uniform consensus must satisfy the following properties \cite{2004JOA}:
\begin{itemize}
    \item \textbf{Termination}: Every correct agent eventually decides.
    \item \textbf{Validity}: If an agent decides $v$, then $v$ was the initial value of some agent.
    \item \textbf{Uniform Agreement}: No two agents (whether correct or not) decide on different values.
\end{itemize}
To solve uniform consensus in presence of agent omission failures, we assume that $n>2t+1$ and $n\geq 3$.

In uniform consensus, an agent's final decision must be one of the following formalized types:
\begin{itemize}
    \item $\perp$: It means that there is no consensus. $\perp$ is a punishment for inconsistency.
    \item $\parallel$: It means no decision. Deciding $\parallel$ is not ambiguous with validity, as $\parallel$ cannot be proposed \cite{2004SPAA}. It does not affect the final consensus outcome.
    \item $v\in V$: It satisfies the property of validity, which must be the initial preference of some agent.
\end{itemize}

\subsection{Rational agent}
We consider that distributed processes act as rational agents according to the definition in game theory. Each agent $i$ has a utility function $u_i$. We assume that agents have solution preference \cite{2014PODC} and an agent's utility depends only on the consensus value achieved. Thus for each agent $i$, there are three values of $u_i$ based on the consensus value achieved: (\romannumeral1) $\beta_0$ is $i$'s utility if $i$'s initial preference $v_i$ is decided; (\romannumeral2) $\beta_1$ is $i$'s utility if there is a consensus value which is not equal to $i$'s initial preference; (\romannumeral3) $\beta_2$ is $i$'s utility if there is no consensus. It is easy to see that $\beta_0 > \beta_1 > \beta_2$, and our results can easily be extended to deal with independent utility function for each agent.

The $strategy$ of an agent $i$ is a local protocol $\sigma_i$ satisfying the system constrains. $i$ takes actions according to the protocol $\sigma_i$ in each round. That is $\sigma_i$ is a function from the set of messages received to actions. Each agent chooses the protocol in order to maximize its expected utility. Thus there are $n$ local protocols chosen by every agent, which is called $strategy \ profile \ \Vec{\sigma}$ in game theory. The equilibrium is a $strategy \ profile$, where each agent cannot increase its utility by deviating if the other agents fix their strategies. For each agent $i$, if the local protocol $\sigma_i$ is our consensus algorithm when reaching an equilibrium, then we say that $rational \ consensus$ is a $Nash \ equilibrium$. Formally, if a strategy profile (or consensus) $\Vec{\sigma}$ is a $Nash \ equilibrium$, then for all agents $i$ and all strategies $\sigma_i\overset{'}{}$ for $i$, it must have $u_i(\sigma_i\overset{'}{}, \Vec{\sigma}_{\mbox{-}i})\leq u_i(\Vec{\sigma})$.

\section{Rational Uniform Consensus with General Omission Failures}\label{sec:Protocol design}
\subsection{An Rational Uniform Consensus in synchronous systems}
In order to reach rational uniform consensus that can tolerate omission failures, our protocol adopts a simple idea from an early consensus protocol \cite{2004SPAA}: An agent does not send or receive any messages to those agents that did not send messages to it previously. Then we convert the omission failure model which cannot be detected into the link state model which can be detected by agents in each round. However, the presence of rational agents makes protocol more complicated. It requires protocol could prevent the manipulation of rational agents. Hence, the security of the algorithm needs to be improved from three aspects. The first is interacting the latest network link states and message sources in each round. The update process of the latest link states within each agent depends on complete message chains and we can obtain a unified decision round and decision set from message passing mechanism in omission failure environment. The second is using secret sharing for agents' initial preferences \cite{1979secret}. It encrypts the initial preferences so as to prevent an agent knowing the values of other agents in advance. The third is signing each message with a random number and marking faulty links by faulty random numbers \cite{2012arxiv}. This can improve the difficulty of a rational agent to do evil.

The protocol is described in Algorithm \ref{algorithm 1}. In more detail, we proceed as follows. 

Initially, each agent $i$ generates a random number $proposal_i$ which is used for consensus election later (line 2). Then $i$ computes two random 1-degree polynomials $q_i$ and $b_i$ with $q_i(0)=v_i$ and $b_i(0)=proposal_i$ respectively (line 3). They satisfy (2,n) threshold which means that an agent $j \neq i$ can restore $v_i$ or $proposal_i$ if it knows more than two pieces of $q_i$ or $b_i$. Then $i$ initialize set $lost_i$, $NS_i^0$, $HS_i$, $decision_i$ and set $consensus_i$ (line 4); we discuss these in more detail below. Then $i$ generates the faulty random number $X\mbox{-}random_i^1[k][l_{ij}]$ for each agent $k\neq i$ and each direct link $l_{ij}$ that is the abbreviation of $link_{ij}$ for the link between $i$ and $j$ (lines 5-7; $l_{ij}^r $ represents the $link_{ij}$ in round $r$). And the message random number $random_i^1$ for round 1 is randomly chosen from $\{0,...,n-1\}$ (line 8). For each link, $i$ generates $n-1$ faulty random numbers and then sends them to other agents respectively in round 1. So we can get that $X\mbox{-}random_i^1$ contains $(n-1)^2$ faulty random numbers in total. Then $i$ puts $X\mbox{-}random_i^1$ and $random_i^1$ into \Call{X-Random}{} and \Call{Random}{} respectively (lines 9 and 10), where \Call{X-Random}{} is a function storing all faulty random numbers known to $i$ and \Call{Random}{} stores all message random numbers. Agents can invoke these two functions to verify random numbers. Specifically, inputting the id, link and round to invoke \Call{X-Random}{}, and inputting id and round to invoke \Call{X-Random}{}.

There are $t+4$ rounds in total and each round has three phases. In phase 1 of round $r$, $1 \leq r \leq t+4$, $i$ only sends messages to each agent $j$ who doesn't belong to $lost_i$ that is a set of agents that have omission failures with $i$ detected by $i$ (line 15). If $1 \leq r \leq t+3$, $i$ sends $random_i^r$ and $NS_i^{r-1}$ to $j$. And if $1\leq r\leq t+2$, $i$ also sends $X\mbox{-}random_i^r[j]$ which contains $n-1$ random numbers that each of corresponds to a direct link respectively (lines 16 and 17). If $r=1$, $i$ also sends the piece of $q_i$, $q_i(j)$ and the piece of $b_i$, $b_i(j)$ to $j$ (line 16). If $r=t+3$, $i$ also sends all the secret shares $q_l(i)$ and $b_l(i)$ ($l \neq j$) that it has received from other agents (line 18). It is easy to see that the piece $q_l(i)$ and $b_l(i)$ must be in pairs. That is if $i$ restores $v_j$, then it can also restore $proposal_j$. Finally, if $r=t+4$, $i$ only sends $consensus_i$ to $j$ (line 19). For each agent $i$, $consensus_i$ is the set of all consensus values calculated and received by $i$. Hence if the algorithm is executed validly, $|consensus_i|$ must be equal to 1. 

In phase 2 of round $r$, $1 \leq r \leq t+4$, $i$ only receives messages from agents that are not in set $lost_i$ (line 22). And if there are no messages received from an agent $j$, $j \notin lost_i$, $i$ adds $j$ to $lost_i$ (line 29). Otherwise, $i$ stores the received information (lines 24-28). $\{ NS^{r-1} \}$ and $\{ random^r \}$ are the sets of all new link states $NS_j^{r-1}$ and message random numbers $random_j^r$ respectively received by $i$ from each agents $j \notin lost_i$ in round $r$ (line 26). Correspondingly, the elements in $\{ NS^{r-1} \}$ and $\{ random^r \}$ are one-to-one correspondence. Specially, if $|lost_i|>t$, this means $i$ knows that it becomes a faulty agent and then $i$ must decide $\parallel$ directly and no longer run in later rounds (line 31). And we say that $\parallel$ means agent $i$ does not decide in the end, which has no influence on the solution.

In phase 3 of round $r \leq t+3$, $i$ firstly uses $NS_i^{r-1}$ to update $NS_i^r$ which is useful for the update and verification of link states (line 35). Then $i$ invokes the function \Call{VerifyAndUpdate}{} to verify and update $NS_i^r$ and $HS_i$ by $\{ NS^{r-1} \}$ and $\{ random^r \}$ (line 36; See Algorithm \ref{algorithm 2} for details). $NS_i^r$ is the latest states known to $i$ of all links in the system in round $r$. $HS_i$ is the historical link states including $t+3$ rounds in total. If $r \leq t+2$, $i$ generates the message random number $random_i^{r+1}$ and the faulty random numbers $X\mbox{-}random_i^{r+1}$ for round $r+1$, which will be sent to other agents in round $r+1$, and then puts these random numbers into \Call{X-Random}{} and \Call{Random}{} respectively (lines 37-43). Then if $r=t+3$, $i$ last updates $HS_i$ by $NS_i^{t+3}$ (line 45). Specifically, if a link $l_{kp}$ is faulty in round $m$ in $NS_i^{t+3}$, then changing the state of $l_{kp}$ into fault from round $m$ to round $t+3$ in $HS_i$. This is the last time modifying $HS_i$. And following that, $i$ utilizes $HS_i$ to find the $decision \ round$ $m^*$ from round 1 to round $t+2$, which is the first $reliable \  round$ in $HS_i$ (lines 46-48). We follow the concept of $clean \ round$ in \cite{2016PODC}. The number of faulty agents does not increase in $clean \ round$ and the previous round of $clean \ round$ is $reliable \ round$. Specially, we say that $reliable \ round$ cannot be round 0, so that the first $reliable \ round$ is the previous round of the second $clean \ round$ if the first $clean \ round$ is round 1. In $HS_i^r$, if less than $n-t-1$ links to agent $j$ are correct, then $j$ is a faulty agent. Otherwise, $j$ is a nonfaulty agent. We define that it must remove the explicitly faulty agents when computing the state of $j$ in round $r$ by $HS_i$. Then $i$ computes the $decision \ set \ D$ that is the set of nonfaulty agents in $HS_i^{m\overset{*}{}}$ (line 49). And then $i$ uses all the secret shares it has received in round $t+3$ to try to restore the initial preference and proposal of each agent $j \in D$ (lines 50-53). If $i$ can reconstruct the values of all agents $\in D$, then $i$ must know all proposals of these agents. Then $i$ computes the consensus proposal (lines 54-63). Firstly, $i$ sorts all the proposals in $D$ and finds the set $C$ of agents with the second max proposal value (line 55). Then if there is only one agent in $C$, $i$ puts the initial preference of this agent into $consensus_i$ (line 56). If there are more than one agents in $C$, that is more than one agents have the same second max proposal value $pr$ in $D$, (we say that the probability is extremely low,) then $i$ uses the $pr$ to mod the agent number of $C$ and gets $S$ (lines 60-62). In this case, $i$ finally puts $v_j$ into its $consensus_i$ where $j$ is the ($S+1$)st highest id in $C$ (line 63). Finally if there is no agent in $C$, then the proposals must be the same for all agents in $D$ and the second max proposal value does not exist. Thus $i$ uses the same proposal to mod the agent number of $decision \ set \ D$ and gets $S$ (line 58). Similarly, $i$ elects the initial preference of the agent with the ($S+1$)st highest id in $D$ (line 59). But if $i$ can not restore all the values of agents $\in D$, it does nothing and keeps $consensus_i$ as the empty set. Finally, if $r=t+4$, if $consensus_i$ contains only one value, then $i$ makes a decision (lines 65-67). Otherwise, an inconsistency is detected and $i$ decides $\perp$ (line 69).

\begin{algorithm*}
    \caption{agent $i$'s uniform consensus protocol with initial value $v_i$ $(n>2t+1)$}\label{algorithm 1}
    \begin{algorithmic}[1] 
        \Function {Consensus}{$v_i$}
            \State $proposal_i\gets$ a random number
            \State $q_i, b_i\gets$ random 1 degree polynomials with $q_i(0)=v_i$ and $b_i(0)=proposal_i$
            \Comment{(2,n) threshold secret sharing}
            \State $lost_i\gets\emptyset;NS_i^0\gets \emptyset; HS_i\gets \emptyset; decision_i\gets none; consensus_i\gets \emptyset$
            \ForAll{$ j \neq i$}
                \ForAll{$k \neq i$}
                    \State $X\mbox{-}random_i^1[k][l_{ij}]\gets$ a random bit
                \EndFor
            \EndFor
            \State $random_i^1\gets$ random value in $\{0,...,n-1\}$
            \State puts $X\mbox{-}random_i^1$ into \Call{X-Random}{}
            \State puts $random_i^1$ into \Call{Random}{}
                
            \State
            \For{round $r = 1 \to t+4$}
                \State $\{ NS^{r-1} \} \gets \emptyset; \{ random^r \} \gets \emptyset$
                \State Phase 1: send phase
                \ForAll{$j \notin lost_i$ \textbf{and} $j \neq i$}
                    \If{$r=1$} Send $\langle q_i(j), b_i(j), NS_i^{r-1}, X\mbox{-}random_i^r[j], random_i^r\rangle$ to j \EndIf
                    \If{$2\leq r \leq t+2$} Send $\langle NS_i^{r-1},X\mbox{-}random_i^r[j],random_i^r\rangle$ to j \EndIf
                    \If{$r=t+3$} Send $\langle NS_i^{r-1},(q_l(i))_{l\neq j},(b_l(i))_{l\neq j}, random_i^r\rangle$ to j \EndIf
                    \If{$r=t+4$} Send $consensus_i$ to j \EndIf
                \EndFor
                    
                \State
                \State Phase 2: receive phase
                \ForAll{$j \notin lost_i$ \textbf{and} $j \neq i$}
                    \If{new message has received from j}
                        \State puts $X\mbox{-}random_j^r[i]$ into \Call{X-Random}{} \Comment{round 1 to t+2}
                        \State puts $random_j^r$ into \Call{Random}{} \Comment{round 1 to t+3}
                        \State $\{ NS^{r-1} \} \gets \{ NS^{r-1} \} \cup NS_j^{r-1};\{ random^r \}\gets \{ random^r \} \cup random_j^r$ \Comment{round 1 to t+3}
                        \State save $(q_l(j))_{l \neq i}$ and $(b_l(j))_{l \neq i}$ \Comment{round t+3}
                        \State$consensus_i\gets consensus_i \cup consensus_j$ \Comment{round t+4}
                    \Else $\ lost_i\gets lost_i\cup {j}$
                    \EndIf
                \EndFor
                \If{$|lost_i|>t$} \Comment{Faulty agent}
                    \State$\textbf{Decide}(\parallel)$ \Comment{No decision}
                \EndIf
                    
                \State
                \State Phase 3: computation phase
                \If{$r \leq t+3$}
                    \State $NS_i^r\gets NS_i^{r-1}$
                    \State $NS_i^r,HS_i \gets$\Call{VerifyAndUpdate}{$r$, $i$, $NS^r_i$, $HS_i$, $\{NS^{r-1}\}$, $\{random^r\}$}
                    \Comment{Punishment if an inconsistency is detected}
                    \If{$r \leq t+2$} 
                        \State $random_i^{r+1}\gets$ random value in $\{0,...,n-1\}$
                        \ForAll{$ j \neq i$}
                            \ForAll{$k \neq i$}
                                \State $X\mbox{-}random_i^{r+1}[k][l_{ij}]\gets$ a random bit
                            \EndFor
                        \EndFor
                         \State puts $X\mbox{-}random_i^{r+1}$ into \Call{X-Random}{}
                        \State puts $random_i^{r+1}$ into \Call{Random}{}
                    \ElsIf{$r = t+3$}
                        \State\Call{LastUpdate}{$HS_i, NS_i^{t+3}$}\Comment{Update the $Msg\mbox{-}X$ of $HS_i$}
                        \For{$m = 1 \to t+2$}
                            \If{$m$ is the first reliable round in $HS_i$}
                                \State $m\overset{*}{}\gets m;break$ \Comment{decision round}
                            \EndIf
                        \EndFor
                        \State $D\gets$ the set of nonfaulty agents in $HS_i^{m\overset{*}{}}$ \Comment{decision set}
                        \ForAll{$j \in D$}
                            \If{the number of $q_j(l) < 2$ \textbf{and} $j \neq i$} $break$ \EndIf
                            \State $q_j,b_j\gets$ restored by $q_j(l)$ and $b_j(l)$ received
                            \State $v_j,proposal_j\gets q_j(0), b_j(0)$
                        \EndFor
    \algstore{algorithm1}
    \end{algorithmic}
\end{algorithm*}    
\begin{algorithm*}
    \begin{algorithmic}[1]
    \algrestore{algorithm1}
                        \If{all values are known in $D$}
                            \State $C\gets$ the set of agents with the second max proposal in $D$
                            \If{$|C|=1$} $consensus_i\gets consensus_i \cup v_{C.element}$
                            \ElsIf{$|C|=0$}
                                \State $S\gets the \ same \ proposal$ mod $|D|$ \Comment{the proposals are the same for all agents in $D$}
                                \State$consensus_i\gets consensus_i \cup v_j$, where $j$ is the $(S+1)$st highest id in $D$
                            \Else \Comment{$|C|>1$}
                                \State$pr\gets$ the second max proposal in $D$
                                \State$S\gets pr$ mod $|C|$ 
                                \State$consensus_i\gets consensus_i \cup v_j$, where $j$ is the $(S+1)$st highest id in $C$ 
                            \EndIf
                        \EndIf
                    \EndIf
                \Else \Comment{round t+4}
                    \If{$|consensus_i|=1$}
                        \State $decision_i\gets consensus_i.element$
                        \State$\textbf{Decide}(decision_i)$
                    \Else \Comment{Inconsistency}
                        \State$\textbf{Decide}(\perp)$
                    \EndIf
                \EndIf
            \EndFor
        \EndFunction
    \end{algorithmic}
\end{algorithm*}

The detailed implement of the verification and update protocol in phase 3 is given in Algorithm \ref{algorithm 2}.

Basically, for each link $l_{kp}$, $NS_i^r[l_{kp}]$ is a tuple containing two tuples, $t_A$ and $t_B$. The first tuple $t_A$ represents the state of $l_{kp}$, which contains three types: $Msg\mbox{-}R$, $Msg\mbox{-}X$ and $Msg\mbox{-}O$, representing correct link, faulty link and unknown-state link, respectively. The format of type $Msg\mbox{-}R$ is $(m, k, random_p^m)$, where $m$ is the round of the link state, $k$ is the agent reporting the link state, and $random_p^m$ is the message random number sent by $p$ in round $m$. It is easy to see that if $k$ reports the state of $l_{kp}$ is correct ($Msg\mbox{-}R$) in round $m$, then it must know $random_p^m$. The format of type $Msg\mbox{-}X$ is $(X, m, k, X\mbox{-}random_k^m[l_{kp}])$, where $m$ and $k$ are the same as those in $Msg\mbox{-}R$, $X$ is an identifier, and $X\mbox{-}random_k^m[l_{kp}]$ is the sorted set of faulty random numbers on $l_{kp}$ which is generated by $k$ in round $m-1$. Specifically, the set is sorted by the ids of agents from small to large. The format of type $Msg\mbox{-}O$ is $\emptyset$ because the state of $l_{kp}$ is unknown for $i$. The second tuple $t_B$ describes the source of $t_A$ and has the form $(j, m)$, where $j$ is the agent sending the link state $t_A$ to $i$, and $m$ is the round when $j$ sends it to $i$. Specially, for direct link, when $i$ first updates the state in round $r$, $t_B$ is $\phi$ meaning that the message source is $i$ itself. $HS_i^r[l_{kp}]$ denotes the state of link $l_{kp}^r$ known to $i$ and $r$ could range from 1 to $t+3$. It contains at most two different tuples because the state of $l_{kp}$ in round $r$ can only be detected and reported by $k$ and $p$. The form of each tuple is similar to $t_A$ but the round in $t_A$ must be $r$. And the agents in the two tuples must be different and be $k$ and $p$ respectively. Specially, if the types of two tuples are $Msg\mbox{-}R$ and $Msg\mbox{-}X$, then we think the state of $l_{kp}^r$ is faulty. And if $Msg\mbox{-}O$ and $Msg\mbox{-}O$, then $l_{kp}^r$ is a unknown-state link which is regarded as a correct link when computing $decision \ round$ and $decision \ set$ in round $t+3$.

The pseudocode in Algorithm \ref{algorithm 2} is explained in detail as follows.

$i$ initially generates $T$ from $\{ NS^{r-1} \}$, which represents $i$ receives the messages sent by agent $j \in T$ in round $r$ (line 2). And $i$ also computes set $S$ that is equal to $lost_i$ (line 3). 

Firstly, in phase 1, $i$ updates the states of direct links in round $r$. For each agent $j \in T$, $i$ has received the messages from it in round $r$ so that $i$ updates the $t_A$ of $NS_i^r[l_{ij}]$ to Type $Msg\mbox{-}R$ (line 7). And $i$ must be able to obtain the message random number $random_j^r$ from $\{random^r\}$. Then $i$ invokes \Call{AppendHS}{} to append the state $(r, i, random_j^r)$ into $HS_i^r[l_{ij}]$ (line 8). We stipulate \Call{AppendHS}{} must guarantee that the inputting state satisfies the properties of $HS$ which we have discussed above. For example, each link $l_{kp}$ has at most two different tuples in each round, and they come from different agents, $k$ and $p$. If a state violated the properties of $HS$, \Call{AppendHS}{} would decide $\perp$ and terminate the protocol early. Then for each agent $j \in S$, it has omission failures detected by $i$ because $i$ does not receive a message from it. If the type of link $l_{ij}$ is already faulty in $NS_i^r$ inherited $NS_i^{r-1}$, $i$ does nothing because for a link, $NS$ only records the earliest round when the link has failures (lines 10-11). Otherwise, $i$ updates the $t_A$ to Type $Msg\mbox{-}X$ and appends the new state into $HS_i$ (lines 13-14).

Then in phase 2, $i$ utilizes message chain mechanism to verify the correctness of messages $\{ NS^{r-1} \}$ received in receive phase (lines 17-18). 

\noindent(\textbf{Message Chain Mechanism}) For each agent $j$, its message $NS_j^m$ has the following properties:

Suppose $S_j^m$ is the set of agents that disconnected from $j$ in or before round $m$ and $T_j^m$ is the set of agents that are still connected to $j$ in round $m$. Suppose $X(r)$ represents the $Msg\mbox{-}X$ tuple where the round number is equal to $r$.

\begin{claim}\label{claim 1}
    For link $l_{kp}$ in $NS_j^m$, where $k = j$ and $p \in S_j^m \cup T_j^m$, its state in round $m$ must be known and the number of correct links in $\{ l_{jp} \}$ is greater than or equal to $n-t-1$.
\end{claim}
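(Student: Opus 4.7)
The plan is to establish both halves of the claim by a direct inspection of how direct-link states are maintained in phase~1 of Algorithm~\ref{algorithm 2}, combined with the liveness condition from lines 30--32 of Algorithm~\ref{algorithm 1}. The key observation is that an agent $j$ can only produce a state $NS_j^m$ by executing that phase in round $m$, and that phase writes a non-$Msg\mbox{-}O$ entry for every direct link of $j$.

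First I would show that for every $p \in S_j^m \cup T_j^m$, the entry $NS_j^m[l_{jp}]$ has type $Msg\mbox{-}R$ or $Msg\mbox{-}X$, by splitting into two cases. If $p \in T_j^m$, then $p \notin lost_j$ at the end of round $m$'s receive phase, so $j$ received a fresh message from $p$ and extracted $random_p^m$; Algorithm~\ref{algorithm 2} then sets the $t_A$ of $NS_j^m[l_{jp}]$ to $(m,j,random_p^m)$, i.e.\ type $Msg\mbox{-}R$, and appends a $Msg\mbox{-}R$ tuple into $HS_j^m[l_{jp}]$. If instead $p \in S_j^m$, then $p$ was added to $lost_j$ in some round $m' \le m$; at that round either the inherited entry from $NS_j^{m'-1}$ was already faulty (lines 10--11 of Algorithm~\ref{algorithm 2}) or a fresh $Msg\mbox{-}X$ tuple was installed (lines 13--14). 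Since $NS_j^r$ inherits $NS_j^{r-1}$ at the start of every computation phase (line 35 of Algorithm~\ref{algorithm 1}) and the pseudocode never erases a faulty tuple, the entry stays $Msg\mbox{-}X$ through round $m$. Either way the state of $l_{jp}^m$ is known in $NS_j^m$, which settles the first half.

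For the second half, observe that $j$ actually built and broadcast $NS_j^m$, which by lines 30--32 of Algorithm~\ref{algorithm 1} forces $|lost_j| \le t$ at the end of round $m$; otherwise $j$ would have decided $\parallel$ and halted. Consequently
\[
|T_j^m| \;=\; (n-1) - |lost_j| \;\ge\; n-t-1,
\]
and by the first case above each of these $|T_j^m|$ direct links carries a $Msg\mbox{-}R$ tuple in $NS_j^m$, yielding the claimed lower bound on correct links in $\{l_{jp}\}$.

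The only step that needs some care is the inheritance argument in the second case: one must check that a $Msg\mbox{-}X$ entry written in round $m'$ is not subsequently overwritten, which follows from the guard in lines 10--11 of Algorithm~\ref{algorithm 2} and from the fact that $j$ never again attempts to communicate with $p$ (line 15 of Algorithm~\ref{algorithm 1} restricts sends to $j \notin lost_i$, and similarly for receives). Beyond this bookkeeping the claim reduces to a counting fact backed by the continuation condition $|lost_j|\le t$, so no substantial obstacle is anticipated.
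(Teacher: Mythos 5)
Your argument is correct and matches the paper's (largely implicit) justification: the paper states this claim without a dedicated proof, but the counting half is exactly Lemma~\ref{lemma 2}, whose proof is the same observation that a surviving agent has $|lost_j|\leq t$ (else it decides $\parallel$ and stops), and the ``state is known'' half follows, as you show, from phase~1 of Algorithm~\ref{algorithm 2} writing a $Msg\mbox{-}R$ or $Msg\mbox{-}X$ tuple for every direct link and never reverting a faulty one. Your inheritance check for the $S_j^m$ case is a welcome bit of extra care; no gap.
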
  

\begin{claim} \label{claim 2}
    For link $l_{kp}$ in $NS_j^m$, where $k = j$ and $p \in S_j^m \cup T_j^m$, its state in round $m+1$ and later must be unknown.
\end{claim}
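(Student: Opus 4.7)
The plan is to analyze the places where $NS_j$ is written in Algorithms~\ref{algorithm 1} and \ref{algorithm 2} and to argue that agent $j$ cannot install any non-unknown tuple about a direct link $l_{jp}$ whose round-tag exceeds $m$, so any query about round $m+1$ or later necessarily returns $Msg\mbox{-}O$. First I would fix $j$, a peer $p\in S_j^m\cup T_j^m$, and some round $r>m$, and reduce the goal to showing that, at the moment $NS_j^m$ is frozen (end of phase 3 of round $m$), the link $l_{jp}$ has no tuple in $NS_j^m$ whose round-tag equals $r$.

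Next I would enumerate the write sites. The initialization sets $NS_j^0\gets\emptyset$ (Algorithm~\ref{algorithm 1}, line 4), so every direct-link slot is $Msg\mbox{-}O$ at the outset. The only subsequent modifications of $NS_j^{\cdot}[l_{jp}]$ come from phase 3 of Algorithm~\ref{algorithm 2}: line 7 installs $t_A=Msg\mbox{-}R$ after a message from $p$ is received in round $r'$, using $random_p^{r'}\in\{random^{r'}\}$; line 13 installs $t_A=Msg\mbox{-}X$ after a direct omission on $l_{jp}$ is detected, using $X\mbox{-}random_j^{r'}[l_{jp}]$. In both cases the round tag that is written is exactly the current round $r'$, because the $Msg\mbox{-}R$ format is $(r', j, random_p^{r'})$ and the $Msg\mbox{-}X$ format is $(X, r', j, X\mbox{-}random_j^{r'}[l_{jp}])$.

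A short induction on the round index then establishes the invariant: after $j$ completes round $r'$, every direct-link slot in $NS_j^{r'}$ carrying a round tag strictly greater than $r'$ is still its initial $Msg\mbox{-}O$. The base case $r'=0$ is the initialization; the inductive step is immediate from the enumeration above, since round $r'$'s phase 3 only writes tuples tagged with $r'$ itself, and no other code modifies $NS_j$. Specializing the invariant to $r'=m$ and the tag $r>m$ yields exactly the statement of Claim~\ref{claim 2}.

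The main obstacle, and really the only delicate point, is matching the informal phrasing ``its state in round $m+1$ and later'' with the data structure described earlier, where $NS_i^r[l_{kp}]$ is described as a single pair $(t_A,t_B)$. I would resolve this by reading $NS_j^m$ as the collection, over rounds $1,\dots,m$, of round-tagged tuples written by $j$ so far, and interpreting ``the state of $l_{kp}$ in round $r$ as recorded by $NS_j^m$'' as the tuple whose round-tag equals $r$, defaulting to $Msg\mbox{-}O$ when no such tuple exists; this convention is already consistent with the way Claim~\ref{claim 1} queries round $m$. Once the convention is fixed the argument becomes essentially tautological, and the claim plugs directly into the message-chain verification used in lines 17--18 of Algorithm~\ref{algorithm 2}.
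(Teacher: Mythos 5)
First, a point of comparison: the paper itself offers no proof of Claim~\ref{claim 2}. The seven claims of the Message Chain Mechanism are stated as the specification that \Call{VerifyMsgChain}{} enforces, not as proved propositions, so your argument is a reconstruction rather than an alternative to an existing proof. Your overall strategy --- fix the convention that ``the state of $l_{jp}$ in round $r$ according to $NS_j^m$'' means a recorded tuple whose round tag equals $r$ (defaulting to $Msg\mbox{-}O$ when none exists), then show by induction over rounds that an honest $j$ never installs a direct-link tuple tagged with a round later than the current one --- is the right one. Your handling of the interpretive ambiguity is also the only reading consistent with Claim~\ref{claim 1}: under the ``faults persist forever'' semantics a $Msg\mbox{-}X$ entry would make the round-$(m+1)$ state \emph{known} and falsify the claim, so ``known'' here has to mean ``explicitly carried by a round-tagged tuple.''

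There is, however, a concrete gap in the inductive step: your enumeration of write sites is incomplete. You assert that the only code modifying $NS_j^{\cdot}[l_{jp}]$ is the direct-link update at lines 7 and 13 of Algorithm~\ref{algorithm 2} (which indeed tag with the current round; note these sit in Phase~1 of Algorithm~\ref{algorithm 2}, invoked from phase~3 of the round, not in Phase~3 of Algorithm~\ref{algorithm 2} as you say). But Case~\ref{case d} (lines 45--51 of Algorithm~\ref{algorithm 2}) also overwrites a direct-link entry: when both the local and received states of $l_{ij}$ are $Msg\mbox{-}X$ and the received round tag $rr$ equals $lr-1$, agent $i$ replaces $NS_i^r[l_{ij}]$ with the received state. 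Since $rr = lr - 1 < lr \leq r$, this extra write still respects your invariant, so the conclusion survives; but as written the premise ``no other code modifies $NS_j$'' is false and the induction does not close until this branch is accounted for. With that case added (and the observation that Case~\ref{case j} and the $Msg\mbox{-}O$-local branch never fire for a direct link after Phase~1 has run), the argument is sound.
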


\begin{claim}
    For link $l_{kp}$ in $NS_j^m$, where $k \in S_j^m$ and $p \in S_j^m$, its state in round $m-1$ and later must be unknown.
\end{claim}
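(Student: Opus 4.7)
The plan is to prove Claim~3 by a message-propagation-distance argument. The central observation is that the state of link $l_{kp}$ in round $r$ can only be observed locally by its two endpoints $k$ and $p$, and only at the end of round $r$'s computation phase; no third agent can know this state any earlier. Consequently, every message that ultimately carries this state to any other agent must originate at $k$ or $p$ in round $r+1$'s send phase at the earliest, and every subsequent hop of a forwarding chain consumes one full round.

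First I would handle direct delivery to $j$. Let $m_k \le m$ be the round in which $k$ was first added to $lost_j$; by the punishment mechanism, $j$ and $k$ exchange no further messages from round $m_k$ onwards, so the last direct message $j$ ever receives from $k$ is in round $m_k - 1$ and carries $NS_k^{m_k - 2}$. This payload can describe $l_{kp}$ only for rounds $\le m_k - 2 \le m - 2$. The symmetric argument applied to $p$ gives the same bound. Hence no round-$(m-1)$-or-later state of $l_{kp}$ reaches $j$ directly from either endpoint.

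Next I would handle indirect delivery through some intermediary $q \ne k, p$. If the round-$r$ state of $l_{kp}$ reaches $j$ via a chain of $L$ hops whose first link originates at $k$ or $p$, then the information first leaves the endpoint in round $r+1$ and arrives at $j$ at the end of round $r + L$; for it to appear in $NS_j^m$ we therefore need $r + L \le m$. Because direct delivery ($L = 1$) is already ruled out by the previous step, every surviving chain has $L \ge 2$, which forces $r \le m - 2$. Equivalently, any intermediary payload $NS_q^{m-1}$ that $j$ receives in round $m$ can itself describe $l_{kp}$ only for rounds $\le m - 2$, because $q$ must hear about $l_{kp}$ from $k$ or $p$ through some earlier chain, and the freshest such report available to $q$ by the end of round $m-1$ pertains to round $m - 2$ or earlier.

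Combining the two cases, $NS_j^m[l_{kp}]$ contains no $Msg\mbox{-}R$ or $Msg\mbox{-}X$ entry whose round field is $\ge m - 1$, so the stored state for such rounds is necessarily of type $Msg\mbox{-}O$, i.e., unknown. The main obstacle I anticipate is careful round-bookkeeping: cleanly separating the three events of each round, information \emph{produced} at the end of an observation round, \emph{sent} in the next round's send phase, and \emph{incorporated} in that round's computation phase, and verifying that the one-round-per-hop cost cannot be cheated because each round's send phase strictly precedes its receive phase, so an intermediary can never both receive fresh news from an endpoint and forward it to $j$ within a single round.
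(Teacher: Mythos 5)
The paper never actually proves this claim: Claims 1--7 of the message chain mechanism are stated as properties that \textsc{VerifyMsgChain} enforces, with no supporting argument, so your proof supplies a justification the paper omits. Your hop-counting argument is correct and is surely the intended one. The two load-bearing facts --- that the round-$r$ state of $l_{kp}$ is first observable only by $k$ and $p$ at the end of round $r$ and hence first transmissible in round $r+1$, and that an intermediary $q$'s round-$m$ payload is $NS_q^{m-1}$, frozen at the end of round $m-1$ and therefore at least one hop staler than an endpoint's own report --- are exactly what the protocol's phase structure provides, and your split between direct delivery (impossible in round $m$ itself because $k,p\in S_j^m$, so the last direct payloads describe $l_{kp}$ only up to round $m_k-2\leq m-2$) and indirect delivery (costing at least two rounds, forcing $r\leq m-2$) is the right decomposition; it also explains cleanly why Claim 5 gains exactly one round of freshness when $k\in T_j^m$. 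One small caution on your concluding sentence: ``its state in round $m-1$ and later must be unknown'' should be read as a constraint on the round field of the $t_A$ tuple stored in $NS_j^m[l_{kp}]$ (no tuple timestamped $\geq m-1$), not as a statement that $j$ has no opinion about round $m-1$ at all --- a link already recorded as $Msg\mbox{-}X$ at some round $r'\leq m-2$ is known faulty in round $m-1$ too, since faulty links never recover and $NS$ records only the earliest failure round. Your derivation establishes precisely the former, which is what the verification rule needs, so this is an interpretive footnote rather than a gap.
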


\begin{claim}
    For link $l_{kp}$ in $NS_j^m$, where $k \in S_j^m$ and $p \in S_j^m$, if the state of $NS_j^m[l_{jk}]$ is $X(m_1)$ and the state of $NS_j^m[l_{jp}]$ is $X(m_2)$, suppose $m\geq m_1\geq m_2$, then the state of $l_{kp}^{m_1-2}$ must be known.
\end{claim}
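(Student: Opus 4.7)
My plan is to exhibit a concrete round in which $j$ acquires the state of $l_{kp}^{m_1-2}$ directly from $k$, namely round $m_1 - 1$. The observation is that $k$ itself is an endpoint of $l_{kp}$ and therefore always knows that link's state, and the tag $X(m_1)$ on $NS_j^m[l_{jk}]$ tells us precisely the last round in which $j$ could still hear from $k$.

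First I would argue that $l_{jk}$ was correct in every round $r < m_1$. This uses the punishment mechanism in the model section together with the detection rule in Algorithm 2 phase 1: $j$ stamps $l_{jk}$ with $X(m_1)$ only if $m_1$ is the earliest round in which $k$'s message failed to arrive, and every prior round a message from $k$ did arrive. Consequently in round $m_1 - 1$, agent $j$ received $k$'s full message, which by lines 16--17 of Algorithm 1 carries the payload $NS_k^{m_1-2}$.

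Second I would apply Claim 1 to agent $k$ at round $m_1 - 2$. Since $p \neq k$ and $p$ is a system participant, $p \in S_k^{m_1-2} \cup T_k^{m_1-2}$, so Claim 1 forces $NS_k^{m_1-2}[l_{kp}]$ to be of known type (either $Msg\mbox{-}R$ with round stamp $m_1-2$, or $Msg\mbox{-}X$ with round stamp $m_1-2$). When $j$ runs \textsc{VerifyAndUpdate} in its round-$(m_1-1)$ computation phase, it ingests $NS_k^{m_1-2}$ through the $\{NS^{r-1}\}$ argument and therefore installs a known entry for $l_{kp}^{m_1-2}$ in $NS_j^{m_1-1}$. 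Because the algorithm only ever refines unknown states into known ones (and never the reverse, outside of the faulty-propagation step which does not downgrade known-state entries), this record persists in $NS_j^r$ for every $r \geq m_1-1$, so in particular $NS_j^m$ has a known state for $l_{kp}^{m_1-2}$.

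The main obstacle will be the verification step: I must make sure that under honest execution the message $NS_k^{m_1-2}$ actually passes \textsc{VerifyAndUpdate} and is not rejected by the message-chain mechanism. I would handle this by noting that (i) $k \notin lost_j$ prior to round $m_1$, so $j$ does receive the message in phase 2; (ii) the companion $random_k^{m_1-1}$ is present and matches, and the faulty random numbers $X\mbox{-}random_k^{m_1-1}$ for earlier links were previously distributed, so consistency checks succeed; and (iii) the hypothesis $m_1 \geq m_2$ guarantees that the entry reported by $k$ about $l_{kp}^{m_1-2}$ is at least as late as anything $j$ could have learned from $p$ (whose last message to $j$ was in round $m_2 - 1 \leq m_1 - 1$), so no earlier entry in $NS_j^m$ contradicts or preempts the one installed via $k$.
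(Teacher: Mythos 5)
Your argument is sound, and it is worth noting that the paper itself states this claim (like the other six message-chain properties) without any proof, so there is nothing to compare against; your reconstruction is the natural one. The essential chain is exactly right: $X(m_1)$ on $l_{jk}$ records the \emph{earliest} failure round, so $l_{jk}$ was correct through round $m_1-1$ and $j$ received $NS_k^{m_1-2}$ in that round; Claim~1 applied to $k$ guarantees $NS_k^{m_1-2}[l_{kp}]$ is known; and the merge rules of \textsc{VerifyAndUpdate} never downgrade an entry to one that leaves round $m_1-2$ undetermined (an $Msg\mbox{-}R$ entry is only ever replaced by a later $Msg\mbox{-}R$ or by $Msg\mbox{-}X$, and $Msg\mbox{-}X$ never reverts), so a known state for $l_{kp}^{m_1-2}$ persists into $NS_j^m$. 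The role you assign to $m_1 \geq m_2$ is slightly misstated, though not wrongly used: the hypothesis is really just the symmetry-breaking that tells you $k$ is the endpoint that stayed connected to $j$ longer, hence the endpoint worth routing through (routing through $p$ would only yield $l_{kp}^{m_2-2}$); it is not needed to protect the installed entry from being overwritten, since the monotonicity of the merge rules already does that. Two small points to tidy up: the known entry delivered by $k$ need not carry round stamp $m_1-2$ when it is of type $Msg\mbox{-}X$ (it carries the earliest failure round, which may be smaller, but either way determines the state at round $m_1-2$); and the claim is vacuous or degenerate when $m_1 \leq 2$, which neither you nor the paper addresses but which costs nothing to exclude.
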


\begin{claim}
    For link $l_{kp}$ in $NS_j^m$, where $k \in T_j^m$ and $p \in S_j^m \cup T_j^m$, its state in round $m-1$ must be known and, in round $m$ and later must be unknown.
\end{claim}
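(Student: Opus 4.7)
My plan is to combine two structural observations about the algorithm: any state record for the indirect link $l_{kp}$ can only originate at one of the endpoints $k$ or $p$, and forwarding such a record across a single message always consumes at least one full round. These two observations yield simultaneously an upper bound and a lower bound on the round index that can appear in $NS_j^m[l_{kp}]$, and I will match them both at $m-1$.

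For the ``round $m$ and later is unknown'' half I would argue by provenance. A Msg-R record of $l_{kp}$ tagged with round $r$ can only be produced by the endpoint that actually receives the round-$r$ message, while a Msg-X record of $l_{kp}$ tagged with round $r$ can only be produced by the endpoint that detects the omission at the end of round $r$'s receive phase; in either case the originator ($k$ or $p$) first possesses the record at the end of round $r$ and therefore cannot transmit it before round $r+1$. Since $NS_j^m$ is assembled from messages delivered in rounds $1,\dots,m$, every $l_{kp}$-entry it contains must satisfy $r+1\le m$, i.e.\ $r\le m-1$, so no record for round $m$ or later survives.

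For the ``round $m-1$ is known'' half I would invoke Claim~1 via the hypothesis $k\in T_j^m$. Because $k$ is still connected to $j$ in round $m$, agent $j$ actually receives $NS_k^{m-1}$ in the receive phase of round $m$; moreover $l_{kp}$ is a direct link of $k$ and $p$ lies in $S_k^{m-1}\cup T_k^{m-1}=N\setminus\{k\}$, so Claim~1 applied at $k$ in round $m-1$ forces $NS_k^{m-1}[l_{kp}]$ to carry a known state tagged with round $m-1$. In phase~3 of round $m$ the routine \textsc{VerifyAndUpdate} merges this entry into $NS_j^m[l_{kp}]$, and together with the upper bound established above it must remain tagged with round $m-1$.

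The step I expect to be most delicate is justifying cleanly that \textsc{VerifyAndUpdate} always promotes the freshly received round-$(m-1)$ entry above any strictly older record already sitting in $NS_j^{m-1}[l_{kp}]$, and that the $X$-random tagging together with the message-chain check actually rejects any forged entry that a strategic agent might try to insert with a round index in the forbidden range. The first part reduces to the observation that $NS$ stores only the latest per-link state and the round indices form a total order; the second part relies on the fact that a Msg-X record for round $r$ must present the faulty random numbers $X\mbox{-}random_k^r[l_{kp}]$ generated in round $r-1$, which no agent can produce ahead of time, so a would-be round-$m$ forgery has no corresponding random bits for $j$ to cross-check and is eliminated.
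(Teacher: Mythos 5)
The paper never actually proves this claim: Claims 1--7 of the Message Chain Mechanism are asserted as properties that \textsc{VerifyMsgChain} checks, with no supporting argument, so there is nothing in the source to compare you against line by line. Judged on its own, your argument is the natural and essentially correct one, and its two prongs are exactly what the claim needs. The provenance/latency bound (a record of $l_{kp}$'s round-$r$ state first exists at an endpoint at the end of round $r$, hence reaches $j\notin\{k,p\}$ no earlier than round $r+1$, hence $r\le m-1$) correctly disposes of the ``round $m$ and later'' half, and using $k\in T_j^m$ to import $NS_k^{m-1}[l_{kp}]$ via Claim~1 and the merge rules of \textsc{VerifyAndUpdate} correctly gives the ``round $m-1$ is known'' half; your identification of the merge step as the delicate point, and its resolution via the ``keep the latest record'' discipline, matches how the algorithm's Cases 6--9 behave.

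Two small imprecisions are worth fixing. First, Claim~1 does not force $NS_k^{m-1}[l_{kp}]$ to be \emph{tagged} with round $m-1$: if $l_{kp}$ failed in some round $r_0<m-1$, the tuple is $Msg\mbox{-}X$ tagged $r_0$, and the round-$(m-1)$ state is ``known'' only via irreversibility of link faults (the paper's Lemma~\ref{lemma 1.1}-style reasoning), not via a fresh tag. Correspondingly, the ``round $m$ and later must be unknown'' clause can only mean ``no detection record tagged $m$ or later appears,'' since for an already-failed link the later states are inferable; your upper bound proves exactly that reading, but you should state it that way rather than claim the entry ``must remain tagged with round $m-1$.'' Second, your closing paragraph about rejecting forged entries concerns soundness of the verification, which is beyond the claim itself -- the claim is a completeness property of honest executions -- so it is welcome but not needed.
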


\begin{claim}
    For link $l_{kp}$ in $NS_j^m$, where $k \in T_j^m$ and $p \in S_j^m$, if the state of $l_{kp}^{m-1}$ is $Msg\mbox{-}R$, then the state of link $l_{pt}$ in round $m-2$, where $t \in S_j^m$, must be known and its state in round $m-1$ and later must be unknown.
\end{claim}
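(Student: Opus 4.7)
The plan is to trace the message chain that produced the Msg-R entry for $l_{kp}^{m-1}$ in $NS_j^m$, then split into the ``known'' and ``unknown'' halves using a round-counting argument.

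First, I would unpack what the premise forces. Since $k \in T_j^m$, agent $k$ successfully delivered its round-$m$ message to $j$, and that message carried $NS_k^{m-1}$. For $NS_k^{m-1}[l_{kp}]$ to be legitimately Msg-R with the random share $random_p^{m-1}$, $k$ must itself have received $p$'s round-$(m{-}1)$ message, which in turn delivered $NS_p^{m-2}$ (and $random_p^{m-1}$) to $k$. So by the chain $p \to k \to j$ the full contents of $NS_p^{m-2}$ are reflected in $NS_j^m$; this is the object I will exploit for the rest of the proof.

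For the ``known'' half, fix any $t \in S_j^m$ different from $p$ and inspect $NS_p^{m-2}[l_{pt}]$. Since $p$ maintains state for each of its direct links, this entry is either Msg-R at round $m-2$ (if $p$ and $t$ were still exchanging messages in that round) or Msg-X at some earlier round $r_0 < m-2$ (if they had already disconnected). In the latter case the punishment mechanism freezes $l_{pt}$ as faulty from round $r_0$ onward, which in particular determines $l_{pt}^{m-2}$. Either way, $j$ (via $k$) can derive $l_{pt}^{m-2}$, so it is known. For the ``unknown'' half, any report of $l_{pt}^{m-1}$ must originate from $NS_p^{m-1}$ or $NS_t^{m-1}$. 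Direct delivery from $p$ or $t$ is blocked because $p,t \in S_j^m$ makes both $l_{jp}$ and $l_{jt}$ faulty by round $m$, and the punishment mechanism makes $j$ refuse any further message from either endpoint. An indirect relay through an intermediary $q \in T_j^m$ would need $q$ to absorb $NS_p^{m-1}$ or $NS_t^{m-1}$ in round $m$ during its receive phase, fold it into $NS_q^m$ in the computation phase, and only forward it in round $m+1$, which is strictly after $NS_j^m$ is finalized. Hence $l_{pt}^{m-1}$, and by the same argument all subsequent rounds, remain Msg-O in $NS_j^m$.

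The main obstacle is aligning the informal word ``known'' with the paper's formal state model. When $NS_p^{m-2}[l_{pt}]$ is Msg-X at an earlier round $r_0$, I must verify that the \textsc{VerifyAndUpdate} routine in Algorithm~\ref{algorithm 2} treats the implied faulty state at round $m-2$ as known in the sense of Claim~6 rather than as Msg-O; once that convention is pinned down, the argument in the previous paragraph closes cleanly. The ``unknown'' half is essentially a round-counting exercise, and the only delicate step there is being exhaustive about the possible propagation paths, in particular ruling out that some shortcut through a $q \in T_j^m$ could have acquired the relevant round-$(m-1)$ information earlier than the send phase of round $m+1$.
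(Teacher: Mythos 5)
The paper never actually proves this statement: it is the sixth of seven unproved ``claims'' listed under the Message Chain Mechanism, asserted as invariants of an honestly generated $NS_j^m$ and enforced operationally by \textsc{VerifyMsgChain}. So there is no author proof to compare against, and your argument has to be judged on its own. On those terms it is essentially right, and it reconstructs what is surely the intended justification: the premise $Type(NS_j^m[l_{kp}^{m-1}])=Msg\mbox{-}R$ together with $k\in T_j^m$, $p\in S_j^m$ forces the two-hop chain $p\to k\to j$ over rounds $m-1$ and $m$ (a round-$(m-1)$ observation can only reach $j$ by round $m$ through a direct sender in round $m$, and $p$ cannot be that sender, so the reporter must be $k$, who therefore heard from $p$ in round $m-1$ and thereby absorbed $NS_p^{m-2}$); this delivers $p$'s round-$(m-2)$ direct-link states to $j$, while the synchronous send/receive/compute structure makes it impossible for any round-$(m-1)$ observation of a link between two members of $S_j^m$ to arrive before round $m+1$. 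You would make the first half airtight by (i) explicitly ruling out that the $Msg\mbox{-}R$ tuple was reported by $p$ rather than $k$, and (ii) checking that the indirect-link merge rules in Algorithm~\ref{algorithm 2} (Cases 6--9 and 11) never replace an informative state of $l_{pt}$ with a less informative one, which is what ``the full contents of $NS_p^{m-2}$ are reflected in $NS_j^m$'' silently assumes.

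The one substantive wrinkle is the semantic tension you flag for the ``known'' half but do not carry over to the ``unknown'' half. If $NS_p^{m-2}[l_{pt}]$ is $Msg\mbox{-}X$ at some round $r_0\leq m-2$, then because faulty links never recover the state of $l_{pt}^{m-1}$ is logically determined (faulty), yet the claim demands it be ``unknown.'' The two halves are only simultaneously satisfiable if ``unknown in round $m-1$'' is read as ``$NS_j^m$ contains no tuple for $l_{pt}$ whose round number is $\geq m-1$,'' i.e.\ as a constraint on reported round numbers rather than on derivable facts. That is the reading your timing argument actually establishes, and it is the one a verifier can check, but you should state it explicitly; as written, your ``known'' half uses the derivability reading and your ``unknown'' half uses the round-number reading. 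This is a defect inherited from the paper's informal statement rather than an error in your reasoning.
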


\begin{claim}
    For link $l_{kp}$ in $NS_j^m$, where $k \in T_j^m$ and $p \in S_j^m$, if the state of $l_{kp}^{m-1}$ is equal to $X(m\overset{'}{})$, where $m\overset{'}{}\leq m-1$, then the state of $l_{pt}^{m\overset{'}{}-2}$, $t \in S_j^m$, must be known.
\end{claim}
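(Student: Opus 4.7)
The plan is to trace $j$'s knowledge of the state of $l_{pt}$ at round $m'-2$ backwards through the message chain $p \to k \to j$. First, because $NS_j^m[l_{kp}]=X(m')$ and faulty links never recover under the punishment mechanism, the last round in which $k$ and $p$ successfully exchanged messages is round $m'-1$. In that round's send phase, $p$ delivered $NS_p^{m'-2}$ to $k$ as part of its round-$(m'-1)$ packet.

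Next I would apply Claim 1 to $p$'s own view at round $m'-2$. Since $S_p^{m'-2}\cup T_p^{m'-2}=N\setminus\{p\}$ exhausts all peers of $p$, every direct link $l_{pq}$ has a known state at round $m'-2$ recorded in $NS_p^{m'-2}$. Specializing to $q=t$ (allowed because $t\in S_j^m$ and $t\neq p$), the entry $NS_p^{m'-2}[l_{pt}]$ is of type Msg-R or Msg-X with round index $m'-2$, hence known.

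Then I would propagate this segment along the surviving channel from $k$ to $j$. Since $k\in T_j^m$, the link $l_{jk}$ is correct in every round up to and including $m$. Thus in round $m'-1$'s computation phase, $k$ folds $NS_p^{m'-2}$ into its own $NS_k^{m'-1}$ via \textsc{VerifyAndUpdate}; in every subsequent round $r$ with $m'\le r\le m$, $k$ transmits $NS_k^{r-1}$ to $j$, each carrying forward the record about $l_{pt}^{m'-2}$. After $j$'s \textsc{VerifyAndUpdate}, $NS_j^m$ (and by extension $HS_j$) contains a known state of $l_{pt}^{m'-2}$.

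The main obstacle is certifying that the round-$(m'-2)$ entry survives this propagation, rather than being silently stripped, overwritten by a bogus later tuple, or forged by a rational relay. The verification rides on the fact that each Msg-R tuple is witnessed by the message random number of the observed endpoint and each Msg-X tuple is witnessed by the faulty random number generated one round in advance; any attempt by $k$ to insert a false record or to suppress the genuine one collides with the values stored in \textsc{Random}() and \textsc{X-Random}(), so the inconsistency would be detected by \textsc{AppendHS} and $k$ would be expelled from $T_j^m$, contradicting the hypothesis. A secondary edge case is $m'\le 2$, where $m'-2\le 0$ refers to the trivially known initial state $NS^0=\emptyset$ and the statement holds vacuously.
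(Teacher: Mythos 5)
The paper never actually proves Claim 7 (or any of Claims 1--7): they are stated as invariants of honest executions that \textsc{VerifyMsgChain} enforces, with no accompanying argument. So there is no ``paper proof'' to match; what can be judged is whether your blind argument is a sound justification of the invariant, and it essentially is. Your chain of custody $p \to k \to j$ is the right skeleton: the recorded failure round $m'$ of $l_{kp}$ implies the link was correct through round $m'-1$, so $k$ received $NS_p^{m'-2}$ in round $m'-1$; Claim~1 applied to $p$ gives that $NS_p^{m'-2}[l_{pt}]$ is known for every $t\neq p$; and since $l_{jk}$ never fails up to round $m$, the merge rules (which only ever replace a known state by a more informative one --- a later \emph{Msg-R} round or an \emph{Msg-X}) carry that knowledge into $NS_j^m$. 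Two small points. First, the step ``the last successful exchange is round $m'-1$'' needs slightly more than ``faulty links never recover'': you should note that $NS$ stores the \emph{earliest} detected failure round and that $k$'s own detection, had it occurred before $m'$, would have reached $j$ over the surviving link $l_{jk}$ and overwritten the entry, so $NS_j^m[l_{kp}]=X(m')$ forces $k$ to have received from $p$ in round $m'-1$. Second, your paragraph about forged or suppressed tuples is beside the point for this statement: the claim is a property of honest executions used as a verification predicate, and deviations are dealt with separately in the Nash-equilibrium analysis; including it does no harm but proves nothing needed here. Your treatment of the $m'\le 2$ edge case is a reasonable reading that the paper itself leaves unaddressed.
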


Explicitly, we say that the function \Call{VerifyMsgChain}{} is to verify whether a message $NS_j^{r-1} \in \{ NS^{r-1} \}$ violates the above claims. If all not, then continuing to the phase 3. Otherwise, it decides $\perp$ and terminates the protocol early.

Finally, in phase 3, $i$ updates $NS_i^r$ and $HS_i$ by the states in $\{ NS^{r-1} \}$. For a link, $i$ compares its state in $NS_i^r$ with the state in $NS_j^{r-1} \in \{ NS^{r-1} \}$, so as to implement update according to different cases. 

\begin{claim}\label{first round number verification}
    For each link $l_{kp}$ ($k,p\in N$), the agent of $t_A$ must be $k$ or $p$ in all $NS[l_{kp}]$ and $HS[l_{kp}]$.
\end{claim}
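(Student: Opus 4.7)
The plan is to prove Claim~\ref{first round number verification} by structural induction on the sequence of writes performed on $NS_i$ and $HS_i$ during the execution of Algorithm~\ref{algorithm 1} at an arbitrary agent $i$. The base case is immediate: at line~4 of Algorithm~\ref{algorithm 1} we set $NS_i^0\gets\emptyset$ and $HS_i\gets\emptyset$, so the invariant holds vacuously. For the inductive step I will enumerate every line at which some entry $NS_i^r[l_{kp}]$ or $HS_i[l_{kp}]$ is written, and verify that the reporter field stored inside $t_A$ lies in $\{k,p\}$.

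The direct-link writes are the easy case. In Phase~1 of Algorithm~\ref{algorithm 2}, whenever $i$ updates $NS_i^r[l_{ij}]$ the tuple $t_A$ is built locally by $i$: either $(r,i,random_j^r)$ for a $Msg\mbox{-}R$ entry, or $(X,r,i,X\mbox{-}random_i^r[l_{ij}])$ for a $Msg\mbox{-}X$ entry. In both forms the reporter is $i$, which is one of the two endpoints of $l_{ij}$. The matching \Call{AppendHS}{} call carries the same tuple into $HS_i[l_{ij}]$, so the invariant is preserved.

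The interesting case is the indirect-link writes in Phase~3 of Algorithm~\ref{algorithm 2}, where $i$ merges each received $NS_j^{r-1}\in\{NS^{r-1}\}$ into $NS_i^r$ and $HS_i$. Here $i$ does not synthesize $t_A$ afresh; it copies the tuple out of $NS_j^{r-1}[l_{kp}]$ and only rewrites the source tuple $t_B$ to $(j,r)$. By the inductive hypothesis applied to the prior execution at $j$, the reporter inside that tuple is already an endpoint of $l_{kp}$, and the merge therefore preserves the invariant.

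The main obstacle is that a rational or faulty agent $j$ may attempt to hand $i$ an $NS_j^{r-1}$ whose tuple cites a reporter that is not an endpoint of $l_{kp}$. Two safeguards rule this out. First, \Call{VerifyMsgChain}{} can include the syntactic test ``the reporter named inside $t_A$ belongs to $\{k,p\}$'' on every incoming entry; a violation triggers a $\perp$ decision and terminates the protocol before the illegal tuple is installed. Second, even if this syntactic test were omitted, the authenticator embedded in $t_A$ --- the value $random_p^m$ in a $Msg\mbox{-}R$ tuple, and the full sorted vector $X\mbox{-}random_k^m[l_{kp}]$ in a $Msg\mbox{-}X$ tuple --- is a secret generated by the endpoint and must cross-check against \Call{Random}{} and \Call{X-Random}{}, so a non-endpoint cannot fabricate a tuple that survives the check. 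Either way, only valid tuples ever enter $NS_i^r$ or $HS_i$, which closes the induction.
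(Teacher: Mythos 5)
The paper never writes out a proof of this claim: it is justified definitionally (``the state of $l_{kp}$ in round $r$ can only be detected and reported by $k$ and $p$'', plus the fixed formats of $Msg\mbox{-}R$ and $Msg\mbox{-}X$ tuples) and then enforced operationally, since a received $NS_j^{r-1}[l_{kp}]$ violating the claim is flagged by the round-number/message-format verification and punished with $\perp$. Your induction on the sequence of writes is a legitimate and more explicit route to the same invariant, and its two halves line up with the paper's two implicit uses of the claim: the local Phase~1 writes (reporter is $i$ itself, an endpoint) establish it for honest executions, and the verification step keeps illegal tuples out when a sender deviates. Two points to tighten. First, your case split is slightly miscalibrated: Phase~3 of Algorithm~2 also rewrites \emph{direct}-link entries (Cases~1--5, in particular Case~4 installs $recvState$ into $NS_i^r[l_{ij}]$), so the ``copy from $NS_j^{r-1}$ plus inductive hypothesis at $j$'' argument must cover those writes as well, and the induction must be a single global induction on rounds across all agents rather than per-agent, since the hypothesis at $i$ in round $r$ leans on the hypothesis at $j$ in round $r-1$. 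Second, your fallback safeguard overreaches: $random_p^m$ and the $X\mbox{-}random$ vectors are distributed to many agents, so a non-endpoint can perfectly well possess the authenticator; what actually blocks a tuple naming a non-endpoint reporter is the explicit syntactic endpoint check (the paper's ``round number verification'' against this very claim), not secrecy of the random numbers. With the first safeguard alone your argument closes, so these are repairs of emphasis rather than of substance.
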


\begin{case}\label{case a}
    For the direct link $l_{ij}$ of $i$, if the $t_A$ of $NS_i^r[l_{ij}]$ is $(r,i,random)$ and the $t_A$ of $NS_j^{r-1}[l_{ij}]$ is $(r\overset{'}{},k,random\overset{'}{})$, then $i$ only needs to append the new state into $HS_i$ (lines 39-40).
\end{case}

\begin{claim}
    In Case \ref{case a}, there must be $r\overset{'}{} < r$.
\end{claim}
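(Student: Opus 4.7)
The plan is to read off the claim from the timing semantics of the superscripted variable $NS_j^{r-1}$. By the algorithm's naming convention, $NS_j^{r-1}$ is $j$'s snapshot of the network as of the end of round $r-1$, and it is only transmitted to $i$ in the send phase of round $r$. Thus every state tuple contained in $NS_j^{r-1}$ must have been written by $j$ during a computation phase of some round $\leq r-1$. For the direct link $l_{ij}$, Claim \ref{first round number verification} forces the witness $k$ in the tuple $(r',k,random')$ to be $i$ or $j$, and in either case this $Msg\mbox{-}R$ tuple records an actual successful delivery on $l_{ij}$ during round $r'$ itself; an endpoint could only have logged that event in round $r'$, and $j$ could only have propagated it into $NS_j^{r-1}$ in a later round but still no later than $r-1$. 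Chaining these two timing constraints gives $r' \leq r-1 < r$.

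The remaining concern is that a rational $j$ might fabricate a tuple with timestamp $r' \geq r$, so I would close the argument by invoking the \Call{VerifyMsgChain}{} step. A $Msg\mbox{-}R$ tuple carries the message random $random'$, which must agree with $random_p^{r'}$ for $p \in \{i,j\}$ as recorded in \Call{Random}{}. When $i$ runs the verify-and-update step in round $r$, no entry in \Call{Random}{} bears a round index strictly greater than $r$, and even an entry with round index $r$ cannot legitimately certify, via a message originating from $j$'s round-$(r-1)$ register, that $l_{ij}$ was correct in some round $r' \geq r$. Hence any forged $r' \geq r$ would be discarded before Case \ref{case a} is entered. The main obstacle I anticipate is being explicit about precisely which message-random entries \Call{Random}{} may contain at the moment of verification, i.e.\ bookkeeping the monotone growth of \Call{Random}{} across rounds. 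Once that is pinned down, the timing contradiction immediately yields $r' < r$, and the claim follows.
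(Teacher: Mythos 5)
The paper states this claim with no proof at all --- it is one of the invariants enforced by ``round number verification'' --- so there is nothing to match your argument against; what matters is whether your reconstruction is sound, and its core is. The timing argument in your first paragraph is exactly the right justification: $NS_j^{r-1}$ is frozen at the end of $j$'s computation phase of round $r-1$ and shipped in the send phase of round $r$, so any $Msg\mbox{-}R$ tuple for the direct link $l_{ij}$ it contains records a reception event that an endpoint ($k=i$ or $k=j$, by Claim~\ref{first round number verification}) logged in round $r'\leq r-1$, whence $r'<r$. That settles the claim for protocol-following agents, which is the reading the paper intends (the claim is the specification that the verifier checks, and deviations that violate it are punished with $\perp$). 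Your second paragraph, on why a deviating $j$ cannot forge $r'\geq r$, is strictly speaking outside the claim and belongs to the Nash-equilibrium analysis; it also has a loose end you should be aware of: for $r'=r$ with witness $k=i$, the tuple would carry $random_j^{r}$, which $j$ itself generated and therefore knows, so the random-number check alone does not block that forgery --- one must additionally appeal to the message-source verification (Claim~\ref{message source verifaication 1}) and to the fact that $j$ sends $NS_j^{r-1}$ before the receive phase of round $r$. Since the deterministic claim only concerns honest messages, this does not invalidate your proof, but the forgery discussion as written overclaims slightly.
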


\begin{case}\label{case b}
    For the direct link $l_{ij}$ of $i$, if the $t_A$ of $NS_i^r[l_{ij}]$ is $(r,i,random)$ and the $t_A$ of $NS_j^{r-1}[l_{ij}]$ is $(X,r\overset{'}{},k,X\mbox{-}random_k^{r\overset{'}{}}[l_{ij}])$, then $i$ detects an inconsistency and decides $\perp$ (lines 41-42).
\end{case}

\begin{case}\label{case c}
    For the direct link $l_{ij}$ of $i$, if the $t_A$ of $NS_i^r[l_{ij}]$ is $(X,r\overset{'}{},k,X\mbox{-}random_k^{r\overset{'}{}}[l_{ij}])$ and the $t_A$ of $NS_j^{r-1}[l_{ij}]$ is $(r\overset{''}{},p,random)$, then $i$ only needs to append the new state into $HS_i$ (lines 43-44).
\end{case}

\begin{claim}
    In Case \ref{case c}, there must be $r\overset{''}{} \leq r\overset{'}{}$.
\end{claim}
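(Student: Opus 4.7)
I would prove $r'' \leq r'$ by contradiction, leaning on the punishment mechanism (faulty links never recover) together with the unforgeability of the message random number carried inside a Msg-R tuple. First, I would unpack the hypothesis on the $i$-side state. The tuple $t_A = (X, r', k, X\text{-}random_k^{r'}[l_{ij}])$ in $NS_i^r[l_{ij}]$ asserts that $l_{ij}$ became faulty in round $r'$, witnessed by agent $k$; by Claim~\ref{first round number verification}, $k \in \{i,j\}$. Either way, the protocol's punishment rule is triggered at round $r'$: from that point on, neither endpoint sends on, nor receives on, $l_{ij}$. Consequently, no successful transmission across $l_{ij}$ occurs in any round strictly greater than $r'$.

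Next, I would analyse the $j$-side state. The tuple $(r'', p, random)$ in $NS_j^{r-1}[l_{ij}]$ is of type Msg-R, so by the format stipulated in Algorithm~\ref{algorithm 2} we have $random = random_p^{r''}$, the message random number that $p$ broadcast on $l_{ij}$ in round $r''$. Again by Claim~\ref{first round number verification}, $p \in \{i,j\}$. For this Msg-R tuple to have been produced legitimately, $p$ must have successfully exchanged messages with the other endpoint on $l_{ij}$ during round $r''$, since only then is $random_p^{r''}$ revealed to that endpoint and subsequently stored via the \textsc{Random} oracle of the receiver.

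Finally, assume for contradiction that $r'' > r'$. By the conclusion of the first paragraph, no successful communication took place on $l_{ij}$ in round $r''$, so neither $i$ nor $j$ could have legitimately learned $random_p^{r''}$; hence the Msg-R tuple could not have been generated honestly. Moreover, the \textsc{VerifyMsgChain} step in Algorithm~\ref{algorithm 2} would have cross-checked the asserted $random_p^{r''}$ against the \textsc{Random} store and rejected any fabrication before Case~\ref{case c} is entered, so we obtain the required contradiction and conclude $r'' \leq r'$. The main obstacle I expect is closing the forgery gap cleanly: namely, arguing that no rational agent (or coalition along the message chain) could have guessed $random_p^{r''}$ prior to $p$ actually transmitting it, so that the only way for the Msg-R tuple to survive verification is for a genuine round-$r''$ transmission to have occurred, which the punishment mechanism precludes whenever $r'' > r'$.
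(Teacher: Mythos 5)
The paper asserts this claim without giving any proof, so there is no argument of the authors' to compare yours against; what you have written is a plausible reconstruction of the justification the authors leave implicit. Your route --- irreversibility of link faults under the punishment mechanism (once the reporter $k\in\{i,j\}$ of the $Msg\mbox{-}X$ tuple fails to receive on $l_{ij}$ in round $r'$, it neither sends nor receives on that link afterwards, so no genuine exchange can occur in any round $>r'$), combined with the random-number verification that blocks a fabricated $Msg\mbox{-}R$ tuple --- is exactly the mechanism the paper relies on elsewhere (e.g.\ in the Type~5 and Type~6 deviation analysis), and it correctly leaves room for $r''=r'$, since the round-$r'$ omission may be unidirectional and the opposite endpoint can still legitimately record $Msg\mbox{-}R$ for round $r'$ itself. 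One small inaccuracy: in the paper's $Msg\mbox{-}R$ format $(m,k,random_p^m)$ the random number is the one sent by the endpoint \emph{opposite} the reporter, so in the tuple $(r'',p,random)$ the value $random$ is the round-$r''$ message random number of the other endpoint of $l_{ij}$, which $p$ can only know by actually receiving in round $r''$ --- not, as you write, a number that $p$ itself broadcast. This does not change the structure of your argument (either reading forces a genuine round-$r''$ reception on $l_{ij}$), but it is worth correcting, since the whole point of the verification is that the reporter must exhibit knowledge it could only have obtained by receiving.
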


\begin{case}\label{case d}
    For the direct link $l_{ij}$ of $i$, if the $t_A$ of $NS_i^r[l_{ij}]$ is $(X,r\overset{'}{},i,X\mbox{-}random_i^{r\overset{'}{}}[l_{ij}])$ and the $t_A$ of $NS_j^{r-1}[l_{ij}]$ is $(X,r\overset{''}{},k,X\mbox{-}random_k^{r\overset{''}{}}[l_{ij}])$, then $i$ only needs to append the new state into $HS_i$ when $r\overset{''}{}=r\overset{'}{}$ or $r\overset{''}{}=r\overset{'}{}+1$, and $i$ must update $NS_i^r$ and $HS_i$ when $r\overset{''}{}=r\overset{'}{}-1$ (lines 45-51). When updating $NS_i^r$, the $t_B$ of $NS_i^r[l_{ij}]$ must be $(j,r)$ because the new state is obtained from $NS_j^{r-1}$ and updated in round $r$.
\end{case}

\begin{claim}
    In Case \ref{case d}, if $k=i$, the $t_A$ of $NS_j^{r-1}[l_{ij}]$ must be the same as the $t_A$ of $NS_i^r[l_{ij}]$, and if $k=j$, it must have $0\leq |r\overset{'}{}-r\overset{''}{}| \leq 1$.
\end{claim}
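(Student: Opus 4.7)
The plan is to split the claim along the two possibilities for $k$ admitted by Claim~\ref{first round number verification}, which restricts the agent field of any $t_A$ on the direct link $l_{ij}$ to be $i$ or $j$. For each sub-case I combine a structural argument about where a valid $X$-tuple can originate with the dynamic constraints imposed by the punishment mechanism that converts omission failures into link states.

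For the sub-case $k=i$, the key observation is that the full set $X\mbox{-}random_i^{r''}[l_{ij}]$ sitting inside the tuple of $NS_j^{r-1}[l_{ij}]$ can only be assembled by $i$ itself. In every round $r$ where the link is still alive, $i$ distributes to each partner $p$ only the single piece $X\mbox{-}random_i^r[p][l_{ij}]$; hence each non-$i$ agent holds at most one of the $n-1$ entries of the set $X\mbox{-}random_i^r[l_{ij}]$, and any coalition of up to $t$ deviating or faulty agents covers at most $t<n-1$ of them, so the independence of the random bits rules out forgery with overwhelming probability. The complete set can therefore enter circulation only when $i$ itself embeds it in its own $NS$, which by inspection of the protocol happens exactly once---in the single round $r'$ when $i$ first adds $j$ to $lost_i$, producing the tuple $(X,r',i,X\mbox{-}random_i^{r'}[l_{ij}])$. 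Any tuple of the form $(X,r'',i,\ldots)$ that survives the \textsc{X-Random} check inside \textsc{VerifyMsgChain} must therefore coincide entry-by-entry with that single self-generated tuple, giving $r''=r'$ and identity of both $t_A$ fields.

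For the sub-case $k=j$, I plan to use the symmetric propagation rule of the punishment mechanism: once an endpoint adds the other to its $lost$ set, it neither sends to nor receives from that endpoint in every subsequent round. From $i$'s $t_A$ we know $i$ added $j$ to $lost_i$ in round $r'$, so $i$ sends no message to $j$ in round $r'+1$; consequently $j$ receives no message from $i$ in that round, which forces $j$ to add $i$ to $lost_j$ by round $r'+1$ at the latest, hence $r''\le r'+1$. The symmetric argument starting from $j$'s $t_A$ yields $r'\le r''+1$, whence $|r'-r''|\le 1$. I then check that any larger gap is locally detectable by $i$: a claim $r''<r'-1$ contradicts the messages (and their $random_j^m$ tags stored in \textsc{Random}) that $i$ actually received from $j$ in rounds $r''+1,\ldots,r'-1$, while a claim $r''>r'+1$ contradicts $i$'s own record that it had already stopped sending to $j$ at round $r'+1$, so $j$'s $NS$ could not legitimately report the link as correct beyond that round. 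Both inconsistencies are flagged by \textsc{VerifyMsgChain} via the preceding message-chain claims.

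The main obstacle I expect lies in the unforgeability step of the $k=i$ sub-case. Turning the combinatorial statement ``no coalition of at most $t<n-1$ agents holds all pieces of $X\mbox{-}random_i^{r''}[l_{ij}]$'' into a protocol-level rejection by \textsc{VerifyMsgChain} requires both the independence of the per-round, per-partner random bits and the correctness of the \textsc{X-Random} lookup: one must rule out not only deliberate forgery by rationally deviating or faulty agents, but also accidental collisions among the sets $X\mbox{-}random_i^{r}[l_{ij}]$ across distinct rounds $r$. Once this point is pinned down, the $k=j$ sub-case reduces cleanly to the local bookkeeping that $i$ already maintains about its own send and receive history, together with the message-chain claims established earlier.
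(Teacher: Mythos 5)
The paper states this claim without any proof (it is one of the unproved verification claims interleaved with Cases~\ref{case a}--\ref{case k}), so there is no official argument to compare against; judged on its own, your reconstruction is essentially correct. Your $k=j$ half is exactly the natural argument: $i$ writes $(X,r',i,\cdot)$ only upon first adding $j$ to $lost_i$, the punishment mechanism then guarantees $i$ sends nothing to $j$ from round $r'+1$ on, so $j$ must record its own $Msg\mbox{-}X$ tuple by round $r'+1$, and the symmetric direction gives $r'\le r''+1$, hence $|r'-r''|\le 1$. For the $k=i$ half, the conclusion is right but you route it through unforgeability of the set $X\mbox{-}random_i^{r''}[l_{ij}]$, which is heavier than the claim requires: read as a statement about honest executions (which is how the surrounding claims function -- they are the conditions whose violation triggers $\perp$), it suffices to observe that Algorithm~\ref{algorithm 2} (lines 10--14) writes the $Msg\mbox{-}X$ tuple for a direct link exactly once, at the earliest failure round, and never overwrites it, so only one tuple with agent field $i$ for $l_{ij}$ ever enters circulation and any honestly relayed copy must coincide with it. The unforgeability discussion really belongs to the Nash-equilibrium analysis (Type~6/7 deviations), and there your phrase ``overwhelming probability'' overstates what the mechanism delivers: the paper's faulty random numbers are single bits distributed one per agent, so the paper itself only claims a guessing probability of $\tfrac{1}{2}$ per verifier, not negligible forgery probability. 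Flagging that quantitative point, and separating the honest-execution argument from the deviation-detection argument, would tighten the write-up; neither issue invalidates the claim itself.
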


\begin{case}\label{case e}
    For the direct link $l_{ij}$ of $i$, if the $t_A$ of $NS_i^r[l_{ij}]$ is $(X,r\overset{'}{},j,X\mbox{-}random_j^{r\overset{'}{}}[l_{ij}])$ and the $t_A$ of $NS_j^{r-1}[l_{ij}]$ is $(X,r\overset{''}{},k,X\mbox{-}random_k^{r\overset{''}{}}[l_{ij}])$, then $i$ does nothing.
\end{case}

\begin{claim}
    In Case \ref{case e}, if $k=j$, the $t_A$ of $NS_j^{r-1}[l_{ij}]$ must be the same as the $t_A$ of $NS_i^r[l_{ij}]$, and if $k=i$, it must have $r\overset{''}{}=r\overset{'}{}+1$.
\end{claim}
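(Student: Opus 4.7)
My plan is to split on the two sub-cases of Case~\ref{case e} and argue each from the semantics of $NS$ records together with the punishment mechanism, using the preceding claims (in particular Claim~\ref{first round number verification} and the constraints shown in Cases~\ref{case c}--\ref{case d}).

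\textbf{Sub-case $k=j$.} Both $t_A$'s name $j$ as the reporter, so the underlying observation in each record originates from $j$ itself. Once $j$ fails to receive from $i$ in some round $m_j$, the punishment mechanism puts $i$ into $lost_j$, and $j$ never attempts communication on $l_{ij}$ again; hence $j$ has a single, well-defined earliest detection round $m_j$. Since $NS$ stores only the earliest round at which a link is known to have failed and the reporter attached to that round, we must have $r' = m_j$ in $NS_i^r[l_{ij}]$ (the report arrived at $i$ via the message chain) and $r'' = m_j$ in $NS_j^{r-1}[l_{ij}]$. The faulty random string $X\text{-}random_j^{m_j}[l_{ij}]$ is generated by $j$ exactly once (in round $m_j-1$) and is bundled with every propagation of that report, so the random-number slot also coincides. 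Hence the two $t_A$'s are identical.

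\textbf{Sub-case $k=i$.} Since $NS_i^r[l_{ij}]$ carries reporter $j$ rather than $i$, the ``prefer-earlier-round'' update rule (made explicit by Case~\ref{case d}, $k=j$, $r''=r'-1$) forces $j$'s detection round $r'=m_j$ to be strictly less than $i$'s own direct detection round $m_i$ (otherwise $i$'s self-report would have survived). Symmetrically, $NS_j^{r-1}[l_{ij}]$ carrying reporter $i$ forces $r''=m_i$ and requires $m_i < m_j$ from $j$'s viewpoint; the resolution is that $j$'s record was rewritten by $i$'s strictly earlier propagated report. Now the punishment mechanism pins down the exact offset: if $j$ is the first to miss a message on $l_{ij}$ in round $m_j$, then $j$ refuses to send to $i$ from round $m_j+1$ on, forcing $i$ to detect in round $m_j+1$; the symmetric statement holds if $i$ detects first. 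Combining this with the ordering obtained from the two update rules above, the only consistent timing is $m_i = m_j + 1$, that is, $r'' = r' + 1$.

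\textbf{Main obstacle.} The delicate part is the $k=i$ sub-case: I must rule out the a-priori symmetric possibility $r'' = r' - 1$ and also the simultaneous case $r'' = r'$. Doing so requires carefully tracing how the indirect report that installed reporter $j$ in $NS_i^r[l_{ij}]$ (respectively reporter $i$ in $NS_j^{r-1}[l_{ij}]$) propagated through a third correct agent, and verifying with the message-chain mechanism (Claims~\ref{claim 1}--\ref{claim 2} and their successors) that neither endpoint could have learned of the other's report before committing its own $NS$ entry except in the configuration where $j$ detects first. Once that timing argument is tight, the equality $r'' = r' + 1$ drops out as the unique consistent value.
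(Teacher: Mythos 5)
The paper states this claim without proof (it is one of the unproved ``round number verification'' rules), so there is no reference argument to compare against; I am judging your proposal on its own merits.

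Your sub-case $k=j$ is essentially right: a reporter-$j$ tuple for $l_{ij}$ of type $Msg\mbox{-}X$ is created exactly once, at $j$'s unique detection round $m_j$, with the faulty random numbers $j$ generated for that round, and it is never altered in transit; since $i$'s local tuple also names $j$ as reporter, both tuples are copies of that one record, hence identical.

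The sub-case $k=i$ has a genuine gap. You treat $NS_j^{r-1}[l_{ij}]$ as agent $j$'s own current $NS$ entry and apply $j$'s prefer-earlier update rule to it, deriving $m_i<m_j$; from $i$'s entry you derive $m_j<m_i$. These are contradictory, and saying ``the resolution is that $j$'s record was rewritten'' does not resolve anything --- if both orderings were genuinely forced, the configuration would be impossible rather than satisfy $r''=r'+1$. Your ``main obstacle'' paragraph then defers exactly the step that decides the claim. The resolution is that the received tuple is not governed by any update rule at $j$ at all: since $i$'s local state for $l_{ij}$ is $Msg\mbox{-}X$, we have $j\in lost_i$, so the state arrives relayed by a third party, and a tuple naming $i$ as reporter is simply a forwarded copy of $i$'s own original detection record; tuples are never rewritten in transit, so $r''=m_i$ by construction. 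On the other side, the only branch of the algorithm that lets $i$ replace its own direct-link record $(X,m_i,i,\cdot)$ by a reporter-$j$ record is the $rr=lr-1$ branch of Case~\ref{case d} (direct links are never $Msg\mbox{-}O$ after Phase~1, so Case~\ref{case k} cannot apply), which forces $r'=m_j=m_i-1$. Hence $r''=r'+1$ with no need for the punishment-mechanism timing argument or for ruling out $r''=r'$ and $r''=r'-1$ separately --- those cases never arise because $i$ keeps its own record in them and the hypothesis of Case~\ref{case e} (local reporter $=j$) would fail.
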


\begin{case}\label{case f}
    For the indirect link $l_{kp}$ of $i$, if the $t_A$ of $NS_i^r[l_{kp}]$ is $(r\overset{'}{},y,random)$ and the $t_A$ of $NS_j^{r-1}[l_{kp}]$ is $(r\overset{''}{},z,random\overset{'}{})$, then $i$ only needs to append the new state into $HS_i$ when $r\overset{''}{} \leq r\overset{'}{}$, and $i$ must update $NS_i^r$ and $HS_i$ when $r\overset{''}{} > r\overset{'}{}$ (lines 53-58).
\end{case}

\begin{case}\label{case g}
    For the indirect link $l_{kp}$ of $i$, if the $t_A$ of $NS_i^r[l_{kp}]$ is $(r\overset{'}{},y,random)$ and the $t_A$ of $NS_j^{r-1}[l_{kp}]$ is $(X,r\overset{''}{},z,X\mbox{-}random_z^{r\overset{''}{}}[l_{kp}])$, then $i$ needs to update $NS_i^r$ and append the new state into $HS_i$. (lines 59-60).
\end{case}

\begin{claim}
    In Case \ref{case g}, if $z=y$, it must have $r\overset{'}{} < r\overset{''}{}$, and if $z\neq y$, it must have $r\overset{'}{} \leq r\overset{''}{}$.
\end{claim}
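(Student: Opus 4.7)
The plan is to split on whether the two tuples in Case \ref{case g} are reported by the same agent, using Claim \ref{first round number verification} to force $y,z \in \{k,p\}$ together with the punishment mechanism of Section \ref{sec:Model}. Two facts from that mechanism drive the argument: (i) once an endpoint marks $l_{kp}$ faulty it can never afterward view it as correct (``faulty links never recover''), and (ii) if an endpoint $q$ fails to receive from endpoint $q'$ in some round, then from the next round on $q$ sends nothing to, and accepts nothing from, $q'$.

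First I would handle $z=y$; without loss of generality $y=z=k$. The $Msg\mbox{-}R$ tuple $(r',k,random_p^{r'})$ witnesses that $k$ received $p$'s round-$r'$ message and classified $l_{kp}$ as correct in round $r'$, while the $Msg\mbox{-}X$ tuple $(X,r'',k,\cdot)$ records that $k$ observed an omission on $l_{kp}$ in round $r''$. Since $k$ holds a single view of its own direct link per round, $r'=r''$ is impossible, and by (i) $k$ cannot revert from a faulty view back to a correct view, so $r'<r''$.

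Next I would handle $z\neq y$, so $\{y,z\}=\{k,p\}$; by symmetry it suffices to take $y=k$ and $z=p$. Suppose for contradiction that $r'>r''$. The tuple at $p$ in round $r''$ means $p$ failed to receive from $k$ in that round, so by (ii) $p$ sends nothing to $k$ from round $r''+1$ onward, and $k$ therefore cannot obtain $random_p^{\hat r}$ for any $\hat r\geq r''+1$. This contradicts the presence of $(r',k,random_p^{r'})$ in $NS_i^r[l_{kp}]$ with $r'\geq r''+1$, so $r'\leq r''$. Equality is admissible here precisely because asymmetric omission (e.g.\ $k$ receives $p$'s round-$r''$ message while $p$ fails to receive $k$'s round-$r''$ message) can produce exactly such a same-round pair of opposite-endpoint reports. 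The main subtlety will be justifying the strict-versus-weak inequality gap cleanly: one-agent consistency is deterministic per round and obeys non-recovery, while two-agent same-round reports are mutually consistent exactly in the one-sided-omission scenario, which is what keeps $r'=r''$ from being ruled out in the second case.
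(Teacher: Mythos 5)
Your argument is correct: the paper states this claim without any proof (as it does for all the round-number-verification claims surrounding Cases~\ref{case a}--\ref{case k}), so there is nothing to compare against, but your reasoning is exactly the justification the paper implicitly relies on. Using Claim~\ref{first round number verification} to restrict the reporters to the endpoints, then splitting on same-reporter (where non-recovery of faulty links plus the single per-round observation forces strict inequality) versus opposite-reporter (where the punishment rule ``stop sending from round $r''+1$ onward'' caps the latest round for which $k$ can hold $random_p^{\cdot}$ at $r''$, with equality realized by a one-sided omission) is the natural and complete argument, and your explanation of why equality survives only in the two-reporter case is the right subtlety to flag.
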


\begin{case}\label{case h}
    For the indirect link $l_{kp}$ of $i$, if the $t_A$ of $NS_i^r[l_{kp}]$ is $(X,r\overset{'}{},y,X\mbox{-}random_y^{r\overset{'}{}}[l_{kp}]))$ and the $t_A$ of $NS_j^{r-1}[l_{kp}]$ is $(r\overset{''}{},z,random)$, then $i$ only needs to append the new state into $HS_i$. (lines 62-63).
\end{case}

\begin{claim}
    In Case \ref{case h}, if $z=y$, it must have $r\overset{'}{} > r\overset{''}{}$, and if $z\neq y$, it must have $r\overset{'}{} \geq r\overset{''}{}$.
\end{claim}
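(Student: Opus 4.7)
The plan is to apply Claim~\ref{first round number verification} to constrain $y, z \in \{k, p\}$, and then split on whether $z = y$ (the same endpoint of $l_{kp}$ produced both the faulty tuple seen by $i$ and the correct tuple seen by $j$) or $z \neq y$ (the two tuples come from the two distinct endpoints of $l_{kp}$). The two sub-cases are driven by genuinely different mechanisms: monotonicity of a single agent's view of its own direct link in the first case, and the punishment mechanism of Section~\ref{sec:Model} in the second.

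For $z = y$, I would observe that the Msg-R tuple $(r'', y, \cdot)$ must trace back to $y$'s own local record in some $NS_y^{r''}$: direct-link entries in $NS_y$ are always written with round field equal to the current round (Algorithm~\ref{algorithm 2}, line~7), and once a Msg-X entry is installed, the guard at lines~10--11 prevents it from ever being overwritten. Hence $y$'s direct view of $l_{kp}$ evolves monotonically: a sequence of Msg-R tuples with round fields $1, 2, \ldots$ up to $y$'s last correct round, followed forever by the single Msg-X tuple frozen at the first omission round $r'$. The Msg-R tuple therefore predates the Msg-X tuple in $y$'s own timeline, which yields $r'' < r'$ strictly.

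For $z \neq y$, so $\{y, z\} = \{k, p\}$, I would invoke the punishment mechanism directly. Because $y$ failed to receive from $z$ in round $r'$, $y$ puts $z$ into $lost_y$ and refuses to communicate with $z$ in every subsequent round. A Msg-R tuple $(r'', z, \cdot)$ records that $z$ successfully received $y$'s round-$r''$ message, which is impossible whenever $r'' > r'$. Equality $r'' = r'$ is genuinely allowed, since $y$ may have already sent in phase~1 of round $r'$ before discovering the omission in phase~2; this is precisely why the claim asserts only $r' \geq r''$ in this sub-case.

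The main obstacle is securing the strict inequality in the $z = y$ case, since one must rule out the scenario in which $y$ emits a Msg-R with round field exactly $r'$ while simultaneously recording a Msg-X at $r'$. This is resolved by the intra-round phase ordering: in round $r'$, $y$ broadcasts $NS_y^{r'-1}$ in phase~1 (whose direct-link entry on $l_{kp}$ has round field at most $r'-1$), detects the omission only in phase~2, and writes the Msg-X entry only in phase~3. Consequently no propagated Msg-R bearing round field $\geq r'$ can ever be attributable to $y$, closing the argument.
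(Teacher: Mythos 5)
Your proof is correct. Note that the paper itself states this claim (like the other round-number-verification invariants, Claims~\ref{first round number verification}--\ref{last round number verification}) without any explicit proof, so there is no official argument to compare against; what you have written supplies the missing justification, and it rests on exactly the mechanisms the paper describes informally. Your decomposition is the right one: Claim~\ref{first round number verification} pins $y,z\in\{k,p\}$; for $z=y$ the strict inequality $r'>r''$ follows because a single agent's record of its own direct link consists of $Msg\mbox{-}R$ entries for every round it actually receives, followed by one $Msg\mbox{-}X$ frozen at the first omission round (the guard at lines~10--11 of Algorithm~\ref{algorithm 2} preserving the earliest failure round), so a correct report at $r''$ forces the first failure to lie strictly later; and for $z\neq y$ the punishment mechanism gives $r'\geq r''$ with equality genuinely attainable, since in round $r'$ agent $y$ still sends in phase~1 before discovering the omission in phase~2, so $z$ can legitimately record $Msg\mbox{-}R$ for round $r'$ while $y$ records $Msg\mbox{-}X$ for the same round. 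One small imprecision: in the $z=y$ case the exclusivity of $Msg\mbox{-}R$ and $Msg\mbox{-}X$ at round $r'$ is more directly seen from the fact that $T$ and $S$ partition $y$'s peers in the computation phase of round $r'$ (so $l_{kp}$ receives exactly one of the two tuple types for that round), rather than from the send/receive phase ordering you invoke; but this does not affect the validity of your conclusion.
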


\begin{case}\label{case i}
    For the indirect link $l_{kp}$ of $i$, if the $t_A$ of $NS_i^r[l_{kp}]$ is $(X,r\overset{'}{},y,X\mbox{-}random_y^{r\overset{'}{}}[l_{kp}]))$ and the $t_A$ of $NS_j^{r-1}[l_{kp}]$ is $(X,r\overset{''}{},z,X\mbox{-}random_z^{r\overset{''}{}}[l_{kp}])$, then $i$ does nothing when $z=y$, and $i$ appends the new state into $HS_i$ when $z \neq y$. Specially, $i$ also updates $NS_i^r$ using the new state received if $r\overset{'}{} > r\overset{''}{}$. (lines 64-70).
\end{case}

\begin{claim}\label{last round number verification}
    In Case \ref{case i}, if $z=y$, the $t_A$ of $NS_j^{r-1}[l_{kp}]$ must be the same as the $t_A$ of $NS_i^r[l_{kp}]$, and if $z\neq y$, it must have $0 \leq |r\overset{'}{} - r\overset{''}{}| \leq 1$.
\end{claim}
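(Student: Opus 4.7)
The plan is to split the argument along the two sub-statements. By Claim \ref{first round number verification}, the reporting agents $y$ and $z$ must both lie in $\{k,p\}$, so either $z=y$ or $\{y,z\}=\{k,p\}$. The hypothesis of Case \ref{case i} is that both $t_A$'s are of type $Msg\mbox{-}X$, i.e., each encodes a detection event on the same indirect link $l_{kp}$.

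For the subcase $z=y$, I would argue that the $Msg\mbox{-}X$ record authored by $y$ about $l_{kp}$ is uniquely determined by $y$ and $l_{kp}$. Because faulty links never recover (Section \ref{sec:Model}), $y$ has a single ``first failure round'' $r^{*}$ on $l_{kp}$: that is the only round in which $y$ ever produces a $Msg\mbox{-}X$ record for $l_{kp}$. The payload $X\mbox{-}random_y^{r^{*}}[l_{kp}]$ is generated by $y$ once, in the computation phase preceding round $r^{*}$, and is therefore a deterministic function of $y$ and $r^{*}$. Any tuple claiming to originate from $y$ whose round or random-number set disagrees would fail the verification against the $X$-$Random$ store (or the earlier message-chain checks) and would cause a decision of $\perp$ before Case \ref{case i} is reached. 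Hence any two surviving $t_A$'s that both list $y$ as author must share $r'=r''=r^{*}$ together with an identical sorted random-number set, so the two tuples coincide.

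For the subcase $z\neq y$, take $y=k$ and $z=p$ without loss of generality. I would invoke the punishment mechanism of Section \ref{sec:Model}: the instant $k$ first fails to receive from $p$ in a round $r'$, $k$ sends nothing to $p$ in round $r'+1$, so $p$ must register the broken link by round $r'+1$ at the latest (unless it had already done so earlier). Running the same argument from $p$'s perspective yields $r'\leq r''+1$, and combining gives $|r'-r''|\leq 1$. The main obstacle, and the reason this step is not purely mechanical, is enumerating the possible causes of non-receipt in round $r'$ (send omission at $p$, receive omission at $k$, or both in the same round) and confirming in each subcase that the punishment step advances the unaware endpoint's detection by at most one round; one also has to rely on the random-number signatures carried in $X$-$Random$ to rule out any rational attempt to forge a mismatched round label while still passing earlier verification.
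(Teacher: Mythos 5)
Your proof is correct. Note that the paper states this claim (like all of the round-number-verification claims, Claims~\ref{first round number verification}--\ref{last round number verification}) without giving any proof, so there is no argument of record to compare yours against; your two-part justification is the one the protocol's design implicitly relies on: for $z=y$ the endpoint $y$ authors exactly one $Msg\mbox{-}X$ tuple for $l_{kp}$ (the earliest failure round, with the pre-distributed faulty random numbers fixed in advance), so any two surviving copies coincide, and for $z\neq y$ the punishment mechanism forces each endpoint to detect the broken link within one round of the other, giving $r''\leq r'+1$ and $r'\leq r''+1$, hence $|r'-r''|\leq 1$. One minor remark: since the claim is used as a consistency check on honest executions, the excursion into ruling out forged round labels via the \textsc{X-Random} store is not needed for the claim itself (deviations are handled separately in the Nash-equilibrium analysis), but it does no harm.
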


\begin{case}\label{case j}
    If the $t_A$ of $NS_j^{r-1}[l_{kp}]$ is $\emptyset$, then $i$ does nothing (lines 28-29).
\end{case}

\begin{case}\label{case k}
    For the indirect link $l_{kp}$ of $i$, if the $t_A$ of $NS_i^r[l_{kp}]$ is $\emptyset$ and the $t_A$ of $NS_j^{r-1}[l_{kp}]$ is $(r\overset{'}{},z,random)$ or $(X,r\overset{'}{},z,X\mbox{-}random_z^{r\overset{'}{}}[l_{kp}]))$, then $i$ needs to update $NS_i^r$ and append the new state into $HS_i$. (lines 30-32).
\end{case}

\begin{claim}\label{message source verifaication 1}
    If in round $r$, agent $i$ receives a message in which $t_B$ is $(j, m)$ and $j\in T_i^m$ or $j=i$, then $t_A$ of the message must already be in $HS_i$ when round $r$. 
\end{claim}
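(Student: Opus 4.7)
The approach is to split on the disjunction in the hypothesis and, in both branches, to reduce the statement to the invariant that every tuple that ever enters $NS_i$ is simultaneously appended to $HS_i$ by \Call{AppendHS}{}.

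For the branch $j=i$, I would first observe that the tag $t_B=(i,m)$ asserts that $i$ itself sent $t_A$ outward in round $m$. By the send phase of Algorithm~\ref{algorithm 1}, that forces $t_A$ to have been stored in $NS_i^{m-1}$; since every write to $NS_i$ in Algorithm~\ref{algorithm 2} is paired with a call to \Call{AppendHS}{}, the tuple $t_A$ was written into $HS_i$ no later than round $m-1$, and hence is present in $HS_i$ by round $r$.

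For the branch $j\in T_i^m$, the first step is to observe that the entry carrying $t_B=(j,m)$ received by $i$ in round $r$ must have originated from $j$'s single outgoing payload $NS_j^{m-1}$ in round $m$, and since $j$ is connected to $i$ in round $m$ that same payload was delivered to $i$ in round $m$. The claim then reduces to showing that when $i$ processed $NS_j^{m-1}$ in round $m$'s computation phase, the entry $t_A$ ended up in $HS_i$. I would then run through Cases (a)--(k) of Algorithm~\ref{algorithm 2}. Cases (a), (c), (d), (f), (g), (h), (k) and the $z\neq y$ sub-branch of (i) append $t_A$ to $HS_i$ explicitly, so the conclusion is immediate. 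Case (j) has $t_A=\emptyset$ and the claim is vacuous. Case (b) signals an inconsistency, decides $\perp$, and terminates the protocol, so no later round $r$ need satisfy the claim. The "do nothing" Case (e) and the $z=y$ sub-branch of (i) are the only remaining ones, and there the auxiliary claims attached to those cases force the incoming $t_A$ to coincide with the tuple already stored in $NS_i^r$, which was appended to $HS_i$ at the earlier round when $i$ first installed it.

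The step I expect to be the main obstacle is the bookkeeping in those "do nothing" sub-cases: I have to extract the round of first installation of the matching tuple, using Claim~\ref{first round number verification} to identify the reporting agent of $t_A$ and the claims attached to Cases~\ref{case e} and \ref{case i} to pin down the round and source, and then invoke the $NS$/$HS$ coupling guaranteed by \Call{AppendHS}{} at that earlier round. A short induction on the round of first installation of the matching tuple handles the nesting of references. Once that is carried through, the two branches glue together to yield the statement.
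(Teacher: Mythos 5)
Your proposal is sound, but note that the paper never actually proves this claim: Claim~\ref{message source verifaication 1} is stated bare, as one of the consistency conditions that the verification step checks (violations trigger $\textbf{Decide}(\perp)$), so there is no in-paper argument to compare against. Your reconstruction is the natural one and I find no gap in it. The load-bearing invariant --- every tuple ever written into $NS_i$ is simultaneously passed to \Call{AppendHS}{}, so $NS_i\subseteq HS_i$ at all times --- does hold for both phases of Algorithm~\ref{algorithm 2} (the copy $NS_i^r\gets NS_i^{r-1}$ introduces no new tuples), which settles the $j=i$ branch, since $t_B=(i,m)$ certifies that $t_A$ was part of the payload $NS_i^{m-1}$. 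For the $j\in T_i^m$ branch your reduction to ``$i$ itself received $NS_j^{m-1}$ in round $m$ and processed $t_A$ then'' is correct, and your treatment of the ``do nothing'' sub-cases via the auxiliary claims attached to Cases~\ref{case e} and~\ref{case i} (the incoming $t_A$ coincides with a tuple already installed, hence already appended), together with the induction on the round of first installation, closes the only real loophole. Two small caveats worth making explicit if you write this up: (i) the whole argument presupposes that $j$ and every intermediate forwarder follow the protocol --- in particular that $j$ sends the \emph{same} $NS_j^{m-1}$ to all agents not in $lost_j$ --- which is the intended reading, since the claim functions as the honest-execution baseline against which deviations are detected and punished; and (ii) one should note that $m\leq r-1$ necessarily, because a $t_B$ tag $(j,m)$ is attached by the recipient in round $m$ and can only be forwarded from round $m+1$ onward, so ``already in $HS_i$ when round $r$'' is indeed satisfied by an append that happens in round $m$'s computation phase.
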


\begin{claim}\label{message source verifaication 2}
    If in round $r$, agent $i$ receives a message in which $t_B$ is $(j, r-1)$ from $k$, then $Type(NS_k^{r-1}[l_{kj}^{r-1}])$ must be $Msg\mbox{-}R$.
\end{claim}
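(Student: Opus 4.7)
The plan is to unfold what the source tuple $t_B = (j, r-1)$ encodes in the protocol and to show that the only way such a tuple could have been attached to a state carried by $k$'s message in round $r$ is for $k$ itself to have received directly from $j$ in round $r-1$ over a link that $k$ classified as correct. Thus I would first fix the message under consideration, written as $\langle t_A, t_B \rangle$ with $t_B=(j,r-1)$, that arrived at $i$ from $k$ in round $r$. By inspection of the send phase (Algorithm 1, lines 16--18), $k$ forwards entries drawn from $NS_k^{r-1}$; hence any forwarded entry must have been present in $NS_k^{r-1}$ at the end of the computation phase of round $r-1$. Furthermore, the semantics of $t_B$ stipulate that $j$ was the agent who supplied $t_A$ to $k$, and the second coordinate records the round in which the supply occurred, here $r-1$.

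Next I would argue that $k$ actually received a message from $j$ in round $r-1$. If $j$ were already in $lost_k$ at the start of round $r-1$, then by line 22 $k$ would not accept anything from $j$, so $t_B$ could never be set to $(j,r-1)$; consequently $j\notin lost_k$ at the start of round $r-1$. Either $k$ received from $j$ in that round, in which case we proceed directly, or $k$ did not, in which case the ``else'' branch at line 29 adds $j$ to $lost_k$ and no entry with source $(j,r-1)$ can be stored in $NS_k^{r-1}$, yielding a contradiction. So $k$ received a message from $j$ in round $r-1$, which places $j$ into the set $T$ formed at the start of Algorithm 2's computation phase.

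I would then invoke Algorithm 2 phase 1 (the direct-link update block): because $j\in T$, agent $k$ sets the $t_A$ of $NS_k^{r-1}[l_{kj}]$ to $(r-1,\,k,\,random_j^{r-1})$, which is by definition a $Msg\mbox{-}R$ entry, and appends the corresponding $Msg\mbox{-}R$ record to $HS_k$. Since this update occurs inside the same computation phase in which $NS_k^{r-1}$ is finalized, the type of $NS_k^{r-1}[l_{kj}^{r-1}]$ is forced to be $Msg\mbox{-}R$, giving the claim.

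The main subtlety, and thus the step most deserving of care, is to rule out the possibility that a rational $k$ fabricates a $t_B=(j,r-1)$ annotation on a state that $k$ did not actually obtain from $j$ on a correct direct link. The punishment mechanism and message-chain verification are what prevent this: the $Msg\mbox{-}R$ record in $HS_k$ carries $random_j^{r-1}$, which only a successful direct reception from $j$ could produce; and earlier claims (Claims~\ref{first round number verification}--\ref{message source verifaication 1}) constrain any forwarded annotation to be consistent with an entry already present in the sender's local history. Together these ensure that the only scenarios consistent with $t_B=(j,r-1)$ are those covered by the update rule above, closing the argument.
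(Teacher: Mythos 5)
Your proposal is correct. The paper states this claim without an explicit proof (it is one of the invariants that \textsc{VerifyMsgChain}/the message-source verification checks), and your argument supplies exactly the intended justification: a source annotation $t_B=(j,r-1)$ can only be written by $k$ when $j$ lies in the set $T$ of agents from whom $k$ actually received in round $r-1$ (otherwise $j$ enters $lost_k$ or was already there), and for every $j\in T$ phase~1 of Algorithm~2 forces $NS_k^{r-1}[l_{kj}]$ to the $Msg\mbox{-}R$ tuple $(r-1,k,random_j^{r-1})$. The closing remark about a deviating $k$ fabricating the annotation is not needed for the claim itself (which is an invariant of honest execution used as a consistency check), but it does no harm.
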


In phase 3, for a link $l_{kp}$, $i$ needs to detect whether there is an inconsistency firstly (line 26). An inconsistency detected in phase 3 may be because

\begin{itemize}
    \item[1.] (message format verification) the format of $NS_j^{r-1}[l_{kp}]$ is incorrect;
    \item[2.] (message source verification) $NS_j^{r-1}[l_{kp}]$ violates Claim \ref{message source verifaication 1} or Claim \ref{message source verifaication 2};
    \item[3.] (random number verification) if the type of $NS_j^{r-1}[l_{kp}]$ is $Msg\mbox{-}R$, the message random number in $NS_j^{r-1}[l_{kp}]$ is different from that in \Call{Random}{}, or if $Msg\mbox{-}X$, the faulty random numbers in \Call{X-Random}{} are different from the random numbers at the corresponding indexes of the sorted set in $NS_j^{r-1}[l_{kp}]$;
    \item[4.] (round number verification) $NS_j^{r-1}[l_{kp}]$ violates one of the claims from Claim \ref{first round number verification} to Claim \ref{last round number verification}.
\end{itemize}

If $i$ detects an inconsistency, then it decides $\perp$ (line 27). If not, $i$ updates the states as previously discussed.

\begin{algorithm*}
    \caption{agent $i$ verifies and updates link state in round $r$}\label{algorithm 2}
    \begin{algorithmic}[1] 
        \Require $r\gets round$, $i\gets id$, $NS^r_i$, $HS_i$, $\{NS^{r-1}\}$, $\{random^r\}$ 
        \Ensure $(NS^r_i, HS_i)$ \textbf{or} $decided$
        \Function {VerifyAndUpdate}{$r$, $i$, $NS^r_i$, $HS_i$, $\{NS^{r-1}\}$, $\{random^r\}$}
            \State $T\gets \Call{IDs}{\{NS^{r-1}\}}$
            \Comment{The function IDs returns ids from NS set}
            \State $S\gets N-T-\{i\}$
                
            \State
            \State Phase 1: update the state of direct links
            \For{$j \in T$}
                \State $NS^r_i[l_{ij}]\gets ((r, i, random_j^r), \phi)$ \Comment{Message source verification is not required if $\phi$}
                \State\Call{AppendHS}{$HS_i, l_{ij}, (r, i, random_j^r)$} \Comment{append the state into HS or decide $\perp$}
            \EndFor
            \For{$j \in S$}
                \If{$\Call{Type}{NS^r_i[l_{ij}]} = Msg\mbox{-}X$}
                    \State $continue$
                \Else
                    \State $NS^r_i[l_{ij}]\gets ((X, r, i, X\mbox{-}random_i^r[l_{ij}]), \phi)$
                    \State\Call{AppendHS}{$HS_i, l_{ij}, (X, r, i, X\mbox{-}random_i^r[l_{ij}])$}
                \EndIf
            \EndFor
                
            \State
            \State Phase 2: verify message chain
            \For{$j \in T$}
                \State \Call{VerifyMsgChain}{$NS_j^{r-1}$} \Comment{Message chain verification or decide $\perp$}
            \EndFor
            \State
            \State Phase 3: verify and update
            \For{$j \in T$}
                \For{$k = 1 \to n-1$}
                    \For{$p = k+1 \to n$}
                        \State $recvState\gets NS_j^{r-1}[l_{kp}].state$
                        \State $localState\gets NS_i^r[l_{kp}].state$
                        \If{an inconsistency is detected}
                            \State$\textbf{Decide}(\perp)$ \Comment{Punishment}
                        \EndIf
                        \If{$\Call{Type}{recvState}=Msg\mbox{-}O$}\Comment{Case 10}
                            \State$Continue$
                        \ElsIf{$\Call{Type}{localState}=Msg\mbox{-}O$}\Comment{Case 11}
                            \State$NS_i^r[l_{kp}]\gets (recvState, (j,r))$
                            \State \Call{AppendHS}{$HS_i, l_{kp}, recvState$}
                        \Else
                            \State$lr\gets localState.round$
                            \State$li\gets localState.id$
                            \State$rr\gets recvState.round$
                            \State$ri\gets recvState.id$
                            \If{($k=i$ \textbf{or} $p=i$)}\Comment{Direct link}
                                \If{$\Call{Type}{localState}=Msg\mbox{-}R$ \textbf{and} $\Call{Type}{recvState}=Msg\mbox{-}R$}\Comment{Case 1}
                                    \State \Call{AppendHS}{$HS_i, l_{kp}, recvState$}
                                \ElsIf{$\Call{Type}{localState}=Msg\mbox{-}R$ \textbf{and} $\Call{Type}{recvState}=Msg\mbox{-}X$}\Comment{Case 2}
                                    \State$\textbf{Decide}(\perp)$
                                \ElsIf{$\Call{Type}{localState}=Msg\mbox{-}X$ \textbf{and} $\Call{Type}{recvState}=Msg\mbox{-}R$}\Comment{Case 3}
                                    \State \Call{AppendHS}{$HS_i, l_{kp}, recvState$}
                                \ElsIf{$\Call{Type}{localState}=Msg\mbox{-}X$ \textbf{and} $\Call{Type}{recvState}=Msg\mbox{-}X$}\Comment{Case 4 and 5}
                                    \If{$li=i$}\Comment{Case 4}
                                        \If{$rr = lr$ \textbf{or} $rr = lr+1$}
                                            \State\Call{AppendHS}{$HS_i, l_{kp}, recvState$}
                                        \ElsIf{$rr = lr - 1$}
                                            \State$NS_i^r[l_{kp}]\gets (recvState,(j,r))$
                                            \State\Call{AppendHS}{$HS_i, l_{kp}, recvState$}
                                        \EndIf
                                    \EndIf
                                \EndIf
\algstore{algorithm2}
    \end{algorithmic}
\end{algorithm*}    
\begin{algorithm*}
    \begin{algorithmic}[1]
    \algrestore{algorithm2}
                            \Else\Comment{Indirect link}
                                \If{$\Call{Type}{localState}=Msg\mbox{-}R$ \textbf{and} $\Call{Type}{recvState}=Msg\mbox{-}R$}\Comment{Case 6}
                                    \If{$rr \leq lr$}
                                        \State\Call{AppendHS}{$HS_i, l_{kp}, recvState$}
                                    \ElsIf{$rr > lr$}
                                        \State$NS_i^r[l_{kp}]\gets (recvState,(j,r))$
                                        \State\Call{AppendHS}{$HS_i, l_{kp}, recvState$}
                                    \EndIf
                                \ElsIf{$\Call{Type}{localState}=Msg\mbox{-}R$ \textbf{and} $\Call{Type}{recvState}=Msg\mbox{-}X$}\Comment{Case 7}
                                    \State$NS_i^r[l_{kp}]\gets (recvState,(j,r))$
                                    \State\Call{AppendHS}{$HS_i, l_{kp}, recvState$}
                                \ElsIf{$\Call{Type}{localState}=Msg\mbox{-}X$ \textbf{and} $\Call{Type}{recvState}=Msg\mbox{-}R$}\Comment{Case 8}
                                    \State\Call{AppendHS}{$HS_i, l_{kp}, recvState$}
                                \ElsIf{$\Call{Type}{localState}=Msg\mbox{-}X$ \textbf{and} $\Call{Type}{recvState}=Msg\mbox{-}X$}\Comment{Case 9}
                                    \If{$ri \neq li$}
                                        \If{$lr \leq rr$}
                                            \State\Call{AppendHS}{$HS_i, l_{kp}, recvState$}
                                        \Else
                                            \State$NS_i^r[l_{kp}]\gets (recvState,(j,r))$
                                            \State\Call{AppendHS}{$HS_i, l_{kp}, recvState$}
                                        \EndIf
                                    \EndIf
                                \EndIf
                            \EndIf
                        \EndIf
                    \EndFor
                \EndFor                    
            \EndFor                            

            \State \Return{$NS^r_i, HS_i$}
        \EndFunction
    \end{algorithmic}
\end{algorithm*}

\subsection{Proof of the Protocol}
The proof assumes $n>2t+1$. Some variables are defined as follows.

\noindent $- \  State_i[l_{ij}^r]$ denotes the detection result of agent $i$ on the state of direct link $l_{ij}$ in round $r$. The type of $State_i[l_{ij}^r]$ must be $Msg\mbox{-}R$ or $Msg\mbox{-}X$.

\noindent $- \ \mathcal{C}_i^j(m_1,m_2)$ denotes the agent chain (or we can call it message propagation path) from agent $i$ to $j$. $i$ detects a direct link state in round $m_1$ and sends it to agent $k\neq i,j$ in round $m_1+1$. Then $k$ also sends the state to another agent in round $m_1+2$. And so on, finally, $j$ receives the state in round $m_2$. 

\noindent $- \  Nf^r$ denotes the set of nonfaulty agents in round $r$.

\noindent $- \  F^r$ denotes the set of faulty agents in round $r$.

\noindent $- \  F^{\Delta r}$ denotes the set of faulty agents newly detected in round $r$.

\noindent $- \  x_r$ denotes the number of risk agents in round $r$.

We first prove the upper bound of message passing time and give the round complexity of the algorithm.

\begin{theorem}[\textbf{Message Passing Mechanism}]\label{theorem 1}
    If $i,j \in Nf^{r+t+1}$, all link states in round $r$ can be reached a consensus between $i$ and $j$ at the latest in round $r+t+1$.
\end{theorem}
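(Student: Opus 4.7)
The plan is to fix an arbitrary link $l_{kp}$ and prove by induction on rounds that by the end of round $r+t+1$ every agent in $Nf^{r+t+1}$ holds the same tuple in $HS^r[l_{kp}]$; taking the union over all links $l_{kp}$ then yields the theorem. Let $K^s$ denote the subset of $Nf^{r+t+1}$ whose $HS^r[l_{kp}]$ already matches the canonical state after round $r+s$, where the canonical state is the tuple forced by the deterministic resolution rules of Algorithm~\ref{algorithm 2} (the $Msg\mbox{-}R$ tuple recorded by an endpoint that successfully received in round $r$, together with the earliest $Msg\mbox{-}X$ tuple if applicable). The base case $s=0$ is immediate: in round $r$'s computation phase, the endpoints $k$ and $p$ populate their own $HS^r[l_{kp}]$ through lines 6--15 of Algorithm~\ref{algorithm 2}.

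The inductive step is a propagation-versus-isolation argument. Assume $K^{s-1}\subsetneq Nf^{r+t+1}$ with $1\le s\le t+1$, and pick $j\in Nf^{r+t+1}\setminus K^{s-1}$. In round $r+s$ every $m\in K^{s-1}$ sends $NS_m^{r+s-1}$ to every agent outside $lost_m$, and this message encodes the canonical state of $l_{kp}^r$. If $j$ receives such a message from even one $m\in K^{s-1}$, then the message-chain verification of Phase~2 of Algorithm~\ref{algorithm 2} succeeds (because $m$ remains nonfaulty through round $r+t+1$), one of Cases~\ref{case a}--\ref{case k} fires, and \textsc{AppendHS} inserts the canonical tuple into $HS^r_j[l_{kp}]$, placing $j\in K^s$. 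Otherwise, for every $m\in K^{s-1}$ the punishment mechanism forces the direct link $l_{mj}$ into permanent $Msg\mbox{-}X$ from round $r+s$ onward, each such event consuming one unit of $j$'s faulty-link budget.

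The accounting step closes the argument using $n>2t+1$ together with $j\in Nf^{r+t+1}$. The latter bounds the total faulty-link budget of $j$ over its entire history by $t$, so ``blocking rounds'' for $j$ can occur in at most $t$ of the $t+1$ rounds $r+1,\ldots,r+t+1$. Hence at least one round inside that window introduces no new faulty link to $j$; combined with $|Nf^{r+t+1}|\ge n-t>t$, at least one knower $m\in K^{s-1}$ still has a correct link to $j$ in that round, so $j$ joins $K^s$. Running this argument simultaneously for every $j\in Nf^{r+t+1}\setminus K^0$ gives $K^{t+1}=Nf^{r+t+1}$; since $i,j\in K^{t+1}$ both hold the canonical tuple, consensus on $l_{kp}^r$ is reached by round $r+t+1$.

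The main obstacle I expect is the bookkeeping in the accounting paragraph: the faulty-link budget of $j$ is shared across all links $l_{kp}$ one might wish to propagate, and I must argue that this sharing does not degrade the $t+1$ bound, which it does not, because a single faulty $l_{mj}$ simultaneously blocks every $l_{kp}$ routed through $m$ and therefore consumes the budget only once per blocked round. A secondary subtlety is the degenerate regime in which both $k$ and $p$ become faulty before ever forwarding the state to any agent in $Nf^{r+t+1}$; there I would argue that the canonical state is $Msg\mbox{-}O$ on both sides and that Claims~\ref{first round number verification}--\ref{last round number verification} combined with the random-number verification forbid any spurious non-$Msg\mbox{-}O$ tuple from being injected, so consensus on ``unknown'' is reached vacuously.
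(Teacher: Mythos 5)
Your accounting step contains a genuine gap. You argue that each ``blocking round'' for a receiver $j$ consumes one unit of $j$'s faulty-link budget, so at most $t$ of the $t+1$ rounds can block. But a blocking round need not create any \emph{new} faulty link to $j$: if every knower $m\in K^{s-1}$ already sits behind a previously-faulted link $l_{mj}$, round $r+s$ blocks $j$ at zero cost, and this can recur indefinitely as long as $|K^{s-1}|\le t$. Your final inference --- ``a round with no new faulty link to $j$, together with $|Nf^{r+t+1}|\ge n-t>t$, implies some knower has a correct link to $j$'' --- is a non sequitur, because $n-t>t$ bounds the size of $Nf^{r+t+1}$, not the size of $K^{s-1}$, and you never show that the knower set grows past $t$ or comes to contain a good agent. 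The paper's proof of Theorem~\ref{theorem 1} closes exactly this hole with a \emph{sender-side} count: it classifies each round-$(r+s)$ holder of $State_k[l_{kp}^r]$ by whether it omits to all agents, reaches at least one good agent, or leaks only to other risk/faulty agents (Cases 2.1--2.3), and observes that the third option forces the holder to be newly detected as faulty (it omitted to all $n-t>t$ good agents). Hence each round the state stays hidden from every good agent consumes a fresh member of the size-$t$ pool of risk/faulty agents; within $t$ rounds the state either dies (all nonfaulty agents agree on $Msg\mbox{-}O$) or reaches a good agent, after which one more round of broadcast by good agents finishes the job. That good/risk distinction, absent from your argument, is the load-bearing ingredient.

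Two secondary issues. First, your base case assumes the endpoints $k,p$ seed $K^0$, but the theorem covers links whose endpoints leave $Nf^{r+t+1}$; the hard subcase is not the one you flag (state never enters $Nf^{r+t+1}$) but the one where a soon-to-be-faulty endpoint leaks the state along a chain of risk agents that injects it into $Nf^{r+t+1}$ only after several rounds --- this is precisely the paper's Case~2.3.2 and your monotone-set induction, anchored only at nonfaulty endpoints, does not reach it. Second, $NS_m^{r+s-1}[l_{kp}]$ records only the latest known state of $l_{kp}$, not the round-$r$ state, so the claim that the forwarded message ``encodes the canonical state of $l_{kp}^r$'' needs the monotonicity facts of Lemmas~\ref{lemma 1.1}--\ref{lemma 1.3} to be made precise; this is repairable but currently unstated.
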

\begin{proof}
    Consider the state of $link_{kp}$ in round $r$, where $k,p\in N$. Specially, we can consider the messages sent by $k$ and $p$ to be independent of each other and this does not affect the final consensus outcome. For example, $Type(State_k[l_{kp}^r])=Msg\mbox{-}X$ and it is received by all nonfaulty agents in round $m_1$ ($<r+t+1$), and $Type(State_p[l_{kp}^r])=Msg\mbox{-}R$ and it is received by all nonfaulty agents in round $m_2$ ($m_1<m_2<r+t+1$). Even if the detection result of $p$ may be no longer forwarded after round $m_1$, but we still have the correct consensus state in round $r+t+1$ when we consider two detection results independently. We have following cases:
    \begin{itemize}
    \item \textit{Case 1.} $k$ and $p$ are good agents in round $r$. In round $r+1$, $k$ and $p$ send their detection results to all good agents. So if $t=0$, all agents reach a consensus on the state of $l_{kp}$ in round $r+1$. If $t>0$, all nonfaulty agents reach a consensus in round $r+2$. Therefore, all link states of round $r$ among good agents can be reached a consensus in round $r+t+1$.
    \item \textit{Case 2.} $k$ is a risk agent or faulty agent and $p$ is not equal to $k$. Generally, since a receive omission can be converted to a send omission, then each risk agent and faulty agent has at most the following 3 choices when sending messages in each round:
        \begin{itemize}
        \item[1.] It has sending omissions with all other agents. 
        \item[2.] It does not have sending omissions with at least a good agent.
        \item[3.] It has sending omissions with all good agents and no sending omissions with some risk agents or faulty agents.
        \end{itemize}
        Hence, $k$ has three choices in round $r+1$. 
        \begin{itemize}
        \item \textit{Case 2.1.} $k$ chooses 1. Then all agents do not know $State_k[l_{kp}^r]$. All nonfaulty agents agree on the ``unknown-state".
        \item \textit{Case 2.2.} $k$ chooses 2. Then there must be some good agents knowing $State_k[l_{kp}^r]$ in round $r+1$. And all good agents and $k$ know the state in round $r+2$. If $t=1$, $k$ is the only risk agent, then there is a consensus on the state in round $r+2$. But if $t>1$, all nonfaulty agents receive $State_k[l_{kp}^r]$ in round $r+3$ because all good agents must send it to all nonfaulty agents in this round. Thus the lemma holds.
        \item \textit{Case 2.3.} $k$ chooses 3. So no good agents know $State_k[l_{kp}^r]$ in round $r+1$ and $k$ is detected faulty in round $r+1$. Suppose that there is only a risk agent receiving the state. Since agents are independent of each other, it is easy to scale the number of the agents from one to many. We can also divide this case into two cases.
        \begin{itemize}
        \item \textit{Case 2.3.1.} $k$ only sends messages to $p$ in round $r+1$. It means that $Type(State_k[l_{kp}^r])=Msg\mbox{-}R$. If $Type(State_p[l_{kp}^r])=Msg\mbox{-}X$, $p$ does not receive messages from $k$ in round $r+1$. Then the result is the same as that in case 2.1. But if $Type(State_p[l_{kp}^r])=Msg\mbox{-}R$, $k$ has no influence on the final result and the consensus result of $l_{kp}^r$ depends on the choice of $p$. If $p$ also chooses Case 2.3.1, then two states of $l_{kp}^r$ only exist in $k$ and $p$. The final result is also the same as that in case 2.1.
        \item \textit{Case 2.3.2.} $k$ has no sending omissions with some risk agents or faulty agents other than $p$. Then in round $r+2$, the risk agents and faulty agents that have received messages from $k$ also have 3 choices. Take one of the risk agents $l$ as an example. If $l$ chooses 1 or 2 in round $r+2$, the results are the same as those in case 2.1 and case 2.2 where the lemma holds. And if $l$ chooses 3, no good agents know $State_k[l_{kp}^r]$ in round $r+2$. Suppose that when risk and faulty agents choose 3, they must send $State_k[l_{kp}^r]$ to risk agents or faulty agents other than the source agent of the state because if they only send the state back to the source agent, the final results depend only on the source agent, not on themselves. Then until round $r+t$, if from round $r+1$ to round $r+t-1$, all risk agents and faulty agents that have received $State_k[l_{kp}^r]$ choose 3, then the risk (or faulty) agent $z$ in round $r+t$ must be the last risk (or faulty) agent in system. At this time, $z$ has only 2 choices: 1 and 2. And it is easy to get that $State_k[l_{kp}^r]$ must be consensus in round $r+t+1$. But if from round $r+1$ to round $r+t-1$, some risk agents or faulty agents that have received $State_k[l_{kp}^r]$ choose 1 or 2, then the result are the same as those in case 2.1 and case 2.2. 
        \end{itemize}
        \end{itemize}
    \end{itemize}
    In summary, the lemma holds.
\end{proof}

\begin{corollary}\label{corollary 8.1}
    If $i,j \in Nf^{r+x_r+1}$, all link states in round $r$ can be reached a consensus between $i$ and $j$ in round $r+x_r+1$.
\end{corollary}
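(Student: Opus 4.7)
The plan is to follow the same case analysis as in Theorem~\ref{theorem 1} and show that only the number of \emph{risk} agents in round $r$, not the total number of potentially faulty agents $t$, governs the worst-case propagation length. The key observation is that once an agent has been classified as faulty in some round $r' \le r$, it already lies in the $lost$ set of every good agent by the punishment mechanism, so it cannot secretly hold and later release $State_k[l_{kp}^r]$ in a way that delays consensus among nonfaulty agents. In other words, the only agents that can extend the ``hidden chain'' of Case~2.3.2 by an additional round are those whose omissions are still undetected, and by definition there are exactly $x_r$ such agents in round $r$.

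First I would dispose of the easy cases. In Case~1 (both endpoints good) and Cases~2.1, 2.2, 2.3.1 of Theorem~\ref{theorem 1}, consensus is reached within at most three rounds after $r$, and in particular within $r + x_r + 1$ whenever $x_r \ge 2$; for $x_r \in \{0,1\}$ one verifies directly that the same arguments give consensus by round $r+1$ or $r+2$ respectively, since the delay chain of Case~2.3.2 requires at least two hidden forwarders. The nontrivial case is Case~2.3.2, which I would re-examine as follows. Let $k$ be a risk or faulty agent in round $r$ that chooses option~3 in round $r+1$, and follow the maximal sequence of agents $k = a_0, a_1, a_2, \dots$ such that each $a_{s+1}$ is a risk-or-faulty agent, distinct from the source of the message, that receives $State_k[l_{kp}^r]$ in round $r+1+s$ from $a_s$ and again chooses option~3.

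The core claim I would then establish is that every $a_s$ with $s \ge 1$ must be a \emph{risk} agent in round $r$, not merely an undetected-in-round-$(r+s)$ agent. Indeed, if $a_s$ were already in $F^r$, then every good agent would have put $a_s$ in $lost$ in some round $\le r$, so $a_s$'s forwarding in round $r+s+1$ cannot influence the view of any good agent, and hence cannot prolong the interval before good agents reach consensus. Thus the hidden chain visits a set of \emph{distinct} risk agents of round $r$, and its length is bounded by $x_r$; the last agent $a_{x_r}$ has no further risk-agent successor and must take option~1 or~2 in round $r + x_r + 1$, at which point the argument of Cases~2.1 and~2.2 applies and consensus on $State_k[l_{kp}^r]$ is achieved in that round. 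The symmetric argument for $State_p[l_{kp}^r]$ gives the same bound, and independence of the two detections (as noted at the start of the proof of Theorem~\ref{theorem 1}) yields consensus on $l_{kp}^r$ by round $r + x_r + 1$.

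The main obstacle will be the distinct-risk-agents step: I must rule out a delay chain that revisits a risk agent or that routes through a faulty agent in a way that effectively ``resets'' the hiddenness. I would handle this by noting that once $a_s$ sends $State_k[l_{kp}^r]$ onward in round $r+1+s$, either its recipient is in $Nf^{r+1+s}$ (ending the chain via Case~2.2) or $a_s$'s own forwarding pattern is itself detectable by nonfaulty agents within one further round, so no risk agent can contribute two separate hidden hops to the same chain. Combining the chain-length bound with the base cases gives the stated bound $r+x_r+1$.
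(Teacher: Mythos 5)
Your proposal takes essentially the same route as the paper: both reduce the bound to the observation that the hidden-forwarding chain of Case~2.3.2 in Theorem~\ref{theorem 1} can only pass through the $x_r$ risk agents of round $r$, so the chain has length at most $x_r$ and the last agent is forced into option~1 or~2. The one step where your justification diverges from the paper's, and is not quite right, is the claim that an agent already in $F^r$ lies in the $lost$ set of \emph{every} good agent. Being faulty only means having omission failures with more than $t$ agents, which (since at most $t$ agents are non-good) guarantees that at least one good agent has it in $lost$, not all of them. The correct reason such an agent cannot serve as a hidden forwarder is the one the paper uses: by the punishment mechanism it observes $|lost| > t$ and decides $\parallel$ within at most two rounds, so it stops sending messages entirely, and routing the state to it is equivalent to Case~2.1 (the state dies there). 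With that substitution, your distinct-risk-agents chain argument coincides with the paper's proof.
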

\begin{proof}
    There are $t-x_r$ faulty agents in round $r$. Since the faulty agents before round $r$ do not send any messages in round $r+2$, it is equivalent to case 2.1 that $k$ sends messages to these faulty agents in round $r+1$. Then the total number of risk and faulty agents in case 2.3 can be reduced to $x_r$. Therefore, in case 2.3.2, if keeping choosing 3, there are no risk or faulty agents anymore up to round $r+x_r$ and reach a consensus on all link states of round $r$ between $i$ and $j$ in round $r+x_r+1$.
\end{proof}

\begin{lemma}[\textbf{Round Complexity}]\label{lemma 10}
    The link states $HS$ of the second clean round and all previous rounds can be reached a consensus among all nonfaulty agents at the latest in round $t+3$.
\end{lemma}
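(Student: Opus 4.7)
The plan is to combine a pigeonhole argument on the occurrence of clean rounds with the round-complexity bound of Corollary~\ref{corollary 8.1}. Let $r_1 < r_2$ denote the first and second clean rounds.

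First I would establish that $r_2 \le t+2$. By definition, a non-clean round is one in which at least one new faulty agent is detected, and since there are at most $t$ faulty agents in total, at most $t$ of the rounds $1,2,\dots,t+2$ can be non-clean. Hence at least two of those $t+2$ rounds are clean, so $r_2 \le t+2$.

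Next I would bound $x_r$ for every $r \le r_2$. Let $c_r$ be the number of clean rounds among $\{1,\dots,r\}$, so that $r - c_r$ non-clean rounds have occurred by the end of round $r$. Each such round promotes at least one risk agent into a faulty one, and the total number of risk plus faulty agents is at most $t$; hence $x_r \le t - (r - c_r)$. For $r \le r_2$ we have $c_r \le 2$, so $x_r \le t - r + 2$, giving $r + x_r + 1 \le t + 3$.

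By Corollary~\ref{corollary 8.1}, any two agents in $Nf^{r+x_r+1}$ reach consensus on the link states of round $r$ by round $r + x_r + 1$. Since the nonfaulty set only shrinks across rounds, every agent in $Nf^{t+3}$ also belongs to $Nf^{r+x_r+1}$ for all $r \le r_2$. Combined with the bound $r + x_r + 1 \le t+3$, this shows that all nonfaulty agents agree on the state of every link in every round $r \le r_2$ by round $t+3$, which is exactly the consensus on $HS$ asserted by the lemma.

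The main obstacle is the counting step that turns ``second clean round'' information into a tight uniform upper bound on $x_r$. One has to be careful that the inequality $f_r \ge r - c_r$ relies only on the definition of a non-clean round and on the monotonicity of the faulty set, and that Corollary~\ref{corollary 8.1} must be invoked uniformly for every $r \le r_2$, not just for $r = r_2$, because the lemma demands agreement on the states of all earlier rounds as well.
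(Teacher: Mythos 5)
Your proposal is correct and follows essentially the same route as the paper: count the non-clean rounds to get $x_y \le t-y+2$ at the second clean round and then invoke Corollary~\ref{corollary 8.1} to land within round $t+3$. The only difference is cosmetic --- the paper dispatches the ``all previous rounds'' part by appealing to the monotonicity implicit in Theorem~\ref{theorem 1} (smaller $r$ gives a smaller supremum), whereas you apply the bound $x_r \le t-r+2$ uniformly for every $r \le r_2$, which is a slightly more explicit rendering of the same argument.
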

\begin{proof}
    By theorem \ref{theorem 1}, the smaller of the round $r$, the smaller the supremum of the round in which the link states in round $r$ can be reached a consensus. Hence, we directly consider the second clean round. Suppose the second clean round is $y$. Then there are already at least $y-2$ faulty agents in round $y$. That is $x_y\leq t-y+2$. By corollary \ref{corollary 8.1}, the link states in round $y$ can be consensus in round $y+x_y+1$. Since $y+x_y+1\leq t+3$, the lemma holds.
\end{proof}

Then it is proved that the algorithm satisfies all the properties of uniform consensus with general omission failures.

\begin{lemma}\label{lemma 1.1}
    If $i$ is a nonfaulty agent, then $Type(HS_i^{r\sim t+3}[l_{kp}])=Msg\mbox{-}X$ when $Type(HS_i^r[l_{kp}])=Msg\mbox{-}X$ ($k,p\in N$).
\end{lemma}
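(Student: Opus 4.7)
The plan is to reduce the claim to the last-update step on line~45 of Algorithm~1, via an invariant that $NS_i$ is a monotone summary of every Msg-X entry that ever appears in $HS_i$. Specifically, I will establish the following invariant and maintain it throughout Algorithms~1 and 2:

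\smallskip
\emph{Invariant.} At the end of any round $s \leq t+3$, if $HS_i^r[l_{kp}]=Msg\mbox{-}X$, then $NS_i^s[l_{kp}]$ is of type Msg-X and the round recorded in its $t_A$ is some $m \leq r$; moreover, once $NS_i[l_{kp}]$ becomes Msg-X it stays Msg-X, and its recorded round is non-increasing in $s$.

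\smallskip
First I would verify the monotonicity half of the invariant by inspecting every branch in Algorithm~2 that writes $NS_i^r[l_{kp}]$. Line~11 already blocks any rewrite of a direct link once its local state is Msg-X; for indirect links, Cases~6--9 only overwrite the local state when the incoming one is either Msg-X or carries an earlier round, and Case~4 for direct links only updates when the report round is strictly smaller. So $NS_i[l_{kp}]$, once Msg-X, can only migrate its associated round earlier, never back to Msg-R, and never later. Next I would check that every write into $HS_i[l_{kp}]$ by \Call{AppendHS}{} is accompanied by a consistent write into $NS_i[l_{kp}]$: when $i$ itself detects the failure in Phase~1 (lines~12--14), both $NS_i^r$ and $HS_i^r$ are set to Msg-X with round $r$; when an indirect Msg-X report enters through Cases~7 or 9 (or a fresh report through Case~11), the state pushed into $HS_i$ is the same state that is either already stored in, or simultaneously pushed into, $NS_i$. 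In all such cases the round $m$ recorded in $NS_i$ is $\leq r$, giving the invariant.

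With the invariant in hand, the lemma follows immediately from \Call{LastUpdate}{}. At the start of round $t+3$'s computation phase, any row $r$ with $HS_i^r[l_{kp}]=Msg\mbox{-}X$ forces $NS_i^{t+3}[l_{kp}]$ to be Msg-X with some round $m^* \leq r$ by the invariant. The last-update step then rewrites every row from $m^*$ through $t+3$ of $HS_i[l_{kp}]$ to Msg-X, which in particular covers the range $r$ through $t+3$, yielding the desired $HS_i^{r \sim t+3}[l_{kp}]=Msg\mbox{-}X$. The hypothesis that $i$ is nonfaulty is used only to guarantee that $i$ does not exit early through line~31 and therefore actually reaches line~45 where \Call{LastUpdate}{} is invoked.

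The main obstacle is bookkeeping rather than depth: verifying the invariant requires walking through Cases~1--11 of Algorithm~2 and checking, for each case that touches $HS_i[l_{kp}]$, that the companion write to $NS_i[l_{kp}]$ either already holds or is performed in the same branch, and that the recorded round never drifts upward. The cases that need the most care are Case~4 (where a direct Msg-X report with $r''=r'-1$ actually overwrites $NS_i^r$) and Case~9 (where an indirect Msg-X report from an agent different from the current reporter may update $NS_i^r$), since these are the only situations in which the $NS$ round strictly decreases; one must check that the newly installed round is still $\leq$ every $r$ at which $HS_i^r[l_{kp}]$ already shows Msg-X, which follows because those earlier Msg-X writes were sourced from reports whose rounds were in turn $\geq$ the new one by the same case analysis.
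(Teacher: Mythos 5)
Your proof is correct and takes essentially the same route as the paper's: both reduce the lemma to the \textsc{LastUpdate} step in round $t+3$, which overwrites $HS_i[l_{kp}]$ with $Msg\mbox{-}X$ from the earliest failure round recorded in $NS_i^{t+3}[l_{kp}]$ through round $t+3$, together with the observation that the nonfaulty hypothesis is only needed so that $i$ actually reaches that step. The paper compresses the remaining work into the one-line assertion that faulty links never recover, whereas your $NS_i$-monotonicity invariant spells out the same bookkeeping case by case; this is a more careful rendering of the identical argument, not a different one.
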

\begin{proof}
    Link $l_{kp}$ can not recover after a fault occurs. So if $l_{kp}$ is a faulty link in round $r$, then its state must also be $Msg\mbox{-}X$ in subsequent rounds. Moreover, $HS$ also expands all $Msg\mbox{-}X$ states backwards in \Call{LastUpdate}{}.
\end{proof}

\begin{lemma}\label{lemma 1.2}
    If $i$ is a nonfaulty agent, then $Type(HS_i^{1\sim r-1}[l_{kp}])=Msg\mbox{-}R$ when $Type(HS_i^r[l_{kp}])=Msg\mbox{-}R$ ($k,p\in N$).
\end{lemma}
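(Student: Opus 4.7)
The plan is to argue by contradiction using Lemma \ref{lemma 1.1}, since the two lemmas are essentially contrapositives under the punishment mechanism that forbids a faulty link from ever recovering. Suppose $Type(HS_i^r[l_{kp}]) = Msg\mbox{-}R$ and yet there exists some round $r' \in \{1,\ldots,r-1\}$ with $Type(HS_i^{r'}[l_{kp}]) \neq Msg\mbox{-}R$. The only other possibilities recorded in $HS_i$ are $Msg\mbox{-}X$ and $Msg\mbox{-}O$ (the unknown state), so the proof splits into two subcases.

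The $Msg\mbox{-}X$ subcase collapses directly: if $Type(HS_i^{r'}[l_{kp}]) = Msg\mbox{-}X$, Lemma \ref{lemma 1.1} forces $Type(HS_i^{r''}[l_{kp}]) = Msg\mbox{-}X$ for every $r'' \in [r', t+3]$, and in particular at $r'' = r$, contradicting the assumption. For the $Msg\mbox{-}O$ subcase, I would observe that $HS_i^r[l_{kp}] = Msg\mbox{-}R$ requires some agent (namely $k$ or $p$) to have certified $l_{kp}$ correct in round $r$, which is only possible if the certifier actually exchanged a message with its counterpart that round. By the punishment rule in Section \ref{sec:Model}, the link must then also have been correct in every round $r'' \leq r$, and so both endpoints already hold $Msg\mbox{-}R$ entries for every such $r''$ in their local $NS$ histories. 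The message chain claims (Claims \ref{claim 1}--\ref{claim 2}) together with the propagation bound of Theorem \ref{theorem 1} then guarantee that those earlier certificates reach the nonfaulty $i$ through \textsc{VerifyAndUpdate} by round $t+3$; otherwise \textsc{VerifyMsgChain} would detect a gap and force $i$ to decide $\perp$ before $HS_i^r[l_{kp}]$ could be populated as $Msg\mbox{-}R$ at all.

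The main obstacle is this $Msg\mbox{-}O$ subcase. The per-round updates in \textsc{VerifyAndUpdate} and \textsc{AppendHS} never retroactively overwrite older blank slots, so ruling out $Msg\mbox{-}O$ demands a consistency argument: the certificate validating $HS_i^r[l_{kp}] = Msg\mbox{-}R$ can only have been delivered along with a compatible chain of earlier $NS_k$ or $NS_p$ messages that already populated $HS_i^{r'}[l_{kp}]$ with matching $Msg\mbox{-}R$ entries. Once this message chain consistency fact is formally in hand, the two subcases together deliver the lemma.
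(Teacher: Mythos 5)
Your overall structure is the same as the paper's: reduce the claim to showing that no round $r'<r$ can be $Msg\mbox{-}O$, and then dispatch the $Msg\mbox{-}X$ possibility by the monotonicity of Lemma~\ref{lemma 1.1} exactly as the paper does. The place where you stop short is precisely the step the paper actually proves. You correctly identify the $Msg\mbox{-}O$ subcase as the crux, but you justify it by appealing to Claims~\ref{claim 1}--\ref{claim 2}, Theorem~\ref{theorem 1}, and a hypothetical $\perp$-detection by \textsc{VerifyMsgChain}, and then concede that the needed ``message chain consistency fact'' still has to be ``formally in hand.'' Those tools do not deliver it: Theorem~\ref{theorem 1} bounds when nonfaulty agents agree on the round-$r$ states, and the claims constrain the internal shape of a single $NS_j^m$ message; neither says that $i$'s \emph{history} for rounds $1,\dots,r-1$ of this particular link is already populated.

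The paper closes this gap with a short direct argument you should adopt: since $HS_i^r[l_{kp}]$ is not $Msg\mbox{-}O$, some endpoint (say $p$) reported $State_p[l_{kp}^r]$, and there is a propagation path $\mathcal{C}_p^i(r,r')$ along which that report reached $i$. Because faulty links never recover under the punishment mechanism, every hop of that path must also have been correct in all earlier rounds, so the same path (run $r-r''$ rounds earlier) delivers $State_p[l_{kp}^{r''}]$ to $i$ for every $r''<r$; hence $Type(HS_i^{1\sim r-1}[l_{kp}])\neq Msg\mbox{-}O$, and Lemma~\ref{lemma 1.1} rules out $Msg\mbox{-}X$. Note the irreversibility that matters is that of the \emph{path from $p$ to $i$}, not merely of the link $l_{kp}$ itself, which is where your sketch was aimed. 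With that substitution your proof is complete and coincides with the paper's.
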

\begin{proof}
    Since $Type(HS_i^r[l_{kp}])\neq Msg\mbox{-}O$, there must be an agent in $k$ or $p$ (supposing $p$) that has reported $State_p[l_{kp}^r]$ in round $r+1$, and finally the state has been transmitted to $i$. We suppose that $i$ receives the state in round $r\overset{'}{}$. Then we have $\mathcal{C}_p^i(r,r\overset{'}{})$. Since link omission is irreversible, $\mathcal{C}_p^i(r,r\overset{'}{})$ must be nonfaulty from round 1 to round $r-1$. Hence, we have $State_p[l_{kp}^{1\sim r-1}]$ must eventually be received by $i$. That means $Type(HS_i^{1\sim r-1}[l_{kp}]) \neq Msg\mbox{-}O$. Combining lemma \ref{lemma 1.1}, it is easy to get $Type(HS_i^{1\sim r-1}[l_{kp}]) \neq Msg\mbox{-}X$. Thus the lemma holds.
\end{proof}

\begin{lemma}\label{lemma 1.3}
    If $i$ is a nonfaulty agent, then $Type(HS_i^{r\sim t+3}[l_{kp}])=Msg\mbox{-}O$ when $Type(HS_i^r[l_{kp}])=Msg\mbox{-}O$ ($k,p\in N$).
\end{lemma}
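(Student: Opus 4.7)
The plan is to prove the contrapositive: if $Type(HS_i^{r'}[l_{kp}]) \neq Msg\mbox{-}O$ for some $r' \in (r, t+3]$, then $Type(HS_i^r[l_{kp}]) \neq Msg\mbox{-}O$, which contradicts the hypothesis. When $Type(HS_i^{r'}[l_{kp}]) = Msg\mbox{-}R$, Lemma \ref{lemma 1.2} immediately yields $Type(HS_i^r[l_{kp}]) = Msg\mbox{-}R$, so I focus on the substantive case $Type(HS_i^{r'}[l_{kp}]) = Msg\mbox{-}X$.

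I would let $r^*$ denote the earliest round for which $HS_i[l_{kp}]$ was appended an \emph{originally received} $Msg\mbox{-}X$ tuple (rather than one produced by \Call{LastUpdate}{}'s forward expansion). If $r^* \leq r$, Lemma \ref{lemma 1.1} (equivalently \Call{LastUpdate}{}) forces $Type(HS_i^r[l_{kp}]) = Msg\mbox{-}X$, and we are done. So assume $r^* > r$. The round-$r^*$ tuple is generated by some $w \in \{k, p\}$ observing $l_{kp}^{r^*}$ and reaches $i$ along a relay chain $w = q_0 \to q_1 \to \cdots \to q_\ell = i$ in which the link $(q_{j-1}, q_j)$ is nonfaulty at round $r^* + j$, with $r^* + \ell \leq t+3$. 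By the non-healing property of the punishment mechanism, each link $(q_{j-1}, q_j)$ is also nonfaulty at the strictly earlier round $r + j$; hence the same chain, shifted $r^* - r$ rounds earlier, can carry $w$'s round-$r$ observation $State_w[l_{kp}^r]$ to $i$ by round $r + \ell < r^* + \ell \leq t+3$.

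I would then prove by induction on $j$ that $HS_{q_j}^r[l_{kp}]$ is populated by round at most $r + j$. Since $w$ places $State_w[l_{kp}^r]$ into $NS_w^r$ and emits it in round $r+1$, when $q_{j-1}$ first learns $l_{kp}^r$ at some round $s_{j-1} \leq r + j - 1$, this state is the latest $l_{kp}$-entry of $NS_{q_{j-1}}^{s_{j-1}}$: no $l_{kp}^\rho$ with $\rho > r$ could have arrived earlier, since any such $\rho$ needs propagation time $\rho - r$ greater than that of $l_{kp}^r$ along any chain from $w$. Because the send phase of round $s_{j-1} + 1$ strictly precedes $q_{j-1}$'s receive and compute phases in that same round, $q_{j-1}$ transmits precisely $l_{kp}^r$ to $q_j$, which appends it to $HS_{q_j}^r[l_{kp}]$. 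Applying the induction to $q_\ell = i$ yields $Type(HS_i^r[l_{kp}]) \neq Msg\mbox{-}O$, the required contradiction.

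The main obstacle is exactly the inductive step above: since $NS$ carries only the latest link state, it must be verified that round $r$ is not silently overwritten by a later round before being relayed to $q_j$. The resolution combines the one-hop-per-round propagation bound from $w$ (which prevents $l_{kp}^{r+1}$ from overtaking $l_{kp}^r$ along any chain) with the send-receive-compute ordering within each synchronous round (which ensures $q_{j-1}$ transmits its end-of-previous-round state before any fresh update can overwrite it).
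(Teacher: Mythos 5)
Your proof takes essentially the same route as the paper's: argue by contradiction that the relay chain $\mathcal{C}_w^i$ which delivered the later-round state is, by irreversibility of link faults, also intact at the corresponding earlier rounds, so $w$'s round-$r$ observation must reach $i$ as well, contradicting $Type(HS_i^r[l_{kp}])=Msg\mbox{-}O$. Your extra case split via Lemmas \ref{lemma 1.1} and \ref{lemma 1.2} and the induction guarding against the round-$r$ state being overwritten in $NS$ only make explicit what the paper's one-line chain argument leaves implicit.
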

\begin{proof}
    For a contradiction, let $Type(HS_i^{m_1}[l_{kp}])\neq Msg\mbox{-}O$ ($m_1 > r$) when $Type(HS_i^r[l_{kp}])=Msg\mbox{-}O$. Suppose that $i$ receives the state of $link_{kp}^{m_1}$ in round $m_2$ ($m_2 > m_1$). Let's suppose $i$ receives it from $k$. Then we must have $\mathcal{C}_k^i(m_1,m_2)$. Since link omission is irreversible, the message propagation path is also correct for round $r$, so that $i$ must receive $State_k[l_{kp}^r]$ and then $Type(HS_i^r[l_{kp}])\neq Msg\mbox{-}O$. Therefore, we have a contradiction here and the lemma holds.
\end{proof}

\begin{lemma}\label{lemma 2}
    A nonfaulty agent must have correct links with at least $n-t-1$ agents other than itself in a round.
\end{lemma}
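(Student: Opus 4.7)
The proof plan is essentially a direct unpacking of the definition of a nonfaulty agent combined with the symmetry of the link-state model. Because the statement is so close to a definitional consequence, I would keep the argument short and make the symmetry explicit rather than hunt for deeper structure.

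First, I would recall how the punishment mechanism reshapes the notion of an ``omission with $j$'' for agent $i$. By the rule stated in the failure-model section, if $i$ ever fails to receive from $j$ in some round (whether the fault originated at $i$'s receive side or at $j$'s send side), then in all subsequent rounds $i$ refuses to send to $j$ and refuses to receive from $j$, and symmetrically on $j$'s side. Hence the predicate ``$l_{ij}$ is a faulty link'' is symmetric in $i$ and $j$: $j$ is counted among the agents $i$ has an omission with exactly when $i$ is counted among the agents $j$ has an omission with. In particular, the number of faulty direct links incident to $i$ equals the number of agents with which $i$ has an omission failure under the converted link-state model.

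Next, I would invoke the classification of agents. By definition, a faulty agent is one whose number of omission-failure partners exceeds $t$; contrapositively, a nonfaulty agent (good or risk) has omission failures with at most $t$ other agents. Combined with the observation above, this says that at most $t$ of the $n-1$ direct links incident to a nonfaulty agent $i$ can be faulty, so at least $(n-1)-t = n-t-1$ of its direct links are correct, which is exactly the claim.

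The only part that needs care is making sure that the symmetric reformulation does not secretly push a risk agent $i$ over the faulty threshold: one might worry that a send omission originating at some $j \ne i$ inflates $i$'s own omission count beyond $t$. This is where I would lean on the observation that the agents with omissions form a set $W$ of size $t$, and under the symmetric link-state model the faulty direct links at $i$ are in bijection with the elements of $(W \cup O_i)\setminus\{i\}$ that share a faulty link with $i$; since we count only those $j$ for which $l_{ij}$ is faulty, this set coincides with $i$'s (symmetric) omission set, whose cardinality is capped at $t$ precisely because $i$ is nonfaulty. Thus the main (and only) obstacle is making the symmetry argument rigorous; once that is in place the bound $n-t-1$ follows by subtraction from $n-1$.
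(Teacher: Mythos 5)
Your proof is correct and takes essentially the same route as the paper, which simply observes that a nonfaulty agent $i$ has $|lost_i|\leq t$ (by the definition of a faulty agent as one with omission failures with more than $t$ agents) and subtracts from the $n-1$ direct links. Your additional discussion of the symmetry of the punishment mechanism makes explicit something the paper leaves implicit, but it does not change the substance of the argument.
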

\begin{proof}
    We can know that, for a nonfaulty agent $i$, $|lost_i|$ must be less than or equal to $t$. So it is easy to get that $i$ have correct links with at least $n-t-1$ agents.
\end{proof}

\begin{lemma}\label{lemma 3}
    A nonfaulty agent must have correct links with at least 2 good agents other than itself in a round.
\end{lemma}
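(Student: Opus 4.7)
The plan is to reduce everything to the count in Lemma~\ref{lemma 2}, then split on whether the nonfaulty agent $i$ is itself good or risk, since ``nonfaulty'' means exactly ``good or risk''. Throughout I would use two standing facts: the total number of non-good agents (risk plus faulty combined) is at most $t$, and the assumption $n>2t+1$ yields $n\geq 2t+2$.

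First I would handle the case where $i$ is good. The key observation is that the link between two good agents is always correct in every round, because neither endpoint ever omits, so the punishment mechanism never trips on such a link. Among the $n-1$ other agents, at most $t$ are non-good, so at least $n-1-t$ are good and therefore share a correct link with $i$. Since $n\geq 2t+2$, this lower bound is at least $t+1$, which is already $\geq 2$ for $t\geq 1$; for the corner case $t=0$ I would invoke $n\geq 3$ separately, which forces $n-1\geq 2$.

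Next I would handle the case where $i$ is a risk agent. Here $i$ itself consumes one unit of the non-good budget, so at most $t-1$ of the other $n-1$ agents are non-good, i.e., at least $n-t$ of them are good. By Lemma~\ref{lemma 2}, $i$ also has correct links with at least $n-t-1$ of the other agents. Applying inclusion/exclusion to these two subsets of the $n-1$ non-$i$ agents yields at least $(n-t)+(n-t-1)-(n-1)=n-2t$ good agents connected to $i$ by a correct link, and $n-2t\geq 2$ follows again from $n\geq 2t+2$.

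I do not expect a real obstacle. The only subtleties worth flagging are the careful bookkeeping of whether $i$ itself falls under the $t$-budget in the risk case (otherwise one only gets $n-2t-1$, which is insufficient), and the degenerate $t=0$ regime where Lemma~\ref{lemma 2} alone is not quite sharp and the standing assumption $n\geq 3$ has to be invoked directly.
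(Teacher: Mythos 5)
Your proposal is correct and follows essentially the same route as the paper: a case split on whether the nonfaulty agent is good or risk, with the good case settled by counting the other good agents (using $n>2t+1$ together with $n\geq 3$ for the $t=0$ corner) and the risk case settled by combining Lemma~\ref{lemma 2} with a count of the good agents among the remaining $n-1$. Your inclusion--exclusion bound $n-2t$ is just a repackaging of the paper's $(n-t-1)-(t-f-1)=n-2t+f\geq n-2t$, so no substantive difference.
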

\begin{proof}
    We analyze the nonfaulty agent $i$ from two aspects of good agent and risk agent.
    \begin{itemize}
    \item \textit{Case 1.} $i$ is a good agent. Then $i$ must have correct links with all other $n-t-1$ good agents. Since $n>2t+1$ and $n\geq 3$, there must be $n-t-1\geq 2$. 
    \item \textit{Case 2.} $i$ is a risk agent. Suppose that $i$ is nonfaulty in round $r$ and there are $f$ faulty agents in this round. Because it must remove faulty agents when computing the state of $i$, combining lemma \ref{lemma 2}, the risk agent $i$ needs to have correct links with at least $(n-t-1)-(t-f-1)=n-2t+f$ good agents in round $r$. Since $n-2t>1$, $n-2t+f>f+1$. Then $n-2t+f\geq 2$ always holds.
    \end{itemize}
    Therefore, the lemma holds.
\end{proof}

\begin{remark}
    For a faulty agent $f$ in round r, since it has faulty links with more than $t$ agents in round $r$, then it does not send messages to any agents after at most 2 rounds. Hence, we claim that in round $r$ and later, the faulty agent $f$ needs to be removed when computing the number of connections of other agents.
\end{remark}

\begin{lemma}\label{lemma 5}
    Supposing that the direct link state information of $j$ in round $r$ can be agreed by all nonfaulty agents in round $m$. If $i$ is a nonfaulty agent in round $m$ and agent $j$ is considered to be a uncertain agent in $HS_i^r$, then $j$ must be a faulty agent in $HS_i^{r+1}$.
\end{lemma}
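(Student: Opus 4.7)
The approach is to combine the non-recovery of faulty links (Lemma \ref{lemma 1.1}) with the hypothesis that by round $m$ agent $i$ holds the fully agreed direct-link status of $j$ for round $r$. The conclusion will follow from a careful count of $j$'s faulty direct links crossing the threshold $t$ between rounds $r$ and $r+1$.

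First, I would pin down what ``uncertain agent in $HS_i^r$'' forces. Because $i$ already has the fully agreed direct-link information of $j$ for round $r$, the uncertainty is not due to residual $Msg\mbox{-}O$ entries on $j$'s own direct links. It must therefore reflect that $j$ sits at the classification boundary of the $HS$ rule: after removing the explicitly faulty agents, $j$ has at most $t$ definitely-faulty direct links in round $r$ (so $HS_i^r$ does not yet label $j$ faulty), but it has at least one faulty direct link with an agent that itself has not yet been labeled faulty in $HS_i^r$ (otherwise $j$ would be either clearly good or clearly faulty rather than uncertain).

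Next, I would invoke the punishment mechanism: by Lemma \ref{lemma 1.1}, every direct link of $j$ that is faulty in round $r$ remains faulty in round $r+1$, so the count of known-faulty direct links of $j$ is monotone in the round index. The crucial step is to produce at least one additional faulty direct link in $HS_i^{r+1}$. Here I would track the borderline neighbors of $j$: by Theorem \ref{theorem 1} and the assumption that the information is agreed at round $m$, the round-$r+1$ states of these neighbors are also available to $i$, and each such neighbor that was merely risk-like in round $r$ either becomes explicitly faulty itself or inherits an additional faulty incident edge with $j$ through the cascade of the punishment rule. In either case, $j$'s count of known-faulty direct links in $HS_i^{r+1}$ strictly exceeds $t$.

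Finally, by the $HS$ classification criterion (fewer than $n-t-1$ correct direct links after excluding explicitly faulty agents), $j$ is labeled faulty in $HS_i^{r+1}$. The main obstacle is formalizing the ``uncertain'' boundary condition: one has to rule out the case in which $j$'s faulty count could remain below $t$ in round $r+1$, which requires carefully using that $j$'s underlying omission pattern does not recover (per the failure model) together with $n > 2t+1$ to force the strict increase in the faulty-link count.
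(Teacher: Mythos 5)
There is a genuine gap, and it starts with the interpretation of the hypotheses. You read ``agreed by all nonfaulty agents'' as implying that $i$ has no residual $Msg\mbox{-}O$ entries on $j$'s direct links, and you therefore recast ``uncertain agent'' as a borderline faulty-link count sitting just at the threshold $t$. In this paper agreement on a link state explicitly includes agreeing that the state is \emph{unknown} (see Case~2.1 in the proof of Theorem~\ref{theorem 1}, where all nonfaulty agents ``agree on the unknown-state''), and an uncertain agent is precisely one some of whose direct links are $Msg\mbox{-}O$ in $HS_i^r$ (such links are counted as correct, cf.\ the corollary following Lemma~\ref{lemma 7}). So the object you are analyzing is not the one the lemma is about. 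This matters because your crucial step --- that $j$'s count of known-faulty direct links must strictly increase past $t$ from round $r$ to round $r+1$ --- has no justification: Lemma~\ref{lemma 1.1} only gives monotonicity (non-decrease), and nothing in the model forces a borderline agent to acquire a new faulty link in the very next round. You acknowledge this yourself in your closing sentence; that is exactly the step that cannot be closed along your route.

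The paper's argument is entirely different and avoids any counting. It argues by contradiction on $j$'s status in $HS_i^{r+1}$: if $j$ were nonfaulty in round $r+1$, then by Lemma~\ref{lemma 3} it has correct links with at least two good agents in round $r+1$, so it sends $NS_j^r$ (its own round-$r$ direct-link detections) to them, and they relay it to all nonfaulty agents --- hence $i$ would know all of $j$'s round-$r$ direct-link states and $j$ could not be uncertain in $HS_i^r$. If $j$ were still uncertain in round $r+1$, then since there are $n-t>t$ good agents $j$ cannot have faulty links with all of them, so it again must deliver $NS_j^r$ to some good agent, with the same contradiction. The only remaining possibility is that $j$ is faulty in $HS_i^{r+1}$. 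If you want to salvage your write-up, replace the threshold-crossing argument with this propagation/contradiction argument; the non-recovery of links (Lemma~\ref{lemma 1.1}) plays no essential role here.
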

\begin{proof}
    The proof argument is by contradiction. Assume that $j$ is considered to be a nonfaulty agent or a uncertain agent in $HS_i^{r+1}$.
    \begin{itemize}
    \item \textit{Case 1.} $j$ is a nonfaulty agent in $HS_i^{r+1}$ when it is considered to be a uncertain agent in $HS_i^r$. $j$ must send $NS_j^r$ to at least 2 good agents in round $r+1$ (lemma \ref{lemma 3}). Then these good agents send the direct link states of $j$ to all nonfaulty agents. Hence, $j$ must be a certain agent in $HS_i^r$. A contradiction.
    \item \textit{Case 2.} $j$ is a uncertain agent in $HS_i^{r+1}$ when it is considered to be a uncertain agent in $HS_i^r$. It is easy to see that the link states between $j$ and good agents can not be unknown-state in round $r$ and $r+1$. Since the number of good agents $n-t$ must be greater than $t$, $j$ can not have faulty links with all good agents. Then it must send $NS_j^r$ to some good agents in round $r+1$. Equally, $j$ must be a certain agent in $HS_i^r$. A contradiction. 
    \end{itemize}
    Thus we reach contradictions in all cases, which proves the lemma.
\end{proof}

\begin{lemma}\label{lemma 6}
    If round $r$ is a clean round, the state of $l_{ij}^{r-1}$ can be reached a consensus by all nonfaulty agents in round $r+2$, where $i,j \in Nf^{r-1}$ and $j\neq i$.
\end{lemma}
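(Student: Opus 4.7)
The plan is to use the clean-round hypothesis together with Lemma~\ref{lemma 3} to flood both endpoints' views of $l_{ij}^{r-1}$ through the good-agent subgraph in two hops, and then reach any remaining nonfaulty risk agent with a third hop.

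First, since round $r$ is clean we have $F^r=F^{r-1}$ and hence $Nf^r=Nf^{r-1}$, so $i$ and $j$ are both still nonfaulty throughout round $r$. By the end of round $r-1$, Phase~1 of Algorithm~\ref{algorithm 2} has recorded $State_i[l_{ij}^{r-1}]$ inside $NS_i^{r-1}[l_{ij}]$ (either as a $Msg\mbox{-}R$ tuple carrying $random_j^{r-1}$ or as an $Msg\mbox{-}X$ tuple) and symmetrically for $j$. Applying Lemma~\ref{lemma 3} in round $r$, the nonfaulty $i$ has correct links with at least two good agents, so in round $r$ at least two good agents receive $NS_i^{r-1}$ from $i$; likewise for $j$. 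Collect these (at most four) seed good agents into $G_0$: each already holds the corresponding endpoint view by the end of round $r$.

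Second, in round $r+1$ every seed in $G_0$ forwards $NS^r$ along its good-good links. Because good agents never omit, every good-good link is permanently correct, so $NS^r$ from any seed reaches every good agent; hence at the end of round $r+1$ every good agent's $HS$ holds both $State_i[l_{ij}^{r-1}]$ and $State_j[l_{ij}^{r-1}]$. In round $r+2$, any agent $k$ that is still nonfaulty has, by Lemma~\ref{lemma 3}, correct links with at least two good agents; the $NS^{r+1}$ messages those good agents send in round $r+2$ carry both tuples, which $k$ installs into $HS_k^{r-1}[l_{ij}]$. Thus every nonfaulty agent holds the same combined record by the end of round $r+2$.

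The main obstacle is confirming that when the propagated tuples arrive at a third agent $k$, Algorithm~\ref{algorithm 2} installs them rather than triggering $\perp$. For the indirect link $l_{ij}$ at $k$, the very first incoming tuple faces $Msg\mbox{-}O$ locally and enters Case~\ref{case k} (the tuple is installed verbatim), while later duplicates fall into Cases~\ref{case f}--\ref{case i} with matching round and id and hence do not decide $\perp$; the random-number, message-source, and message-chain checks all pass on honestly generated forwards from good agents. A secondary subtlety is that ``consensus on the state of $l_{ij}^{r-1}$'' must be read as equality of the merged up-to-two-tuple record held in $HS^{r-1}[l_{ij}]$, which indeed follows once both endpoint reports have reached every nonfaulty agent.
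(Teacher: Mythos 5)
Your proof is correct and follows essentially the same route as the paper's: endpoints push their detection results to at least two good agents in round $r$ (via Lemma~\ref{lemma 3}), the good agents flood each other in round $r+1$, and every remaining nonfaulty agent picks the merged state up from its good neighbours in round $r+2$. The only cosmetic difference is that the paper splits into cases according to whether $i$ and $j$ are good or risk agents, whereas you treat all cases uniformly (and add the useful observation that the update rules of Algorithm~\ref{algorithm 2} install rather than reject the forwarded tuples).
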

\begin{proof}
    We can pay attention to the state of $l_{ij}^{r-1}$. Consider the following cases:
    \begin{itemize}
    \item \textit{Case 1.} $i$ and $j$ are good agents. By lemma \ref{lemma 3}, a risk agent must have correct links with some good agents. Hence, $i$ sends $State_i[l_{ij}^{r-1}]$ to all good agents and risk agents having correct links with $i$ in round $r$. And $j$ also does this. In round $r$ all good agents have two detection results of $l_{ij}^{r-1}$. Then after updating, they send the uniform state to all risk agents that have faulty links with $i$ and $j$ in round $r+1$.
    \item \textit{Case 2.} $i$ and $j$ are risk agents. $i$ and $j$ send their detection results of $l_{ij}^{r-1}$ to some good agents (denoted by $U$) and risk agents in round $r$. Then two results are sent to all good agents by the agents in $U$ in round $r+1$. So every good agent knows the uniform state of $l_{ij}^{r-1}$ in round $r+1$. Therefore, all nonfaulty agents reach a consensus on the state in round $r+2$.
    \item \textit{Case 3.} $i$ is a good agent and $j$ is a risk agent. Similarly, it is easy to get that all good agents have the uniform state of $l_{ij}^{r-1}$ in round $r+1$ by case 1 and case 2. So this is what we want.
    Thus the lemma holds.
    \end{itemize}
\end{proof}

\begin{lemma}\label{lemma 7}
    If round $r$ is a clean round, then in the $HS_i^{r-1}$ ($i\in Nf^{r+2}$), the state of link $l_{kp}$ ($k\in Nf^{r-1}$ and $p\in N$) can not be $Msg\mbox{-}O$.
\end{lemma}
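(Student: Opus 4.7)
The plan is to show that the $k$-side of the state of $l_{kp}^{r-1}$ reaches $i$'s history by round $r+2$, so at least one tuple of $HS_i^{r-1}[l_{kp}]$ is non-$Msg\mbox{-}O$. Since $r$ is clean, $k \in Nf^{r-1}$ implies $k \in Nf^r$ (the faulty set does not grow in a clean round), and since omissions never recover, $i \in Nf^{r+2}$ implies $i$ is also nonfaulty in rounds $r-1, r, r+1$. Thus both $k$ and $i$ behave like ordinary nonfaulty agents throughout the critical window, and the choice of $p$ (nonfaulty or faulty) plays no role in the argument.

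The relay I would construct routes entirely through good agents. In round $r$, $k$ sends $NS_k^{r-1}$, which records $State_k[l_{kp}^{r-1}]$ tagged with round label $r-1$, to every $j \notin lost_k$. By Lemma 3, $k$ has correct links with at least two good agents in round $r$; fix one and call it $g_1$. Then $g_1$ appends the round-$(r-1)$ state into $HS_{g_1}^{r-1}[l_{kp}]$ and stores it in $NS_{g_1}^r[l_{kp}]$. In round $r+1$, $g_1$ forwards $NS_{g_1}^r$ to every other good agent, since pairs of good agents always maintain correct links. By Lemma 3 applied to $i \in Nf^{r+2}$, there is a good agent $g^*$ with a correct link to $i$ in round $r+2$; this $g^*$ has received $NS_{g_1}^r$ in round $r+1$, so $NS_{g^*}^{r+1}[l_{kp}]$ is non-$Msg\mbox{-}O$ and bears a round label $m \in [r-1, r+1]$. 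In round $r+2$, $g^*$ delivers $NS_{g^*}^{r+1}$ to $i$, and $i$ sets $HS_i^m[l_{kp}]$ to a non-$Msg\mbox{-}O$ state.

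To close, I would invoke the contrapositive of Lemma \ref{lemma 1.3} with starting round $r-1$: if $HS_i^{m}[l_{kp}]$ is non-$Msg\mbox{-}O$ for some $m \in [r-1, t+3]$, then $HS_i^{r-1}[l_{kp}]$ is also non-$Msg\mbox{-}O$. Combined with the previous paragraph, this yields the lemma.

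The main obstacle is that $NS$ only stores the latest state per link, so as the round-$(r-1)$ information percolates along $k \to g_1 \to g^* \to i$, it may be overwritten by a newer state that $g^*$ happens to receive from another source, and $i$ may end up with $HS_i^m[l_{kp}] \neq Msg\mbox{-}O$ only for some $m > r-1$. Lemma \ref{lemma 1.3}, whose proof leverages the irreversibility of link failures to show that any reception chain valid at round $m$ was also valid at round $r-1$, is precisely what bridges that gap. A secondary concern is ensuring the relay is not rejected by the verifications in Algorithm \ref{algorithm 2}, which follows because good agents forward mutually consistent states and random numbers, so the message-chain and source-verification claims are not violated.
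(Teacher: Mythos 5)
Your proposal is correct and follows essentially the same route as the paper: the paper likewise discards $p$'s contribution and argues that $k$'s detection $State_k[l_{kp}^{r-1}]$ reaches every nonfaulty agent by round $r+2$, except that it obtains the good-agent relay by directly citing Lemma~\ref{lemma 6} rather than rebuilding it from Lemma~\ref{lemma 3}. Your explicit appeal to the contrapositive of Lemma~\ref{lemma 1.3} to handle the fact that $NS$ may carry a later round label is a detail the paper leaves implicit, and it is a sound addition.
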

\begin{proof}
    Assuming, without influence, that the messages of $p$ have no effect on the state of $l_{kp}$. Since $k$ is a nonfaulty agent, by lemma \ref{lemma 6}, $State_k[l_{kp}^{r-1}]$ is received by all nonfaulty agents in round $r+2$. Hence $i$ must know the state of $l_{kp}^{r-1}$. The lemma holds.
\end{proof}

\begin{corollary}
    If round $r$ is a reliable round and the total number of rounds is greater than $r+3$, there are not uncertain agents in round $r$.
\end{corollary}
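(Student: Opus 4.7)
The plan is to derive the corollary from Lemma~\ref{lemma 7} by a case split on whether an agent is truly nonfaulty in round $r$ or not. Since $r$ is reliable, round $r+1$ is by definition clean, and the hypothesis that the total number of rounds exceeds $r+3$ guarantees that round $r+3$ exists, so $Nf^{r+3}$ is nonempty (we always have $n-t\geq t+2$ nonfaulty agents). First I would instantiate Lemma~\ref{lemma 7} with the clean round $r+1$: for every $i\in Nf^{r+3}$ and every link $l_{kp}$ with $k\in Nf^r$ and $p\in N$, the state $HS_i^r[l_{kp}]$ is not $Msg\mbox{-}O$.

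With that in hand I would fix an arbitrary $i\in Nf^{r+3}$ and classify an arbitrary $j\in N$ in two cases. If $j\in Nf^r$, setting $k=j$ in the statement above shows that every direct link of $j$ has a known state in $HS_i^r$, so by Lemma~\ref{lemma 2} at least $n-t-1$ of them are of type $Msg\mbox{-}R$ and $i$ classifies $j$ unambiguously as nonfaulty. If $j\in F^r$, then for every $p\in Nf^r$ Lemma~\ref{lemma 7} (applied with $k=p$) gives that $HS_i^r[l_{jp}]$ is known. A short counting argument on $j$'s $\geq t+1$ faulty direct links shows that at most $|F^r|-1\leq t-1$ can go to another explicitly faulty agent, so at least two of them end in $Nf^r$ and are visibly $Msg\mbox{-}X$; combined with the algorithm's convention that explicitly faulty agents are removed when computing $j$'s state, only $j$'s links into $Nf^r$ enter the count, they are all known, and the number of correct ones is at most $n-1-|lost_j|\leq n-t-2<n-t-1$, so $j$ is classified unambiguously as faulty. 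Either way $j$ is certain in the sense used in Lemma~\ref{lemma 5}, and since $j$ was arbitrary, no uncertain agents remain in round $r$.

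The main obstacle is that Lemma~\ref{lemma 7} gives no direct information about links between two agents of $F^r$: for $j,p\in F^r$ the state $HS_i^r[l_{jp}]$ may well remain $Msg\mbox{-}O$. The way around this is exactly the ordering of the case analysis above. One first certifies every agent of $Nf^r$ as nonfaulty using only links that Lemma~\ref{lemma 7} controls, which simultaneously identifies $F^r$ from $i$'s viewpoint; only then does one turn to agents of $F^r$, where the algorithm's removal of explicitly faulty agents suppresses precisely the uncovered $F^r$-internal links. Once that set-up is in place, the two classification thresholds follow immediately from Lemma~\ref{lemma 2} and the elementary bound $n-1-|lost_j|\leq n-t-2$ that holds for any $j\in F^r$.
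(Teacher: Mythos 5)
Your route is genuinely different from the paper's. You argue directly from Lemma~\ref{lemma 7} (instantiated at the clean round $r+1$) that every agent's classification in $HS_i^r$ is unambiguous; the paper instead argues by contradiction: an uncertain agent in round $r$ would, by Lemma~\ref{lemma 5} (whose hypothesis is supplied by Lemma~\ref{lemma 6}), have to be classified as faulty in $HS^{r+1}$ while being counted as nonfaulty in $HS^r$ (unknown-state links count as correct), so a new faulty agent would appear in round $r+1$, contradicting that $r+1$ is clean and hence that $r$ is reliable. Your first case ($j\in Nf^r$) is sound and is a legitimate alternative use of Lemma~\ref{lemma 7} together with Lemma~\ref{lemma 2}.

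The gap is in your second case. For $j\in F^r$ you need the links from $j$ into the rest of $F^r$ --- exactly the links Lemma~\ref{lemma 7} does not control --- to be excluded from the count, and you justify the exclusion by asserting that certifying $Nf^r$ ``simultaneously identifies $F^r$ from $i$'s viewpoint.'' It does not: knowing which agents are certainly nonfaulty tells $i$ nothing about whether each remaining agent is certainly faulty or merely uncertain, and that distinction is precisely what the corollary asserts. If the other members of $F^r$ are not already explicitly faulty in $HS_i^r$, their links to $j$ remain $Msg\mbox{-}O$ and are counted as correct; only the $\geq 2$ faulty links from $j$ into $Nf^r$ are guaranteed visible as $Msg\mbox{-}X$, so the correct-or-unknown count for $j$ can be as large as $(n-t-2)+(|F^r|-1)\leq n-3$, which meets the nonfaulty threshold $n-t-1$ whenever $t\geq 2$. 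Then $j$'s classification genuinely depends on how the unknown links are resolved, i.e.\ $j$ is uncertain in exactly the sense you are trying to rule out, and the circularity (each $F^r$ agent's removal presupposes that the others have already been removed) is not broken by your two-stage ordering. The paper's contradiction argument avoids this entirely because it never needs to classify the agents of $F^r$ link by link.
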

\begin{proof}
    For a contradiction, let $i$ be a uncertain agent in round $r$, by lemma \ref{lemma 5}, we have two cases. Both case 1 and case 2, by lemma \ref{lemma 6}, there are contradictions to the assumption. Hence, $i$ must be a faulty agent in $HS^{r+1}$. The unknown-state link is regarded as correct link so that $i$ is regarded as a nonfaulty agent in $HS^r$. Then it is a contradiction to the assumption that $r$ is a reliable round. Thus the lemma holds.
\end{proof}

\begin{lemma}\label{lemma 9}
    There is at least one clean round in $t+1$ rounds.
\end{lemma}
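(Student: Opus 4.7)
The plan is to use a simple pigeonhole / counting argument based on how the set $F^r$ of faulty agents evolves across rounds. First I would recall from the model section that the total number of agents who may experience omission failures is bounded by $t$, so $|F^r|\leq t$ in every round $r$. I would also unpack the definition of \emph{clean round} the paper borrows from \cite{2016PODC}: round $r$ is clean exactly when $|F^r|=|F^{r-1}|$, i.e., no new faulty agent is created in round $r$. Equivalently, if round $r$ is \emph{not} clean, then $|F^r|\geq |F^{r-1}|+1$.

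Next I would proceed by contradiction. Fix any window of $t+1$ consecutive rounds, say rounds $r_0+1,\ r_0+2,\ \ldots,\ r_0+t+1$, and assume that none of them is clean. Applying the inequality above round by round along this window gives
\[
|F^{r_0+t+1}|\ \geq\ |F^{r_0}|+(t+1)\ \geq\ t+1.
\]
This contradicts the global bound $|F^{r_0+t+1}|\leq t$ derived from the failure model, so at least one of the $t+1$ rounds must be clean, proving the lemma.

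The argument is essentially a one-line pigeonhole, so I do not expect any technical obstacle in the calculation itself. The only care required is to align the bookkeeping with how the paper uses $F^r$ and $F^{\Delta r}$: in particular, I should make sure that once an agent is classified as faulty (having omission failures with more than $t$ partners), it remains in $F^{r'}$ for every $r'\geq r$, so that $|F^r|$ is monotonically nondecreasing and the strict increase on non-clean rounds accumulates cleanly. Once that monotonicity is noted, the proof is complete.
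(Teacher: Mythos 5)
Your argument is correct and is essentially identical to the paper's own proof: both assume no clean round exists in the $t+1$-round window, conclude that a new faulty agent must appear in each round so that at least $t+1$ faulty agents accumulate, and derive a contradiction with the bound of $t$ faulty agents. Your version just makes the monotonicity bookkeeping for $|F^r|$ slightly more explicit.
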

\begin{proof}
    Suppose, for a contradiction, that there is no clean round in $t+1$ rounds. Then there must be new faulty agents added in each round. So there are at least $t+1$ faulty agents in $t+1$ rounds. This contradicts to the assumption that there are at most $t$ faulty agents.
\end{proof}

\begin{corollary} \label{corollary 9.1}
    There are at least two clean rounds in $t+2$ rounds.
\end{corollary}

\begin{corollary}\label{corollary 9.2}
    In $t+2$ rounds, there must be one reliable round $r$ in which at most one new faulty agent is detected.     
\end{corollary}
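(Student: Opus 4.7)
The plan is to argue by contradiction, combining Corollary~\ref{corollary 9.1} with the global bound that at most $t$ agents ever become faulty in an execution. Assume, toward a contradiction, that every reliable round in the window of $t+2$ consecutive rounds has at least two newly detected faulty agents.

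Let $c_1<c_2<\cdots<c_j$ enumerate the clean rounds in the window. By Corollary~\ref{corollary 9.1} we have $j\geq 2$. Recall that the reliable rounds are those of the form $c_i-1$: when $c_1\geq 2$ all $j$ of them lie in the window, whereas when $c_1=1$ the special rule excludes $c_1-1=0$ and leaves only $j-1$ reliable rounds. Let $R$ denote the number of reliable rounds in the window, so $R\geq j-1\geq 1$.

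Under the contradiction hypothesis each reliable round is itself non-clean, since at least two new faulty agents are detected in it. By definition, every remaining non-clean round in the window contributes at least one new faulty agent. The total number of non-clean rounds in the window is $t+2-j$, of which $R$ are reliable, so the total number of newly detected faulty agents across the window is at least
\[
2R + (t+2-j-R) \;=\; t+2-j+R \;\geq\; t+2-j+(j-1) \;=\; t+1,
\]
which contradicts the bound that at most $t$ agents ever become faulty in the execution.

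The subtle point to verify is the edge case $c_1=1$, where only $j-1$ reliable rounds lie in the window and the counting bound is tight, giving exactly $t+1$, just barely exceeding $t$. I would also double-check that $R\leq t+2-j$, so that the reliable rounds actually fit among the non-clean rounds of the window; should this fail, some reliable round would coincide with a clean round and already contradict the hypothesis of $\geq 2$ new faulty agents on reliable rounds. Once these sanity checks are in place, the proof is essentially a one-line double-count: under the hypothesis every reliable round is counted with weight at least two and every other non-clean round with weight at least one, which together exceeds the global budget of $t$ new faulty agents.
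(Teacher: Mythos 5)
Your proof is correct and follows essentially the same route as the paper's: a proof by contradiction that double-counts newly detected faulty agents over the $t+2$-round window under the hypothesis that every reliable round detects at least two, with the same case split on whether round~1 is clean (which you fold into the bound $R \geq j-1$). The only cosmetic difference is that you cash out the contradiction as exceeding the global budget of $t$ faulty agents, whereas the paper cashes it out as forcing additional clean rounds beyond the assumed count; both are the same pigeonhole computation.
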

\begin{proof}
    Suppose that there are $a$ clean rounds in $t+2$ rounds. We prove the lemma from two cases:
    \begin{itemize}
    \item \textit{Case 1.} All clean rounds are greater than round 1. And for a contradiction, two new faulty agents are detected in each reliable round. Then there are $2a$ faulty agents and there are still $t-2a$ faulty agents remaining in $t+2-2a$ rounds. It is easy to see that $t+2-2a>t-2a$. Therefore, there must be clean rounds in the remaining $t+2-2a$ rounds. This contradicts to the assumption that there are $a$ clean rounds in $t+2$ rounds.
    \item \textit{Case 2.} Round 1 is a clean round. And for a contradiction, two new faulty agents are detected in each reliable round. Then there are $2(a-1)$ faulty agents and there are still $t-2a+2$ faulty agents remaining in $t+2-2a+1$ rounds. Since $t+2-2a+1>t-2a+2$, there must be clean rounds in the remaining $t+2-2a+1$ rounds. A contradiction.
    \end{itemize}
    Thus we reach a contradiction in every case, which proves the lemma.
\end{proof}

\begin{lemma}\label{lemma 11}
    In round $t+3$, if a faulty agent $i\in F^{\Delta t+3}$ can receive messages from at least one good agent, the link states of the second clean round and all previous rounds can also be reached a consensus among $i$ and all nonfaulty agents.
\end{lemma}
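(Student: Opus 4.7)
The plan is to extend the nonfaulty consensus guaranteed by Lemma~\ref{lemma 10} so that it also includes the newly-faulty agent $i$, by exploiting the single surviving good-agent channel. Let $g$ be a good agent from which $i$ receives a message in round $t+3$ (guaranteed by hypothesis). By Algorithm~\ref{algorithm 1} line~18, $g$ sends $\langle NS_g^{t+2}, (q_l(g))_{l\neq i}, (b_l(g))_{l\neq i}, random_g^{t+3}\rangle$, so in phase~3 of round $t+3$ agent $i$ processes $NS_g^{t+2}$ through \textsc{VerifyAndUpdate}. I would show that after this processing, $HS_i[l_{kp}^r]$ for every link $l_{kp}$ and every round $r$ up to the second clean round $y$ coincides with the common nonfaulty value $HS_j[l_{kp}^r]$ guaranteed by Lemma~\ref{lemma 10}.

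First I would observe that, because $i \in F^{\Delta t+3}$, agent $i$ was nonfaulty through round $t+2$; consequently $NS_i^{t+2}$ and $HS_i$ up to round $t+2$ were maintained by exactly the same verify-and-update rules as any nonfaulty agent, and they satisfy Claims~1--7 as well as the monotonicity Lemmas~\ref{lemma 1.1}--\ref{lemma 1.3}. I would then proceed link-by-link and round-by-round. For every $r \leq y$ for which the nonfaulty consensus assigns type $Msg\mbox{-}O$ to $l_{kp}^r$, the agreement is vacuous since $NS_g^{t+2}$ carries nothing for that entry. For entries of type $Msg\mbox{-}R$ or $Msg\mbox{-}X$, I would invoke Corollary~\ref{corollary 8.1} to show that the originating detection by $k$ or $p$ propagates to every nonfaulty agent in at most $r + x_r + 1 \leq t+3$ rounds, and by the message-chain claims it is already installed in $NS_g^{t+2}$ whenever $r + x_r + 1 \leq t+2$; the appropriate case of Algorithm~\ref{algorithm 2}, together with \textsc{AppendHS}, then deterministically writes the same tuple into $HS_i$ that it writes into $HS_j$ for any nonfaulty $j$.

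The main obstacle will be the borderline case $r + x_r + 1 = t+3$, i.e.\ when the nonfaulty consensus on round $r \leq y$ is only completed by round-$(t+3)$ processing, so that $NS_g^{t+2}$ might not yet record the round-$r$ entry explicitly while a nonfaulty $j$ obtains it from several $NS^{t+2}$'s at once. I plan to resolve this by exploiting that $y$ is a clean round: no new faulty agent appears at round $y$, so any round-$y$ observation that matters for consensus must be made by a nonfaulty agent, and by Lemma~\ref{lemma 6} together with Lemma~\ref{lemma 7} such observations reach every good agent (and hence $g$) along correct links by round $y+2 \leq t+3$. Any remaining discrepancy between $HS_i$ and $HS_j$ would concern links incident to an agent $k$ whose direct-link state changes at round $y$; by Lemma~\ref{lemma 5} such a $k$ is then forced faulty in $HS$ at round $y+1$, and by Claims~5--7 of the message-chain mechanism the evidence supporting that classification is already carried inside $NS_g^{t+2}$, so $i$'s update reproduces the same classification. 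This forces $HS_i[l_{kp}^r] = HS_j[l_{kp}^r]$ for every nonfaulty $j$ and every $r \leq y$, which is exactly the claim of the lemma.
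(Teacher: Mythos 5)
Your proposal correctly isolates the hard part of this lemma --- the borderline case in which the consensus on a round-$r$ state ($r\leq y$, $y$ the second clean round) is only completed in round $t+3$ itself, so that a single $NS_g^{t+2}$ need not yet contain it --- but the resolution you sketch for that case does not go through. First, the bound $y+2\leq t+3$ is unjustified: Corollary~\ref{corollary 9.1} only guarantees the second clean round satisfies $y\leq t+2$, so $y+2$ may be $t+4$. Second, Lemma~\ref{lemma 6} applied to the clean round $y$ gives agreement on the states of round $y-1$, not of round $y$, so it cannot close the gap for the round-$y$ states that the lemma explicitly covers. Third, and most importantly, the problematic scenario is not one of misclassified agents (so Lemma~\ref{lemma 5} and Claims~5--7 do not help): it is the scenario of Case~2.3.2 of Theorem~\ref{theorem 1}, where a detection result is relayed along a chain of risk agents choosing option~3 and first reaches \emph{some} good agents only in round $t+2$; those good agents then broadcast it in round $t+3$, so a nonfaulty $j$ gets it by combining several round-$(t+3)$ messages, while your chosen $g$ may not have it in $NS_g^{t+2}$ at all. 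Your argument gives no reason why $i$, hearing only from $g$, recovers that state.

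The paper closes exactly this hole with a counting argument you do not use. By Corollary~\ref{corollary 9.2} there is a reliable round between $y$ and $t+3$ with at most one newly detected faulty agent, hence a clean round $y'$ with $y<y'<t+3$ (strict, because $F^{\Delta t+3}\neq\emptyset$ rules out $y'=t+3$). In a clean round the remaining risk agents are forced to ``choose 2,'' so all good agents hold the uniform states of rounds $\leq y$ by round $y'+1$. If $y'<t+2$ this is already inside every good agent's $NS^{t+2}$, so $g$'s single message suffices; if $y'=t+2$, the count of accumulated faulty agents shows at most one risk agent survives to round $t+2$ and it must be $i$ itself, so no state relevant to rounds $\leq y$ is held back from $i$. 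You would need to add an argument of this kind (or otherwise show the late-propagating state is already known to $i$) before your link-by-link induction is sound.
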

\begin{proof}
    By corollary \ref{corollary 9.2}, from the second clean round $y$ to round $t+3$, there must be a reliable round (suppose the first is $r$) in which at most one new faulty agent is detected, because the total number of risk and faulty agents is $t-y+2$ and the total number of rounds is $t-y+4$. Suppose $r+1 = y\overset{'}{}$ which is a clean round. Since $F^{\Delta t+3}\neq \emptyset$, so $y<y\overset{'}{}<t+3$. Since no new faulty agents are detected in round $y\overset{'}{}$, risk agents can only choose 2 in round $y\overset{'}{}$ (See details in theorem \ref{theorem 1}). Hence, in round $y\overset{'}{}+1$, all good agents reach a consensus on the link states $HS$ of round $y$ and before rounds. We divide $y\overset{'}{}$ into two cases to prove as follows:
    \begin{itemize}
    \item \textit{Case 1.} $y<y\overset{'}{}<t+2$. Then we have $y\overset{'}{}+2\leq t+3$. In round $y\overset{'}{}+2$, all good agents send the latest and uniform link states of round $y$ and before rounds to all agents. Thus $i$ must reach a consensus.
    \item \textit{Case 2.} $y\overset{'}{}=t+2$. We assume that for the reliable rounds in which two or more faulty agents are detected, the faulty agents can be averaged to the next round and then the clean round can also be regarded as a normal round. Then it can be seen that the number of faulty agents keeps increasing in each round from round $y+1$ to round $t+1$. Thus at least $t+1-y$ faulty agents have been added until round $y\overset{'}{}$. Since there are $x_y$ ($\leq t-y+2$) risk agents in round $y$, then at most one risk agent remains in round $y\overset{'}{}$ and it must be $i$. Then it is easy to get see $i$ must reach a consensus in round $t+3$.
    \end{itemize}
    Thus, the lemma holds.
\end{proof}

\begin{theorem}\label{theorem 2}
    \Call{Consensus}{} solves uniform consensus if at most $t$ agents omit to send or receive messages, $n>2t+1$, suppose that all agents are honest.
\end{theorem}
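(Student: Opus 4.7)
The plan is to verify the three defining properties of uniform consensus (Termination, Validity, and Uniform Agreement) separately, leaning on the message passing bound in Theorem \ref{theorem 1} and Lemma \ref{lemma 10}, plus the clean-round structure given by Corollaries \ref{corollary 9.1} and \ref{corollary 9.2} and the faulty-agent rescue in Lemma \ref{lemma 11}. I would begin by observing that the algorithm has a fixed horizon of $t+4$ rounds: a faulty agent that detects $|lost_i|>t$ decides $\parallel$ at line 31 and halts, while every other agent either fills $consensus_i$ with a single value and decides in round $t+4$ or otherwise decides $\perp$. This immediately yields Termination for nonfaulty agents.

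For Validity, I would trace how a non-$\parallel$, non-$\perp$ decision is produced: lines 54--63 guarantee that $consensus_i$ is populated with $v_j$ for some $j\in D$ obtained by reconstructing $q_j$ from at least two correct shares, so the decided value is the initial preference of an agent in the decision set. Combined with the observation that $\parallel$ is never proposed, Validity follows directly from the construction of the tie-breaking rule on $C$ and $D$.

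The main burden, and the genuine obstacle, is Uniform Agreement. The strategy is to prove that any two agents that decide a value from $V$ necessarily pick the same one, by showing they agree on the decision round $m^*$, on the decision set $D$, and on the reconstructed tuple of $(v_j,proposal_j)_{j\in D}$. For two nonfaulty agents $i,i'$, Lemma \ref{lemma 10} gives consensus on $HS$ up through and including the second clean round by round $t+3$; since the first reliable round $m^*$ lies at or before the round preceding the second clean round (by Corollary \ref{corollary 9.1}), both $HS_i^{1..m^*}$ and $HS_{i'}^{1..m^*}$ coincide, so the computed $m^*$ and $D$ are identical, and hence the tie-breaking in lines 55--63 produces the same element. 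For a faulty agent $i\in F^{\Delta t+3}$ that still decides a value in $V$, the hypothesis that $i$ received messages in round $t+3$ at all (otherwise $|lost_i|>t$ forces $\parallel$) together with Lemma \ref{lemma 11} forces its $HS_i$ to agree with the nonfaulty ones on the relevant prefix, so its $consensus_i$ matches as well; in round $t+4$ any remaining inconsistency in $consensus_i$ triggers $\perp$, and the punishment-driven invariants of Algorithm \ref{algorithm 2} (message chain, random number, and round number verifications) rule out a situation in which two agents both decide non-$\perp$ but different values.

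The delicate step I expect to spend the most effort on is handling the faulty agents in $F^{\Delta t+3}$: one has to argue carefully that the message-chain verification in phase 2 of \Call{VerifyAndUpdate}{} together with Lemma \ref{lemma 11} forces their locally reconstructed $D$ and $(v_j,proposal_j)_{j\in D}$ to coincide with those of the nonfaulty agents, rather than only showing that nonfaulty agents agree among themselves. The honesty assumption is what makes this tractable, since it rules out the adversarial manipulations that the random-number machinery is otherwise designed to catch; under honesty the remaining work is essentially a case analysis on whether such a faulty agent receives enough shares to reconstruct, and whether it aligns on $m^*$, both of which reduce to applying Lemma \ref{lemma 11} to its local $HS_i$.
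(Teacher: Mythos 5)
Your proposal follows essentially the same route as the paper: the same three-property decomposition, with Termination and Validity read off the algorithm, and Uniform Agreement argued by showing nonfaulty agents share $m^*$, $D$, and the reconstructed values via Lemma \ref{lemma 10} and Corollary \ref{corollary 9.1}, with Lemma \ref{lemma 11} rescuing the agents in $F^{\Delta t+3}$. The one point the paper handles that your sketch glosses over is that some \emph{nonfaulty} (risk) agents may themselves fail to reconstruct all shares and end round $t+3$ with $consensus_i=\emptyset$; the paper's $Nf^{t+4}_1/Nf^{t+4}_2$ split shows these agents obtain the unique value $cons$ through the round-$(t+4)$ exchange of $consensus$ sets, whereas you treat that last round only as a consistency check.
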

\begin{proof}
    Since $n>2t+1$, it is easy to see that no inconsistency is detected.
    
    \textbf{Termination:}  From Algorithm \ref{algorithm 1}, nonfaulty agents must decide in round $t+4$ and faulty agents decide before round $t+4$.
    
    \textbf{Validity:} Since no inconsistency is detected, all agents make decisions different from $\perp$. For agent $i$, if $i$ decides a value $decision_i$, $decision_i$ must be the initial preference of an agent in $decision \ set \ D$. Since $D$ depends on $HS_i^{m\overset{*}{}}$, it must have $D\subseteq N$. Therefore, $decision_i$ satisfies the validity property. If $i$ decides $\parallel$, which means $i$ has no decision and $\parallel$ does not affect the final consensus outcome. Thus it also conforms to the validity.
    
    \textbf{Uniform Agreement:} We prove this from the following cases:
    \begin{itemize}
    \item \textit{Case 1.} Agent $i$ and $j\in Nf^{t+4}$. By corollary \ref{corollary 9.1}, there must be a $decision \ round \ m^*$ in $t+3$ rounds. And by lemma \ref{lemma 10}, we have $HS_i^{1\sim m^*+1}=HS_j^{1\sim m^*+1}$. Then $D_i=D_j=D^*$. Since the pieces of preferences and proposals of all the agents in $D^*$ must be saved by at least 2 good agents (lemma \ref{lemma 3}), all good agents and some risk agents can restore all initial values and proposals of the agents in $D^*$ in round $t+3$. We denote these agents by $Nf^{t+4}_1$ and $Nf^{t+4}_2 = Nf^{t+4}\setminus Nf^{t+4}_1$. Thus all agents in $Nf^{t+4}_1$ have the same set $C$ and give a unified $consensus$ set containing one value. That is if agent $u$ and $v \in Nf^{t+4}_1$, then there must be $consensus_u=consensus_v=\{cons\}$ and $|consensus_u|=|consensus_v|=1$ in round $t+3$. And if agent $w\in Nf^{t+4}_2$, $consensus_w=\emptyset$ in round $t+3$. 
    \item \textit{Case 2.} We analyze the agents in $F^{t+4}$. It is easy to see that $F^{t+4}=F^{t+2}\cup F^{\Delta t+3} \cup F^{\Delta t+4}$.
    \begin{itemize}
    \item \textit{Case 2.1.} $i\in F^{t+2}$. $i$ must decide $\parallel$ at the latest in the receive phase of round $t+3$.
    \item \textit{Case 2.2.} $i\in F^{\Delta t+3}$. Suppose that $F^{\Delta t+3}_1$ denotes the set of faulty agents that can receive the messages from some good agents in round $t+3$ and send messages in round $t+4$. Then $F^{\Delta t+3}_2 = F^{\Delta t+3}\setminus F^{\Delta t+3}_1$. It is easy to see that the agents in $F^{\Delta t+3}_2$ definitely do not send messages in round $t+4$ and they must decide $\parallel$ in round $t+3$. If $i\in F^{\Delta t+3}_1$, by lemma \ref{lemma 11}, $D_i$ must be same as $D^*$ of good agents in case 1. Thus if $i$ can restore all initial preferences and proposals in $D_i$, it must have $consensus_i=\{cons\}$ and $|consensus_i|=1$ in round $t+3$. Otherwise, $consensus_i=\emptyset$ in round $t+3$.
    \item \textit{Case 2.3.} $i\in F^{\Delta t+4}$. Since $i$ is a nonfaulty agent in round $t+3$, $consensus_i$ has the same two possible states in round $t+3$ as in case 1. Suppose $F^{\Delta t+4} = F^{\Delta t+4}_1 \cup F^{\Delta t+4}_2$. $F^{\Delta t+4}_1$ denotes the set of faulty agents that can not detect faulty by itself in the receive phase of round $t+4$. $F^{\Delta t+4}_2$ denotes the set of faulty agents that can detect that they become faulty agents in the receive phase of round $t+4$. Therefore, the agents in $F^{\Delta t+4}_1$ decide $cons$ by $consensus_i$ and the agents in $F^{\Delta t+4}_2$ decide $\parallel$ in round $t+4$.
    \end{itemize}
    \end{itemize}
    In summary, $consensus$ set only has two types in round $t+3$ and round $t+4$: $\{cons\}$ and $\emptyset$. The agents in $Nf^{t+4}_1 \cup Nf^{t+4}_2 \cup F^{\Delta t+4}_1$ decide $cons$ in round $t+4$, the agents in $F^{t+2} \cup F^{\Delta t+3}$ decide $\parallel$ before round $t+4$, and the agents in $F^{\Delta t+4}_2$ decide $\parallel$ in round $t+4$. Thus Uniform Agreement holds.
\end{proof}

To achieve the Nash equilibrium, we make some appropriate assumptions about initial preferences and failure patterns. Failure pattern represents a set of failures that occur during an execution of the consensus protocol \cite{2016PODC}. Specifically, we assume that initial preferences and failure patterns are $blind$.

\begin{definition}\label{definition 1}
     The blind initial values mean that each agent cannot guess the preferences of other agents and the probability of its own preference becoming the consensus cannot be improved by trusting others. 
\end{definition}

By definition \ref{definition 1}, we can get that if an agent wants to improve its own utility, it can only rely on itself, for example, increasing the probability of entering the $decision \ set$ and reducing the number of agents in the $decision \ set$ and so on.

\begin{definition}\label{definition 2}
    The blind failure patterns mean that before $t$ faulty agents appear, an agent cannot guess the link states in the following rounds. Then we have that
    \begin{itemize}
    \item[1.] if agent $i$ does not know the link states of round $m$ in round $r$ and $j$ is a nonfaulty agent in round $r$, then

    $P(i \in F^{\Delta m} \ | \ link_{ij} \ is \ faulty) = P(j \in F^{\Delta m} \ | \ link_{ij} \ is \ faulty) \leq \alpha$;
    \item[2.] for round $m_1$ and $m_2$, if $m_1<m_2$ and $i$ does not know the link states of round $m_2$, then
    
    $P(v_i \ becomes \ consensus \ | \ m_1 \ is \ the \ decision \ round) = P(v_i \ becomes \ consensus \ | \ m_2 \ is \ the \ decision \ round)$;
    \item[3.] for $t+1$ rounds, if the link states of each round in $t+1$ rounds are unknown to agent $i$, then for a round $r$ in $t+1$ rounds,
    $P(r \ is \ a \ clean \ round) \geq \frac{1}{t+1}$.
    \end{itemize}
\end{definition}

\begin{theorem}
    If $n>2t+1$, and at most $t$ agents have omission failures at the same time, and agents prefer consensus, and failure patterns and initial preferences are blind, then $\Vec{\sigma}^{\Call{Consensus}{}}$ is a $Nash \ equilibrium$.
\end{theorem}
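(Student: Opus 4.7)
The plan is to show that for every agent $i$ and every alternative strategy $\sigma_i'$, $u_i(\sigma_i', \Vec{\sigma}_{\mbox{-}i}^{\Call{Consensus}{}}) \le u_i(\Vec{\sigma}^{\Call{Consensus}{}})$. As a baseline, Theorem \ref{theorem 2} guarantees that when every agent follows \Call{Consensus}{}, $\perp$ is never decided, so $i$ obtains $\beta_0$ when $v_i = cons$ and $\beta_1$ otherwise; since $\beta_2<\beta_1<\beta_0$, any deviation that causes some nonfaulty agent to decide $\perp$ strictly decreases $i$'s expected utility. I would partition the admissible unilateral deviations of $i$ into three families and rule each out in turn: (a) \emph{message-content deviations}, (b) \emph{message-delivery deviations}, and (c) \emph{local-randomness deviations}.

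For family (b), withholding or dropping messages only moves $i$ into the risk or faulty category. By Lemma \ref{lemma 11} and the uniform-agreement half of the proof of Theorem \ref{theorem 2}, the remaining nonfaulty agents still decide a common value $cons$, and if $i$ is pushed to $\parallel$ it can do no better than under the protocol. Moreover, by clauses 2 and 3 of Definition \ref{definition 2}, $i$ cannot bias the position of the clean round, and by clause 1 together with Definition \ref{definition 1} the probability that $v_i$ ends up in the final $D$ is unaffected. Hence every delivery deviation is weakly dominated.

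For family (c), the $(2,n)$ secret-sharing of $(v_j, proposal_j)$ hides every other agent's initial preference and proposal from $i$ until round $t+3$, by which point $i$'s own $proposal_i$, $X\mbox{-}random_i$, and $random_i$ are already irrevocably committed. By Definition \ref{definition 1} every admissible value of $proposal_i$ has the same a priori probability of electing $v_i$, so tampering with the proposal cannot help; tampering with the sealed random numbers either produces a provable contradiction (reducing to family (a)) or is statistically indistinguishable from honest play.

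Family (a) is the hardest step and will be the main obstacle. Here I would use the verification inside \Call{VerifyAndUpdate}{} together with the message-chain Claims 1--7 and Claims \ref{first round number verification}--\ref{last round number verification} to show that every forged $NS$ entry or fake random number clashes with an already-stored state or an already-sealed random number. By Lemma \ref{lemma 3} each nonfaulty agent is connected to at least two good agents by correct links, so any tampered state propagates to a good witness and triggers $\perp$. The subtle subcase is when $i$ selectively tampers with an indirect link in order to shift the $decision\ round$ to one whose $D$ happens to favour $v_i$: here one must combine the pair-sealing of $X\mbox{-}random$ with clause 2 of Definition \ref{definition 2} to conclude that any residual undetectable component of such tampering yields no expected gain. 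Putting the three families together then delivers the desired Nash equilibrium inequality.
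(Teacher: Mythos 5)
Your overall strategy --- exhaustive case analysis of unilateral deviations, showing each is either caught by the verification machinery (yielding $\perp$ and utility $\beta_2$) or yields no expected gain --- is the same skeleton the paper uses, and your three families roughly cover the paper's ten enumerated deviation types. However, there are two concrete gaps. First, you treat detection as essentially certain (``any tampered state propagates to a good witness and triggers $\perp$''), but the heart of the paper's proof is precisely that several deviations are \emph{not} always detected: an agent that should decide $\parallel$ because $|lost_i|>t$ can keep participating by guessing a message random number with probability $\frac{1}{n}$, and an agent lying that an indirect link is faulty can guess the relevant faulty random bit with probability $\frac{1}{2}$. Ruling these out requires explicit expected-utility comparisons --- the paper's inequalities of the form $u_i(\sigma_i,\sigma_{-i}) \leq \frac{1}{n}\cdot\frac{1}{n-t+1}\beta_0 + \cdots$ versus $u_i(\sigma^{\textsc{Consensus}}) \geq \frac{1}{t+1}\cdot\frac{1}{|D|}\beta_0 + \cdots$ --- and it is exactly here that the hypothesis $n>2t+1$ and clause 3 of Definition \ref{definition 2} (the $\frac{1}{t+1}$ lower bound on a round being clean) enter. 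Your sketch never performs these computations, and the phrase ``any residual undetectable component of such tampering yields no expected gain'' asserts the conclusion rather than proving it.

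Second, your family (b) misidentifies the hardest delivery-related deviation. The dangerous case is not an agent that withholds messages (which indeed only pushes it toward $\parallel$), but an agent that, upon learning it is faulty, \emph{refuses} to stop and instead forges continued participation; likewise, the paper's Type 6, Case 1.2 analysis of an agent falsely reporting its own direct link as faulty in what may be the decision round requires the $\alpha$-symmetry of clause 1 of Definition \ref{definition 2} ($P(i\in F^{\Delta m})=P(j\in F^{\Delta m})\leq\alpha$ conditioned on the link being reported faulty) to bound the gain from shrinking the decision set, culminating in a comparison of $\frac{(1-\alpha)(|D|-1+\alpha)}{(|D|-1)|D|}\beta_0$ against $\frac{1}{|D|}\beta_0$. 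Without these probabilistic bounds your argument establishes at best that detected deviations are punished, not that undetected ones are unprofitable, so the Nash equilibrium inequality does not yet follow.
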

\begin{proof}
    To prove $Nash \ equilibrium$, we need to show that it is impossible for each agent $i\in N$ to increase its utility $u_i$ with all possible deviations $\sigma_i$. That is proving that for each agent $i$, there must be
    \begin{equation}\label{equ 1}
        u_i(\sigma_i^{\Call{Consensus}{}},\sigma_{-i}^{\Call{Consensus}{}}) \geq u_i(\sigma_i, \sigma_{-i}^{\Call{Consensus}{}}).
    \end{equation}
    We use the same deduction method as in \cite{2016PODC}. Consider all the ways that $i$ can deviate from the protocol to affect the outcome as follows:
    \begin{itemize}
    \item[1.] $i$ generates a different value $v_i\overset{'}{} \neq v_i$ (or $proposal_i^{'} \neq proposal_i$) and sends $q_i\overset{'}{}(j)$ (or $b_i^{'}(j)$) to some agents $j \neq i$.
    \item[2.] $q_i(j)$ (or $b_i(j)$) sent by $i$ can not restore $v_i$ (or $proposal_i$).
    \item[3.] $i$ does not choose $random_i$ or $X\mbox{-}random_i$ or $proposal_i$ appropriately, such as not randomly.
    \item[4.] $i$ sends an incorrectly formatted message to $j \neq i$ in round $m$.
    \item[5.] if $|lost_i|>t$ in round $m$, $i$ does not decide $\parallel$, but continues to execute the following protocol.
    \item[6.] $i$ lies about the state of $l_{kp}$ in round $m$; that is, in round $m$, $i$ sends a state of $l_{kp}$ which is different from $NS_i^{m-1}[l_{kp}]$.
    \item[7.] $i$ sends an incorrect $random$ or $X\mbox{-}random$ of $l_{kp}$ to $j$ in round $m$.
    \item[8.] $i$ sends an incorrect $q_l(i)$ (or $b_l(i)$) to $j \neq l$ different from the $q_l(i)$ (or $b_l(i)$) that $i$ received from $l$ in round 1.
    \item[9.] $i$ sends an incorrect $consensus_i$ to $j \neq i$ in round $t+4$.
    \item[10.] $i$ pretends to crash in round $m$.
    \end{itemize}
    We consider these deviations one by one and prove that $i$ does not gain by any of deviations. That is, (\ref{equ 1}) holds if $i$ deviates from the protocol by these deviations on the list above.
    \begin{itemize}
    \item \textit{Type 1.} (\romannumeral1) If $i$ sends $q_i^{'}(j)$ to some agents, then either an inconsistency is detected because of secret restoring error, or $i$ does not gain. Specifically, if $i$ is the agent whose value is chosen, then $i$ is worse off if it lies than it doesn't, since some agents can not restore $v_i$, but they can if following the protocol; Then if $i$ is not the agent whose preference is chosen, then it does not affect the outcome. (\romannumeral2) $i$ sends $b_i^{'}(j)$. Then an inconsistency is detected if restoring polynomial error or generating different consensus values in the system. And if no inconsistency is detected, then either all agents that receive $b_i$ or $b_i\overset{'}{}$ are faulty or both $b_i$ and $b_i\overset{'}{}$ do not affect the final outcome. Since changing the proposal can not increase $i$'s utility, thus $i$ does not gain. Therefore, both (\romannumeral1) and (\romannumeral2), $i$ does at least as well if $i$ uses the strategy $\sigma_i^{\Call{Consensus}{}}$ as it deviates from the protocol according to the type 1. So (\ref{equ 1}) holds.
    
    \item \textit{Type 2.} It is easy to see that either an inconsistency is detected or no benefit because there is no increase in the probability that the value of $i$ becomes the consensus. Thus (\ref{equ 1}) holds.
    \item \textit{Type 3.} (\romannumeral1) Since other agents follow the protocol, it does not affect the final outcome because the two kinds of random numbers are only used for verification. (\romannumeral2) Since $i$ does not know the proposals of other agents in round 1, using different proposals can not improve the probability that the preference of $i$ becomes the consensus. Thus (\ref{equ 1}) holds.
    \item \textit{Type 4.} If $i$ sends an incorrectly formatted message to $j$, then either an inconsistency is detected by $j$ or it does not affect the outcome since $j$ omits to receive messages from $i$ in round $m$. Thus, $i$ does not gain, so (\ref{equ 1}) holds.
    
    \item \textit{Type 5.} Since $|lost_i|>t$, $i$ does not receive messages from at least $t+1$ agents in round $m$, that is $i$ has receiving omission failures with at least two good agents. 
    \begin{itemize}
    \item \textit{Case 1.} $i$ does not guess message random numbers in round $m+1$. Then by claim \ref{claim 1}, an inconsistency is detected.
    \item \textit{Case 2.} $i$ guesses the message random number in round $m+1$ and has correct links with the remaining $n-t-2$ agents, and these agents are all nonfaulty agents. Then by the claim \ref{claim 1}, $i$ can successfully send messages in round $m+1$ iff $i$ can guess a message random number $random_j^m$ from a nonfaulty agent $j$ and $i$ has no sending omission failures with $j$. That is the random guessing does not change the detection result of the state of $i$ by other agents in round $m$. Clearly the probability that $i$ can guess a random number is $\frac{1}{n}$.
    \begin{itemize}
    \item \textit{Case 2.1}
    If $i$ guesses some random numbers from agents $j$ and has sending omission failures with each agent $j$. Then it does not affect the outcome even if the random numbers are correctly guesses.
    \item \textit{Case 2.2}
    Suppose that $i$ guesses only one random number from the good agent $j$ in a round and $i$ does not have sending omission failures with $j$. If $i$ only guesses the message random number in round $m+1$, then we have that 
    $u_i(\sigma_i, \sigma_{-i}^{\Call{Consensus}{}}) \leq \frac{1}{n}P(the \ decision \ round \ is \ in \ m+1 \ rounds)\frac{1}{|D|}\beta_0 + \gamma_1\beta_1 + \gamma_2\beta_2$. Thus 
    \begin{equation}\label{equ 5.1}
        u_i(\sigma_i, \sigma_{-i}^{\Call{Consensus}{}}) \leq \frac{1}{n}\times\frac{1}{n-t+1}\beta_0 + \gamma_1\beta_1 + \gamma_2\beta_2,
    \end{equation}
    which means that $i$ only guess one random number in round $m+1$ and then $i$ must be in $decision \ set \ D$ ($|D|\geq |G|=n-t$). Since $i\in D$, there are at least $n-t+1$ agents in $D$. It is easy to see that if $i$ guesses random numbers in multiple rounds, the utility must be less than (\ref{equ 5.1}). If $i$ following the protocol, then $i$ must decide $\parallel$ in round $m$. Since $i$ has receiving omission failures in round $m$, the link states after round $m-1$ must be unknown to $i$. Thus by the definition \ref{definition 2} and the assumptions about omission failures, we can get that $P(round \ m-1 \ or \ m \ is \ a \ clean \ round)\geq \frac{1}{t+1}$. Then
    \begin{equation}\label{equ 5.2}
        u_i(\sigma^{\Call{Consensus}{}}) \geq \frac{1}{t+1}\times \frac{1}{|D|_{m-2}}\beta_0 + \gamma^c\beta_1.
    \end{equation}
    Since $n>2t+1$, the (\ref{equ 1}) must hold. 
    \item \textit{Case 2.3}
    If $i$ guesses more than one random numbers in round $m+1$, then the utility of $i$ must be less than (\ref{equ 5.1}). And $i$ does at least as well by following the protocol as the deviation because the guessing work does not affect the state of $i$ in round $m$. Specifically, either if $i$ is a nonfaulty agent detected by other agents, then (\ref{equ 5.2}) holds, or if $i$ is a faulty agent, then it does not affect the outcome even if $i$ guesses the random correctly. Thus (\ref{equ 1}) holds.
    \end{itemize}
    \item \textit{Case 3.} If $i$ has sending omission failures with the remaining $n-t-2$ agents, then either no benefit since $i$ is faulty in round $m$ detected by other agents, or $i$ does not gain if $i$ is nonfaulty because the utility of deviating from the protocol must be less than (\ref{equ 5.1}). 
    \end{itemize}
    In summary, either an inconsistency is detected by claim \ref{claim 1} if $i$ does not guess the message random numbers, or no benefit from guessing the message random numbers. Thus, yet again, (\ref{equ 1}) holds.
    
    \item \textit{Type 6.} By the proof of Type 5, it is easy to see that $i$ must be a nonfaulty agent detected by $i$ itself in round $m-1$. Since there is more than one state of a link, we partition this deviation into eight cases, and show that $i$ does at least as well by using $\sigma_i^{\Call{Consensus}{}}$ as it deviates from the protocol by these eight deviations.
    \begin{itemize}
    \item \textit{Case 1.} $k$ or $p=i$, such as $k=i$, and $Type(NS_i^{m-1}[l_{kp}^r])=Msg\mbox{-}R$ where $r\leq m-1$, and $i$ pretends $Type(State_i[l_{kp}^r])=Msg\mbox{-}X$ in round $m$. 
    \begin{itemize}
    \item \textit{Case 1.1.} $r<m-1$. Then $State_i[l_{kp}^r]$ must be sent in round $r+1$ in order to enable message chain mechanism to succeed. Since $r+1<m$, an inconsistency must be detected in round $m$.
    \item \textit{Case 1.2.} $r=m-1$ and $r$ is the $decision \ round \ m^*$. Since $i$ does not know the link states of round $m-1$ in the sending phase of round $m$, by definition \ref{definition 2}, if $p$ is a nonfaulty agent in round $m$, then $P(i \in F^{\Delta m-1} \ | \ i \ pretends \ link_{ip} \ is \ faulty) = P(p \in F^{\Delta m-1} \ | \ i \ pretends \ link_{ip} \ is \ faulty) \leq \alpha$. Suppose that the $decision \ set$ in round $r$ is $D$ when following the protocol. We can see that $|D|\geq 3$. (\romannumeral1) If all agents become faulty due to the deviation of $i$, then there is no consensus in the system and $i$ does not gain. (\romannumeral2) If the deviation does not cause no solution and $p$ is a nonfaulty agent in round $m-1$, then we have that
    $u_i(\sigma_i, \sigma_{-i}^{\Call{Consensus}{}}) \leq \alpha (1-\alpha)\frac{1}{|D|-1}\beta_0 + \alpha (1-\alpha)\frac{|D|-2}{|D|-1}\beta_1 + (1-\alpha)^2\frac{1}{|D|}\beta_0 + (1-\alpha)^2\frac{|D|-1}{|D|}\beta_1 + \alpha\beta_1$. That is
    \begin{equation}\begin{aligned}\label{equ 2}
    &u_i(\sigma_i, \sigma_{-i}^{\Call{Consensus}{}}) \leq \frac{(1-\alpha)(|D|-1+\alpha)}{(|D|-1)|D|}\beta_0 + \\ &\left[ \alpha(1-\alpha)\frac{|D|-2}{|D|-1} + (1-\alpha)^2\frac{|D|-1}{|D|} + \alpha \right] \beta_1.
    \end{aligned}
    \end{equation}
    It is easy to get that
    \begin{equation}\label{equ 3}
    u_i(\sigma^{\Call{Consensus}{}})= \frac{1}{|D|}\beta_0+\frac{|D|-1}{|D|}\beta_1.
    \end{equation}
    If $\alpha=0$, then $(1-\alpha)(|D|-1+\alpha)$ in (\ref{equ 2}) takes its supremum $|D|-1$. Hence, there must be (\ref{equ 1}). (\romannumeral3) If $p$ is a faulty agent in round $m-1$, then the deviation does not affect the outcome. Thus the case (\romannumeral1),(\romannumeral2) and (\romannumeral3) hold (\ref{equ 1}).
    \item \textit{Case 1.3.} $r=m-1$ and $r=m^*+1$. Since the link states of next $reliable \ round$ are unknown to $i$, either there is no solution since all agents become faulty; the deviation of $i$ does not affect the final outcome; or by definition \ref{definition 2}, $i$ does at least as well by using $\sigma_i^{\Call{Consensus}{}}$ as it deviates from the protocol because the probability that $v_i$ becomes consensus does not increase. Thus (\ref{equ 1}) holds.
    \item \textit{Case 1.4.} $r=m-1$ and $r\neq m^*$ and $r \neq m^*+1$. Either there is no solution since all agents become faulty or there is no benefit because it does not affect the $decision \ round$. 
    \end{itemize}
    In summary, all cases in case 1 cannot make $i$ gain.
    
    \item \textit{Case 2.} $k$ or $p=i$, such as $k=i$, and $Type(NS_i^{m-1}[l_{kp}^r])=Msg\mbox{-}X$ where $r\leq m-1$, and $i$ pretends $Type(State_i[l_{kp}^r])=Msg\mbox{-}R$ in round $m$. If $r < m-1$, $i$ does not gain, which is the same as that in case 1.1. If $r=m-1$, since $i$ is nonfaulty in round $m-1$ and $i$ does not know the link states of round $m-1$, then by definition \ref{definition 2}, there is no benefit in guessing the message random number with the probability $\frac{1}{n}$. So (\ref{equ 1}) holds.
    
    \item \textit{Case 3.} $k$ and $p \neq i$, and $Type(NS_i^{m-1}[l_{kp}^r])=Msg\mbox{-}R$ or $Msg\mbox{-}O$, and $i$ pretends $Type(State[l_{kp}^r])=Msg\mbox{-}X$ in round $m$. Suppose $i$ lies about the detection result $State_k[l_{kp}^r]$ of $k$. (\romannumeral1) If $k$ is faulty in round $r$, it does not affect the final outcome even if no inconsistency is detected. (\romannumeral2) If $k$ is nonfaulty in round $r$, then $k$ must send the faulty random numbers $X\mbox{-}random_k^r$ to at least two good agents (lemma \ref{lemma 3}). And $i$ guesses a faulty random number with the probability $\frac{1}{2}$. If it does not affect the states of $k$ and $p$, then $i$ does not gain even if guessing the random correctly. Otherwise, if the $decision \ round$ is changed, then $u_i(\sigma_i, \sigma_{-i}^{\Call{Consensus}{}})$ must be less than $\frac{1}{2}\times\frac{1}{n-t}\beta_0+\frac{1}{2}\times\frac{n-t-1}{n-t}\beta_1+\frac{1}{2}\beta_2$. And since $i$ is nonfauty in round $m-1$, then $u_i(\sigma^{\Call{Consensus}{}})$ must be greater than $\frac{1}{n}\beta_0+\frac{n-1}{n}\beta_1$. By the definition of $n>2t+1$, (\ref{equ 1}) holds. So (\ref{equ 1}) holds both in (\romannumeral1) and (\romannumeral2).
    
    \item \textit{Case 4.} $k$ and $p \neq i$, and $Type(NS_i^{m-1}[l_{kp}^r])=Msg\mbox{-}X$ or $Msg\mbox{-}O$, and $i$ pretends $Type(State[l_{kp}^r])=Msg\mbox{-}R$ in round $m$. We also suppose that $i$ lies about the detection result $State_k[l_{kp}^r]$ of $k$. (\romannumeral1) If $Type(NS_i^{m-1}[l_{kp}^r])=Msg\mbox{-}O$, then it does not affect the final outcome even if $i$ guesses the random correctly, since $Msg\mbox{-}R$ and $Msg\mbox{-}O$ have the same meaning when computing the state of a agent. (\romannumeral2) If $Type(NS_i^{m-1}[l_{kp}^r])=Msg\mbox{-}X$, then either an inconsistency is detected by message random number verification or link state conflict; it does not affect the outcome if the states of $k$ and $p$ are unchanged after deviating; or it makes round $r$ a $clean \ round$. Then if there is already a $decision \ round \ r^*$ and $r^* \leq r-1$, it does not affect the outcome because we need the first reliable round finally. And if $r^* > r-1$, then the utility of $i$ decreases because $decision \ round$ is advanced compared to following the protocol. If there is no $decision \ round$, by definition \ref{definition 2}, $i$ does at least as well by using $\sigma_i^{\Call{Consensus}{}}$ as it deviates from the protocol. Hence, (\ref{equ 1}) holds again.
    
    \item \textit{Case 5.} $k$ or $p=i$, and $Type(NS_i^{m-1}[l_{kp}^r])=Msg\mbox{-}R$ or $Msg\mbox{-}X$ where $r\leq m-1$, and $i$ pretends $Type(State_i[l_{kp}^r])=Msg\mbox{-}O$ in round $m$. By claim \ref{claim 1}, an inconsistency is detected. Thus (\ref{equ 1}) holds.
    
    \item \textit{Case 6.} $k$ and $p \neq i$, and $Type(NS_i^{m-1}[l_{kp}^r])=Msg\mbox{-}X$ or $Msg\mbox{-}R$, and $i$ pretends $Type(State[l_{kp}^r])=Msg\mbox{-}O$ in round $m$. Since $Msg\mbox{-}R$ and $Msg\mbox{-}O$ have same meaning when computing the state of a agent, there is no benefit, which is the same as that in case 4.
    
    \item \textit{Case 7.} $k$ and $p \neq i$, and $Type(NS_i^{m-1}[l_{kp}^r])=X(r)$, and $i$ pretends $Type(State[l_{kp}^r])=X(r-1)$ in round $m$. We can turn this case into case 3, because the type of $NS_i^{m-1}[l_{kp}^{r-1}]$ must be $Msg\mbox{-}R$. So it has the same result as that in case 3. Thus, yet again, (\ref{equ 1}) holds.
    
    \item \textit{Case 8.} $k$ or $p=i$, such as $k=i$, and $Type(NS_i^{m-1}[l_{kp}^r])=Msg\mbox{-}R$ where $r\leq m-3$, and $i$ receives $Type(State_p[l_{kp}^r])=Msg\mbox{-}X$, and $i$ pretends $Type(State_p[l_{kp}^r])=Msg\mbox{-}O$ or $Msg\mbox{-}R$ in round $m$. (\romannumeral1) If $p$ is a nonfaulty agent in round $r+1$, then it must send $State_p[l_{kp}^r]$ to at least two good agents except $i$ in round $r+1$. Thus all nonfaulty agents must know $State_p[l_{kp}^r]$ in round $m$, so that $i$ does not gain. (\romannumeral2) If $p$ is faulty in round $r+1$, then since $i$ is nonfaulty in round $m-1$, $i$ is also a nonfaulty agent in round $r$ even if the link between $i$ and $p$ is faulty. Thus the results are the same as those in case 4 and case 6. Therefore (\ref{equ 1}) holds both (\romannumeral1) and (\romannumeral2).
    
    \end{itemize} 
    In summary, in either case, $i$'s utility is at least as high with $u_i(\sigma_i^{\Call{Consensus}{}}, \sigma_{-i}^{\Call{Consensus}{}})$ as with $u_i(\sigma_i, \sigma_{-i}^{\Call{Consensus}{}})$.
    
    \item \textit{Type 7.} Since the random numbers are only used in inconsistency detection, thus either $j$ detects an inconsistency and decides $\perp$ or it does not affect the final outcome if no inconsistency is detected. Thus (\ref{equ 1}) holds.
    
    \item \textit{Type 8.} It is easy to see that either an inconsistency is detected due to consensus difference or restoring secret faulty; it does not affect the outcome if $l\notin D$ or $l$ is not the agent whose preference is chosen; or $i$ does not gain due to the blind initial preferences by definition $\ref{definition 1}$ and the random agents' proposals. 
    
    \item \textit{Type 9.} Clearly it does not affect the outcome if $j$ decides $\parallel$ in the receiving phase of round $t+4$ or $j$ does not receive the messages from $i$. Otherwise, $j$ must receive the messages from at least two good agents in last round. Then $j$ detects an inconsistency and decides $\perp$. So (\ref{equ 1}) holds.
    
    \item \textit{Type 10.} We divide this type into two cases to prove.
    \begin{itemize}
    \item \textit{Case 1.} There is no consensus in the system. This case happens when either all agents become faulty due to the deviation or restoring secret faulty in round $t+3$ for all good agents because of the missing pieces of $i$.
    Thus $i$'s utility with pretending to crash is lower than with following the protocol in case 1.
    \item \textit{Case 2.} There is a consensus finally. (\romannumeral1) $m\leq m^*$. Since $i$ does not send messages to any agents before $decision \ round$, $i$ cannot exist in $decision \ set \ D$. Thus $i$'s utility also decreases when pretending the protocol. (\romannumeral2) $m>m^*$. It does not affect the outcome.
    \end{itemize}
    Therefore, (\ref{equ 1}) holds both in case 1 and 2.
    
    \end{itemize}
    Finally, concluding the proof.
\end{proof}

\section{Conclusion}\label{sec:Conclusion}
In this paper, we provide an algorithm for uniform consensus that is resilient to both omission failures and strategic manipulations. We prove our uniform consensus is a Nash equilibrium as long as $n>2t+1$, and failure patterns and initial preferences are $blind$. Additionally, we present the theory of message passing in presence of process omission failures. We argue that our research enriches the theory of fault-tolerant distributed computing, and strengthens the reliability of consensus with omission failures from the perspective of game theory. And our contribution provides a theoretical basis for the combination of distributed computing and strategic manipulations in omission failure environments, which we think is an interesting research area.

In our opinion, there are many interesting open problems and research directions. We list a few here: (a) whether an algorithm for rational uniform consensus exists if coalitions are allowed. (b) The study of rational consensus with more general types of failures, such as Byzantine failures is important. (c) With the problem setting of this paper, whether the rational consensus exists if we relax the constraint $n>2t+1$. (d) Studying the rational consensus in asynchronous system, which seems significantly more complicated. (e) Introducing the assumption of agent bounded rationality may be useful in practical scenarios.

\section*{Acknowledgment}
This work is supported by National Key R$\&$D Program of China (2018YFC0832300;2018YFC0832303)




{\small
\bibliographystyle{elsarticle-num}
\bibliography{main}
}

\end{document}